\newcommand{\vsthm}{\vspace{-0.3em}}
\DeclareMathOperator*{\E}{\mathbb{E}}
\DeclareMathOperator*{\Var}{Var}
\DeclareMathOperator{\given}{\:\vert\:}
\DeclareMathOperator{\mutual}{\,;\,}
\DeclareMathOperator{\OPT}{OPT}
\DeclareMathOperator{\ALG}{ALG}
\DeclareMathOperator{\MIR}{MIR}
\DeclareMathOperator{\MECH}{MECH}
\DeclareMathOperator{\poly}{poly}
\DeclareMathOperator{\Binom}{Binom}
\newcommand{\MIRsubmod}{\ensuremath{\mathsf{MIR_{SubMod}}}}
\newcommand{\MIRgen}{\ensuremath{\mathsf{MIR_{Gen}}}}
\newcommand{\calA}{\mathcal{A}}
\newcommand{\calB}{\mathcal{B}}
\newcommand{\calF}{\mathcal{F}}
\newcommand{\calM}{\mathcal{M}}
\newcommand{\calN}{\mathcal{N}}
\newcommand{\calP}{\mathcal{P}}
\newcommand{\calS}{\mathcal{S}}
\newcommand{\calX}{\mathcal{X}}
\let\abs\relax
\DeclarePairedDelimiter{\abs}{\lvert}{\rvert}
\newcommand{\IGNORE}[1]{}
\newcounter{note}[section]
\newcommand{\PreserveBackslash}[1]{\let\temp=\\#1\let\\=\temp}
\newcolumntype{C}[1]{>{\PreserveBackslash\centering}p{#1}}
\newcolumntype{R}[1]{>{\PreserveBackslash\raggedleft}p{#1}}
\newcolumntype{L}[1]{>{\PreserveBackslash\raggedright}p{#1}}
\newtheorem{theorem}{Theorem}[section]
\newtheorem{claim}[theorem]{Claim}
\Crefname{claim}{Claim}{Claims}
\newtheorem{proposition}[theorem]{Proposition}
\newtheorem{lemma}[theorem]{Lemma}
\theoremstyle{definition}
\newtheorem{example}[theorem]{Example}
\newtheorem{definition}[theorem]{Definition}
\newtheorem{remark}[theorem]{Remark}
\title{Settling the Communication Complexity of VCG-based Mechanisms for all Approximation Guarantees}
\author{Frederick V. Qiu \and S.~Matthew Weinberg}
\date{\today}
\begin{document}

\maketitle

\begin{abstract}

We consider truthful combinatorial auctions with items $M \coloneqq [m]$ for sale to $n$ bidders, where each bidder $i$ has a private monotone valuation function $v_i: 2^M \to \mathbb{R}_+$. Among truthful mechanisms, \emph{maximal-in-range} (MIR) mechanisms (sometimes called \emph{VCG-based}) achieve the best-known approximation guarantees among all poly-communication deterministic truthful mechanisms in all previously-studied settings. Our work settles the communication complexity necessary to achieve any approximation guarantee via an MIR mechanism. Specifically:

\medskip
\noindent Let $\MIRsubmod(m, k)$ denote the best approximation guarantee achievable by an MIR mechanism using $2^k$ communication between bidders with submodular valuations over $m$ items. Then
\begin{itemize}
    \item For all $k = \Omega(\log(m))$, $\MIRsubmod(m,k) = \Omega(\sqrt{m/(k\log(m/k))})$. When $k = \Theta(\log(m))$, this improves the previous best lower bound for polynomial communication maximal-in-range mechanisms from $\Omega(m^{1/3}/\log^{2/3}(m))$~\cite{DanielySS15} to $\Omega(\sqrt{m}/\log(m))$. 

    \item For all $k = \Omega(\log(m))$, $\MIRsubmod(m, k) = O(\sqrt{m/k})$. Moreover, our mechanism can be implemented with $2^k$ simultaneous value queries and computation, and is optimal with respect to the value query and computational/succinct representation models. The mechanism also works for bidders with subadditive valuations. When $k = \Theta(\log(m))$, this improves the previous best approximation guarantee for polynomial communication maximal-in-range mechanisms from $O(\sqrt{m})$~\cite{DobzinskiNS10} to $O(\sqrt{m/\log(m)})$. 
\end{itemize}

\noindent Let also $\MIRgen(m,k)$ denote the best approximation guarantee achievable by an MIR mechanism using $2^k$ communication between bidders with general valuations over $m$ items. Then
\begin{itemize}
    \item For all $k = \Omega(\log(m))$, $\MIRgen(m, k) = \Omega(m/k)$. When $k = \Theta(\log(m))$, this improves the previous best lower bound for polynomial communication maximal-in-range mechanisms from $\Omega(m/\log^2(m))$~\cite{DanielySS15} to $\Omega(m/\log(m))$. 

    \item For all $k = \Omega(\log(m))$, $\MIRgen(m, k) = O(m/k)$. Moreover, our mechanism can be implemented with $2^k$ simultaneous value queries and computation, and is optimal with respect to the value query and computational/succinct representation models. When $k = \Theta(\log(m))$, this improves the previous best approximation guarantee for polynomial communication maximal-in-range mechanisms from $O(m/\sqrt{\log(m)})$~\cite{HolzmanKMT04} to $O(m/\log(m))$. 
\end{itemize}

\end{abstract}

\newpage
\setcounter{tocdepth}{1}
\tableofcontents

\newpage

\section{Introduction} \label{section:Introduction}

In a \emph{combinatorial auction}, a central designer has a set of items $M \coloneqq [m]$, and bidders $N \coloneqq [n]$. Each bidder $i$ has a monotone valuation $v_i: 2^M \to \mathbb{R}_+$, unknown to the designer. The designer interacts with the bidders to produce an allocation $A = (A_1,\ldots, A_n)$ of the items (where $A_i \cap A_j = \emptyset$ for all $i \neq j$), and their goal is to select one maximizing the welfare, defined $\sum_{i \in N} v_i(A_i)$.

As an algorithmic resource allocation problem, combinatorial auctions are extremely well-studied -- see further discussion in \Cref{sec:related}. Combinatorial auctions are similarly well-studied in economic settings, where the bidders' incentives are now relevant. That is, while an efficient communication protocol suffices in a purely algorithmic setting, that protocol must also be incentive compatible and incentivize all bidders to follow it. Here, the designer may also charge each bidder $i$ a price $p_i$, and the bidder aims to optimize their utility: $v_i(A_i) - p_i$. 

When considering either desiderata separately, the state-of-affairs is well-understood. For example, with polynomial communication, a tight $\Theta(\sqrt{m})$-approximation for monotone valuations~\cite{LehmannOS02, NisanS06, BlumrosenN05a}, a tight $2$-approximation for subadditive valuations~\cite{Feige09, EzraFNTW19}, and a tight $e/(e-1)$-approximation for XOS valuations~\cite{Feige09,DobzinskiNS10} are known. Additionally, the optimal achievable guarantee for submodular valuations is known to lie in $[2e/(2e-1),e/(e-1)-10^{-5}]$~\cite{FeigeV10, DobzinskiV13}. However, these protocols are not incentive compatible. Similarly, the classical \emph{Vickrey-Clark-Groves} (VCG) mechanism is incentive compatible and finds the welfare-maximizing allocation~\cite{Vickrey61,Clarke71,Groves73}, but requires exponential communication for any of the above-referenced valuation classes.

As such, a central open problem within Economics and Computation is understanding the extent to which communication-efficient truthful mechanisms can match the approximation guarantees of communication-efficient (not necessarily incentive compatible) protocols. A key framework to tackle this agenda are \emph{maximal-in-range} (MIR) mechanisms. For example, state-of-the-art deterministic truthful mechanisms for monotone, subadditive, XOS, and submodular valuations are maximal-in-range~\cite{HolzmanKMT04, DobzinskiNS10}. Our main results settle the approximation guarantees achievable by MIR mechanisms. We overview this agenda and our results below.


\subsection{Maximal-in-Range Mechanisms}

MIR mechanisms leverage the VCG mechanism to trade off approximation guarantees for efficiency. In particular, the VCG mechanism is quite general, and implies a truthful mechanism that maximizes welfare over any set of (possibly unstructured) outcomes. Specifically, one can define any \emph{allocation bank} $\calA$ of allocations, and the VCG mechanism will truthfully optimize welfare \emph{over all allocations in $\calA$}. Formally:

\begin{theorem}[VCG Mechanism~\cite{Vickrey61,Clarke71,Groves73}]
    Let $\calA$ be a collection of allocations, and let $\calP$ be any communication protocol among the $n$ bidders to find the welfare-maximizing allocation in $\calA$. Then there is a deterministic truthful mechanism that selects a welfare-maximizing allocation in $\calA$ using $n+1$ black-box calls to $\calP$.

    The resulting mechanism is termed a \emph{Maximal-in-Range Mechanism} for $\calA$~\cite{NisanR07}.
\end{theorem}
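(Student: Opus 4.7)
The plan is to implement the desired mechanism as a black-box wrapper around $n+1$ executions of $\calP$, following the classical Clarke-pivot VCG construction. First I would execute $\calP$ with the $n$ bidders' true valuations to obtain an allocation $A^* \in \argmax_{A \in \calA} \sum_i v_i(A_i)$; this is the allocation the mechanism commits to. Then, for each bidder $i \in [n]$, I would execute $\calP$ a second time, replacing bidder $i$'s reported valuation with the identically-zero valuation, to obtain $A^{(i)} \in \argmax_{A \in \calA} \sum_{j \ne i} v_j(A_j)$. This uses exactly $n+1$ black-box invocations of $\calP$, as required by the statement, and yields the allocation $A^*$ together with the auxiliary allocations $A^{(1)},\ldots,A^{(n)}$ needed to price the outcome.

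Given these $n+1$ outputs, I would define bidder $i$'s payment to be the standard Clarke-pivot term
\[
p_i \;=\; \sum_{j \ne i} v_j(A^{(i)}_j) \;-\; \sum_{j \ne i} v_j(A^*_j).
\]
To prove truthfulness, I would write bidder $i$'s utility (under truthful reporting) as
\[
v_i(A^*_i) - p_i \;=\; \Bigl(v_i(A^*_i) + \sum_{j \ne i} v_j(A^*_j)\Bigr) - \sum_{j \ne i} v_j(A^{(i)}_j),
\]
observe that the subtracted term depends only on the reports of the other bidders (and on $\calA$), and then note that the parenthesized sum is exactly the objective that $\calP$ maximizes over $\calA$ when bidder $i$ reports $v_i$ truthfully. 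Any misreport by $i$ would induce $\calP$ to select some other allocation in $\calA$, which can only weakly decrease this parenthesized sum; hence truthful reporting is a dominant strategy. Individual rationality follows from monotonicity of $v_i$ together with welfare-optimality of $A^*$, which give $v_i(A^*_i) + \sum_{j \ne i} v_j(A^*_j) \ge v_i(A^{(i)}_i) + \sum_{j \ne i} v_j(A^{(i)}_j) \ge \sum_{j \ne i} v_j(A^{(i)}_j)$, so $p_i \le v_i(A^*_i)$.

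There is essentially no obstacle here beyond bookkeeping: the statement is just the classical VCG theorem instantiated on an arbitrary allocation bank $\calA$. The only point worth explicitly checking is that the $n$ auxiliary calls to $\calP$ remain valid, i.e., that the zero valuation is an admissible input to $\calP$; this is immediate since the zero valuation is monotone, and so lies in every valuation class considered in the paper. I would include a one-line remark to confirm that the protocol's correctness hypothesis is preserved under this substitution, and would note that the $n+1$ calls to $\calP$ are the only bidder-facing communication the mechanism performs, so the resulting communication complexity is $n+1$ times that of $\calP$.
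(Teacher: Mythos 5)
Your proposal is correct and is exactly the classical Clarke-pivot construction that the paper cites for this theorem (which it states without proof as a known result): one call of $\calP$ on the true reports to select $A^*$, plus $n$ calls with each bidder's valuation zeroed out to compute the pivot payments, with truthfulness following because each bidder's utility equals the true welfare of the selected allocation minus a report-independent term. No gaps; the side remarks on tie-breaking-irrelevance, admissibility of the zero valuation, and individual rationality are all sound.
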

\vsthm

MIR mechanisms therefore provide a structured algorithmic framework to design deterministic truthful mechanisms: one selects an allocation bank $\calA$ and designs a protocol $\calP$ to optimize over it. This framework induces a tradeoff between efficiency and optimality: richer allocation banks ensure that the welfare-maximizing allocation in $\mathcal{A}$ is a good approximation, but may require significant communication to optimize over. Smaller allocation banks are easier to optimize over, but may not contain allocations with good welfare.\footnote{Note that it is also possible for one allocation bank to strictly contain another and also be easier to optimize over, so this tradeoff is conceptual rather than literal.}

For the following reasons, there is significant interest in understanding the approximation guarantees achievable by MIR Mechanisms:
\begin{itemize}
    \item In all settings, the best-known approximation guarantees achieved by poly-communication deterministic truthful mechanisms are achieved by MIR mechanisms~\cite{HolzmanKMT04, DobzinskiNS10}. Moreover, this claim has held for the entire duration of the study of combinatorial auctions (that is, no poly-communication deterministic truthful mechanisms that outperform the best-known MIR mechanism at the time have ever been discovered).\footnote{Note that~\cite{DobzinskiN11} discover \emph{exponential-communication} non-MIR deterministic truthful mechanisms that outperform the best \emph{poly-communication} MIR mechanisms in multi-unit domains~\cite{DobzinskiN11}, but \emph{poly-communication} deterministic truthful mechanisms have never outperformed MIR mechanisms.} 
    
    \item All deterministic truthful mechanisms satisfying four natural properties are affine maximizers, a generalization of MIR mechanisms~\cite{LaviMN03}.\footnote{An affine maximizer also is defined by an allocation bank $\mathcal{A}$, scalars $\vec{c} \in \mathbb{R}_{\geq 0}^n$, and an adjustment $v_0:\mathcal{A} \rightarrow \mathcal{A}$. The affine maximizer selects an allocation $(A_1,\ldots, A_n)$ optimizing $v_0(A_1,\ldots, A_n) + \sum_{i \in N} c_i \cdot v_i(A_i)$. Affine maximizers are also truthful using the VCG payment scheme.} Moreover, if an affine maximizer guarantees an $\alpha$-approximation using $\beta$ communication on all submodular/XOS/subadditive/monotone valuations, then the MIR mechanism with the same allocation bank does so for the same valuation class as well (folkore -- see \Cref{sec:Folklore} for a proof).
    
    \item Some conjecture that indeed MIR mechanisms are the optimal deterministic mechanisms (and therefore settling the approximation guarantees of poly-communication MIR mechanisms will eventually settle the approximation guarantees of all poly-communication deterministic truthful mechanisms), although this conjecture remains far from settled. However, we show that MIR mechanisms indeed achieve optimal approximation guarantees for \emph{all} deterministic truthful mechanisms in the value query model and the computational/succinct representation model (see \Cref{table:Results}).
\end{itemize}

Indeed, MIR mechanisms have been studied since the start of Economics and Computation as a field~\cite{NisanR07}, and several works over the past two decades make significant progress understanding their strengths and limitations. Our main results close these gaps. We now state our main results, and afterwards discuss their context. In the following theorem statements (and the rest of this paper), define $\MIRsubmod(m, k)$ (and $\MIRgen(m,k)$, respectively) to be the optimal approximation guarantee that can be achieved by an MIR mechanism using at most $2^k$ communication between bidders with submodular (and general, respectively) valuations over $m$ items.

\begin{theorem}[restate=LowerBound,name=] \label{thm:LowerBound}
    For all $k = \Omega(\log(m))$, $\MIRsubmod(m, k) = \Omega(\sqrt{m/(k\log(m/k))})$ when $n = \Omega(\sqrt{m/k})$. In particular, the best possible approximation guarantee for submodular valuations by MIR mechanisms with $\poly(m)$ communication is $\Omega(\sqrt{m}/\log (m))$.
\end{theorem}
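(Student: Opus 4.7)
The plan is to apply Yao's principle. First, observe that any MIR mechanism using at most $2^k$ bits of communication has allocation bank $\calA$ with $|\calA| \leq 2^{2^k}$, since each of the $2^{2^k}$ possible transcripts of the welfare-maximizing protocol determines a single allocation in $\calA$. It therefore suffices to exhibit a distribution $\calD$ over submodular valuation profiles on $n = \Theta(\sqrt{m/k})$ bidders such that (i) $\text{OPT}(\vec v) \geq W$ almost surely for some value $W$, and (ii) for every fixed allocation $A$, $\Pr_{\vec v \sim \calD}[\text{welfare}(A, \vec v) \geq W/\alpha] \leq 2^{-2^k - 1}$ with $\alpha = \Omega(\sqrt{m/(k\log(m/k))})$. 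A union bound over $A \in \calA$ then produces a profile in the support of $\calD$ where every allocation in $\calA$ achieves welfare strictly less than $W/\alpha$, while $\text{OPT} \geq W$, establishing the theorem.

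For the hard distribution, I would use a ``random matroid rank'' family: independently for each bidder $i$, sample a uniformly random subset $S_i \subseteq M$ of size $s$ and set $v_i(T) = \min(|T \cap S_i|, c)$, which is a truncated uniform matroid rank and hence submodular. Parameters $s$ and $c$ would be tuned so that (a) $n$ pairwise-disjoint representatives of size $c$ exist with high probability, guaranteeing $\text{OPT} \geq nc$, and (b) for any fixed $A$, each bidder's contribution is bounded by $c$ and $\E[v_i(A_i)]$ is small relative to $c$ (because $|A_i \cap S_i|$ concentrates around $|A_i| s / m$). A Bernstein-type bound applied to the independent sum $\sum_i v_i(A_i)$ then yields a tail of the form $\exp(-\Omega(2^k))$ once $\alpha$ is taken large enough.

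The main obstacle is calibrating parameters so that the tail in (ii) is strong enough to beat the union bound over $2^{2^k}$ allocations while simultaneously keeping $\alpha$ as large as $\sqrt{m/(k\log(m/k))}$. This is precisely where the $\log(m/k)$ factor enters: to drive a Bernstein/Chernoff tail probability down to $2^{-2^k}$, the deviation must exceed the mean by an $\Omega(\sqrt{k\log(m/k)})$ factor, which translates into a corresponding loss of $\sqrt{\log(m/k)}$ in the approximation ratio. A secondary difficulty is that the raw matroid-rank construction may not give sharp enough concentration without inflating the expected welfare of every fixed $A$; a refinement using a more structured random submodular family (for example, coverage functions over a random bipartite graph with bounded left-degree, or an overlay of multiple random truncated matroids to control per-bidder variance) may be needed to obtain the tight $\sqrt{\log(m/k)}$ dependence while preserving submodularity.
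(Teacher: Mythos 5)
Your opening reduction is where the argument breaks. The bound $\abs{\calA} \leq 2^{2^k}$ is true (one may replace $\calA$ by the set of allocations the protocol actually outputs), but it is far too weak to support a union-bound argument, because the \emph{entire} allocation bank $N^M$ of all $(n+1)^m$ allocations already satisfies it throughout most of the claimed range of $k$. Concretely, with $n = \Theta(\sqrt{m/k})$ we have $(n+1)^m \leq 2^{m\log_2(m+1)}$, so whenever $2^k \geq m\log_2(m+1)$ --- e.g.\ for every $k \geq 2\log_2 m$, hence for essentially all $k = \Omega(\log m)$ --- your conditions (i) and (ii) are jointly unsatisfiable: union-bounding (ii) over \emph{all} $(n+1)^m \leq 2^{2^k-1}$ allocations would give a valuation profile in the support of $\calD$ for which every allocation, including the welfare-optimal one, has welfare below $W/\alpha < W \leq \OPT(\vec v)$, a contradiction with (i). No choice of truncated matroid ranks, coverage functions, or concentration inequality can repair this; the obstruction is that the statement you are trying to prove in step (ii) is false for the optimal allocation viewed as a member of a bank of admissible size. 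More fundamentally, the hardness of MIR mechanisms is not that a $2^k$-communication protocol has a small range --- the full bank has small cardinality in this sense and is exactly optimal --- but that \emph{optimizing welfare over a rich bank requires communication}. A cardinality-only argument cannot see this distinction.

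The paper's proof is structured around exactly this point, in the reverse direction from yours. First (Part 1), it shows that a good approximation ratio forces $\calA$ to be \emph{large} ($\abs{\calA} \geq 2^{\sqrt{mk}/3}$ via Proposition~\ref{prop:SubadditiveManyAllocations}) and hence, by the multiparty VC-dimension bound of \cite{DanielySS15} (Proposition~\ref{prop:MultiVCDimension}), to $2$-shatter a set of $\Omega(\sqrt{mk}/\log(m/k))$ items. Second (Part 2), it extracts from the shattered structure $z = 2^{\Theta(s/t)}$ allocations with the ``non-mixability'' property of Claim~\ref{claim:Structure} (via the cut/subgraph analysis of Lemmas~\ref{lemma:NiceGraph}--\ref{lemma:ProbabilisticStructure}), and uses mild-desires (submodular) bidders to embed \textsc{SetDisjointness} over a universe of size $z$, yielding a genuine communication lower bound of $2^{\Omega(s/t)}$ for optimizing over $\calA$. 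If you want to pursue a lower bound here, you must at some point reduce from a hard communication problem rather than count the range.
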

\vsthm

This improves prior work, beginning with the impossibility of $m^{1/6}$ with polynomial communication for MIR mechanisms for submodular valuations by~\cite{DobzinskiN11}, which was later improved to $m^{1/3}/\log^{2/3}(m)$ by~\cite{DanielySS15}. 

The particular constant of $1/2$ in the exponent is significant because we now know that the MIR mechanism of~\cite{DobzinskiNS10} achieving an $O(\sqrt{m})$-approximation using polynomial communication is essentially tight. Still, our second main result improves their guarantee slightly.

\begin{theorem}[restate=Subadditive,name=] \label{thm:Subadditive}
    For all $k = \Omega(\log(m))$, $\MIRsubmod(m, k) = O(\sqrt{m/k})$. In particular, our mechanism guarantees a $O(\sqrt{m/\log m})$ approximation in $\poly(m)$ communication.
    
    Moreover, the mechanism we construct can be implemented using $2^{O(k)}$ simultaneous value queries, or in time $2^{O(k)}$ in the succinct representation model. Our mechanism guarantees an $O(\sqrt{m/k})$-approximation for subadditive valuations as well. 
\end{theorem}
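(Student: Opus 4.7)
The plan is to construct an MIR mechanism around a fixed block partition. Specifically, I fix a balanced partition $\calB = \{B_1, \dots, B_B\}$ of the $m$ items into $B = \Theta(\sqrt{mk})$ blocks of size $s = \Theta(\sqrt{m/k})$, and define the allocation bank $\calA$ as the union of the $n$ ``grand-bundle'' allocations (one per bidder, giving all of $M$ to that bidder) and all bipartite matchings between bidders and blocks (where each matched bidder gets the corresponding block). In the simultaneous value-query model, the mechanism queries $v_i(M)$ and $v_i(B_j)$ for each bidder $i$ and block $j$, using $n(B{+}1) = O(n\sqrt{mk})$ queries total; this fits a $2^{O(k)}$ budget in the regime $n = 2^{O(k)}/\sqrt{mk}$, and larger $n$ can be handled by a preliminary truncation to the top $O(2^{k}/\sqrt{mk})$ bidders by $v_i(M)$. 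The max-welfare allocation in $\calA$ is then computed in polynomial time via $n$-way comparison for the grand-bundle case and the Hungarian algorithm for the matching case, and the VCG payment scheme makes the resulting MIR mechanism truthful.

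The heart of the proof is the approximation analysis. I would case-split on the optimal allocation $(O_1, \dots, O_n)$, defining bidder $i$ to be \emph{large} if $|O_i| \geq \tau := \Theta(\sqrt{mk})$ and \emph{small} otherwise. Because $\sum_i |O_i| \leq m$, there are at most $m/\tau = \Theta(\sqrt{m/k})$ large bidders. If large bidders contribute at least $\mathrm{OPT}/2$, pigeonhole gives some $i^*$ with $v_{i^*}(O_{i^*}) \geq \mathrm{OPT}/\Theta(\sqrt{m/k})$, so the grand-bundle allocation to $i^*$ achieves welfare at least $v_{i^*}(M) \geq v_{i^*}(O_{i^*})$ by monotonicity, meeting the target ratio. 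If small bidders contribute at least $\mathrm{OPT}/2$, combining subadditivity ($v_i(O_i) \leq \sum_j v_i(O_i \cap B_j)$) with the pigeonhole bound ``at most $s$ bidders have $O_i \cap B_j \neq \emptyset$ in OPT'' yields $\sum_j \max_i v_i(B_j) \geq \mathrm{OPT}_{\text{small}}/s = \Omega(\mathrm{OPT}/\sqrt{m/k})$. The remaining step is to realize this per-block-maximum sum as the value of an integer bipartite matching, which I would do via a fractional matching plus LP integrality argument.

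The main obstacle I expect is precisely this realization step: a naive ``each block to its favorite bidder'' assignment can lose factors of $s$ to collisions when many small bidders compete for the same block. Overcoming this requires constructing a fractional bipartite matching whose value is $\Omega(\mathrm{OPT}/\sqrt{m/k})$ by carefully spreading each small bidder's weight across blocks in proportion to $v_i(O_i \cap B_j)$, verifying the two matching constraints using the per-block bound on intersecting bidders together with subadditivity, and invoking LP integrality of bipartite matching to obtain an integral matching of the same value; the calibration of $\tau$, $B$, and $s$ is chosen precisely so that all three scales align. The subadditive upper bound follows from the same argument because the analysis only uses monotonicity and the subadditive inequality, rather than any XOS or submodular structure. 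Finally, the $2^{O(k)}$ simultaneous value queries and $2^{O(k)}$ succinct-representation time bounds follow directly from the fixed query set and polynomial-time matching optimization.
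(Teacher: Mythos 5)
Your allocation bank is too coarse to achieve the claimed $O(\sqrt{m/k})$ ratio, and the ``main obstacle'' you flag is not surmountable by a fractional-matching/LP-integrality argument, because the target value is simply not attainable by \emph{any} matching. Concretely, take $n=\sqrt{m}$ additive bidders with disjoint supports $O_1,\dots,O_n$, each of size $\sqrt{m}$ with unit item values, arranged so that each $O_i$ meets $\sqrt{m}$ distinct blocks in exactly one item (possible since $\sqrt{m}\le B=\sqrt{mk}$). Then $\OPT=m$, every bidder is ``small'' ($|O_i|=\sqrt m<\tau$), the grand bundle yields only $\max_i v_i(M)=\sqrt{m}$, and any matching yields at most $\min(n,B)\cdot\max_{i,j}v_i(B_j)=\sqrt{m}\cdot 1=\sqrt{m}$, since each matched bidder receives one whole block worth exactly one item to them. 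So $\MIR_{\calA}(\vec v)=\sqrt m$ and the ratio is $\sqrt{m}$, losing exactly the $\sqrt{k}$ factor the theorem is about gaining over the prior $O(\sqrt m)$ bound of~\cite{DobzinskiNS10}. Your column-sum bound $\sum_j\max_i v_i(B_j)\ge\OPT_{\mathrm{small}}/s$ does hold here (both sides equal $\sqrt{mk}$), but realizing it requires giving $\sqrt{k}$ blocks to a single bidder, which violates the matching constraint; the LP relaxation of bipartite matching has no feasible point of that value, so integrality of the matching polytope cannot rescue the step.

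The missing idea is that a bidder must be able to receive a \emph{fine-grained} set --- roughly $O(k)$ items chosen nearly arbitrarily from within a block --- rather than one whole fixed block. The paper's mechanism achieves this by taking $2^{O(k)}$ candidate bucketings into $\Theta(\sqrt{m/k})$ buckets (chosen so that some bucketing is ``regular'' for the optimal allocation, meaning each bucket captures only $O(k)$ of the items that any one group of bidders receives in OPT), and then, within each bucket, running the general-valuations chunking mechanism built from $\Theta(k)$-itemizing partitions, which shatters every $O(k)$-item subset of the bucket among the bidders assigned there. Subadditivity plus a shift argument over buckets shows a $1/t$ fraction of $\OPT(\vec v, N\setminus N_0)$ survives restriction to the designated buckets, and the shattering lets that value actually be collected. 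Your single fixed partition with whole-block matchings has neither the multiplicity of partitions nor the within-block shattering, and both are needed.
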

\vsthm

\Cref{thm:Subadditive} is a mild improvement over~\cite{DobzinskiNS10} (it saves a $\sqrt{\log (m)}$ factor). Still, we note that the MIR mechanism of~\cite{DobzinskiNS10} is exceptionally simple, and no better guarantee was previously known. Together, \Cref{thm:LowerBound,thm:Subadditive} nail down the achievable approximation guarantees for submodular valuations by MIR mechanisms with $\poly(m)$ communication up to a factor of $\Theta(\sqrt{\log(m)})$, exponentially improved over the prior gap of $\tilde{\Theta}(m^{1/6})$. Additionally, it is worth noting that in the value query and computational/succinct representation models, the MIR mechanism we construct for subadditive valuations is optimal for \emph{all} deterministic truthful mechanisms; see \cite{DobzinskiV16} and \Cref{sec:Folklore}.

We also consider general valuations. Here, nearly-tight bounds were previously known, and we improve both to be tight up to constant factors.

\begin{theorem}[restate=General,name=] \label{thm:General}
    For all $k = \Omega(\log(m))$, $\MIRgen(m,k) = \Theta(m/k)$. In particular, the best possible approximation guarantee for monotone valuations by MIR mechanisms with $\poly(m)$ communication is $\Theta(m/\log m)$. Moreover, the mechanism we construct can be implemented using $2^{O(k)}$ simultaneous value queries, or in time $2^{O(k)}$ in the succinct representation model.
\end{theorem}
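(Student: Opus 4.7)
The theorem combines matching bounds $\MIRgen(m,k) = O(m/k)$ and $\MIRgen(m,k) = \Omega(m/k)$, which I would establish separately: an explicit mechanism proves the upper bound, and an adversarial construction proves the lower bound.

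For the upper bound, the plan is to design an allocation bank $\calA$ of size $2^{O(k)}$ achieving an $O(m/k)$-approximation on every monotone valuation profile. Since pathological valuations such as $v(S) = \mathbbm{1}[S \supseteq S^*]$ defeat pure bundling mechanisms (an entire bidder's value can vanish if $S^*$ is split across bundles), I would combine two structural components in $\calA$: (i) ``all-to-one'' allocations handing the entire item set $M$ to each individual bidder, which handle the case where a single bidder dominates $\OPT$, and (ii) allocations drawn from a carefully chosen family of covering subsets parametrized by $O(k)$ bits. The approximation analysis then splits into cases: if some bidder $i$ has $v_i(M)$ comparable to $\OPT$, the all-to-one allocations suffice; otherwise $\OPT$ is spread across many bidders and the covering allocations capture an $\Omega(k/m)$ fraction of $\OPT$ via monotonicity. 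For implementability in the value-query and succinct-representation models, since $|\calA| = 2^{O(k)}$, brute-force enumeration over $\calA$ suffices, followed by a single VCG round to extract the welfare-maximizing allocation.

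For the lower bound, I would show that any allocation bank $\calA$ with $|\calA| \leq 2^k$ admits a monotone valuation profile on which the best allocation has welfare at most $O((k/m) \cdot \OPT)$. The plan uses indicator valuations: set $n = m/k$ and $v_i(S) = \mathbbm{1}[S \supseteq S_i]$ where $(S_1, \ldots, S_n)$ is a partition of $[m]$ into sets of size $k$. Then $\OPT = n = m/k$, and for any $A \in \calA$ the welfare equals $|\{i : A_i \supseteq S_i\}|$. Drawing the partition $(S_i)$ at random, the disjointness constraint $\sum_i |A_i| \leq m$ forces the expected welfare of each fixed $A$ to be $O(1)$; a concentration inequality combined with a union bound over $|\calA| \leq 2^k$ then produces a single partition on which every $A \in \calA$ attains welfare $O(1)$, yielding the $\Omega(m/k)$ approximation lower bound.

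The main technical obstacle is the tightness of the lower bound: Markov's inequality alone yields only $\Omega(m/k^2)$, so a Chernoff-type concentration argument is needed to recover the full exponent $k$. The dependencies among the $S_i$'s (they form a partition rather than being i.i.d.) further complicate the analysis, requiring either a decoupling via a sampling reduction or a direct combinatorial encoding showing that no $2^k$-size allocation bank can simultaneously contain supersets of many disjoint $k$-sets. In contrast, once the construction is fixed, the upper bound's analysis is comparatively routine and follows the template of prior MIR constructions~\cite{HolzmanKMT04,DobzinskiNS10} with sharper parameter choices.
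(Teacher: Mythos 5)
Your lower bound proves the wrong statement, and this is the heart of the theorem. The quantity $\MIRgen(m,k)$ constrains the \emph{communication} an MIR mechanism may use to optimize over its allocation bank, not the size of the bank: a mechanism with $2^k$ communication can have $\abs{\calA}$ exponentially larger than $2^k$ (the paper's own $O(m/k)$-approximate mechanism has a per-bidder menu of $2^{O(k)}$ sets but a bank whose total size grows exponentially in $n$; even a crude transcript-counting argument only bounds the reachable range by $2^{2^k}$, not $2^k$). So even if your random-partition plus union-bound argument over $2^k$ allocations goes through (it plausibly does once $k \geq C\log m$), it only rules out MIR mechanisms whose \emph{range} has at most $2^k$ allocations, a much weaker class, and does not give $\MIRgen(m,k) = \Omega(m/k)$. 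The missing ingredient is a genuine communication lower bound: one must show that any bank $\calA$ rich enough to be $m/k$-approximate is necessarily so structured that \emph{exactly} maximizing welfare over $\calA$ costs $2^{\Omega(k)}$ communication. The paper does this with exactly your single-minded valuations $v_i(R)=\mathbbm{1}[R\supseteq B_i]$ (parts of size $k/2$, $n \geq 2m/k$), but uses them differently: a pigeonhole over the $\binom{n}{2}$ pairs of bidders and over the $k$-set $B_i \cup B_j$, followed by Sauer's lemma, extracts a set $S$ of $\Omega(k)$ items that is fully shattered between two fixed bidders; restricted to $S$ and those two bidders, the MIR mechanism exactly maximizes the sum of two arbitrary monotone valuations over $\Omega(k)$ items, which requires $2^{\Omega(k)}$ communication~\cite{NisanS06} (or, as in the paper's general pipeline, embeds \textsc{SetDisjointness}). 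Your proposal contains no reduction of this kind, so the lower bound direction has a fundamental gap rather than a fixable technicality.

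The upper bound sketch is pointed in a workable direction (all-to-one allocations plus a structured family of $2^{O(k)}$ per-bidder sets, then a case split), but the decisive construction is left unspecified, and the analysis as stated would fail on the very valuations you flag: for a bidder with $v_i(S)=\mathbbm{1}[S\supseteq S_i^*]$ and $\abs{S_i^*}\leq 2k$, monotonicity buys nothing for partial coverage, so ``covering allocations capture an $\Omega(k/m)$ fraction of $\OPT$ via monotonicity'' is not justified. The paper's resolution is to take $z = 2^{\Theta(k)}$ partitions of $M$ into $\Theta(k)$ chunks forming a $(4k)$-itemizing family (for every set of at most $4k$ items, some partition places them in distinct chunks, shown to exist by a probabilistic argument), so that the bank shatters every $4k$-set; the case split is then on how many items a bidder receives in $\OPT$, not on whether one bidder's $v_i(M)$ dominates: bidders receiving more than $2k$ items number at most $m/(2k)$ and are handled by the all-to-one allocations, while the remaining bidders are grouped into blocks receiving at most $4k$ items each and are handled exactly (within each block) by shattering, giving the $m/k$ guarantee. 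You would need to supply this (or an equivalent) explicit family and its analysis; once the per-bidder menu is $2^{O(k)}$, the simultaneous value-query and $2^{O(k)}$-time succinct-representation claims follow as you say.
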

\vsthm

When considering $\poly(m)$ communication, this gives logarithmic improvements for the previous best impossibility result of $\Omega(m/\log^2(m))$ by~\cite{DanielySS15}, and the previous best approximation guarantee of $O(m/\sqrt{\log(m)})$ by~\cite{HolzmanKMT04}.\footnote{Although the $\Theta(\sqrt{\log(m)})$ improvement in the upper bound seems small, the mechanism of~\cite{HolzmanKMT04} is actually far from optimal for large $k$; for example, if we allow $2^{O(m)}$ communication for a sufficiently small constant, then the mechanism of \cite{HolzmanKMT04} is only $O(\sqrt{m})$-approximate, whereas ours is $O(1)$-approximate.} Still, the minor improvements are significant as both bounds are now tight.

\Cref{table:Results} places our work alongside prior work for all three considered models. Our main results consider the communication model, but also imply minor improvements in the value query or computational model for free (or modest additional work). We contextualize the key takeaways from this table.
\begin{itemize}
    \item In the communication model with arbitrary monotone valuations, we improve both the state-of-the-art mechanism and lower bound, reducing the gap between them from $\Theta(\log^{1.5}(m))$ to $\Theta(1)$. In other words, $\Theta(m/\log(m))$ is the best achievable guarantee in $\poly(m)$ communication. We briefly overview technical highlights in Section~\ref{sec:highlights}.
    
    \item In the communication model with subadditive/XOS/submodular valuations, we improve the state-of-the-art lower bounds and slightly improve the state-of-the-art mechanism, reducing the gap between them from $\tilde{\Theta}(m^{1/6})$ to $\Theta(\sqrt{\log(m)})$. In other words, $\tilde{\Theta}(\sqrt{m})$ is the best achievable guarantee in $\poly(m)$ communication. We briefly overview technical highlights in Section~\ref{sec:highlights}.
    
    \item In the value queries model and computational model with either arbitrary monotone or subadditive/XOS/submodular valuations, we slightly improve state-of-the-art mechanisms. In the value queries model, these improvements match pre-existing lower bounds on \emph{any $\poly(m)$-query deterministic truthful mechanism}.\footnote{For arbitrary monotone, subadditive, and XOS valuations, our mechanisms even match pre-existing lower bounds on \emph{any $\poly(m)$-query deterministic algorithm}.} In the computational model, we further slightly improve prior-best lower bounds on \emph{any $\poly(m)$-time deterministic truthful mechanism} to match our MIR mechanisms. That is, we now know that MIR mechanisms achieve the optimal approximation guarantees among all $\poly(m)$-query deterministic truthful mechanisms, and all $\poly(m)$-time deterministic truthful mechanisms in the succinct representation model. These MIR mechanisms follow by observing that our new communication-efficient mechanisms can be implemented with $\poly(m)$ value queries (in fact, simultaneous value queries). Our slight improvement on lower bounds follows a similar outline as prior work, but is more careful with lower order terms.\footnote{Our computational lower bounds also use the stronger assumption of the randomized Exponential Time Hypothesis instead of the assumption $\textsc{RP} \ne \textsc{NP}$ used in prior work. However, this is essentially necessary if we care about lower order terms.}
\end{itemize}

\bgroup
\newcommand\T{\rule{0pt}{3.1ex}}       
\newcommand\B{\rule[-1.7ex]{0pt}{0pt}} 
\def\arraystretch{1.5}
\begin{table}[ht]
    \centering\small
    \begin{tabular}{C{4.8cm}|C{4.8cm} C{5.3cm}}
Communication & General & Subadditive/XOS/Submodular \B \\ \hline
Prior Best Mechanisms & $O(m/\sqrt{\log(m)})$~\cite{HolzmanKMT04} & $O(\sqrt{m})$~\cite{DobzinskiNS10} \T \\
Our MIR Mechanisms & \bm{$O(m/\log(m))$}~(Thm.~\ref{thm:General}) & \bm{$O(\sqrt{m/\log(m)})$}~(Thm.~\ref{thm:Subadditive}) \\
Our MIR Lower Bounds & \bm{$\Omega(m/\log(m))$}~(Thm.~\ref{thm:General}) & \bm{$\Omega(\sqrt{m}/\log(m))$}~(Thm.~\ref{thm:LowerBound}) \T \\
Prior MIR Lower Bounds & $\Omega(m/\log^2(m))$~\cite{DanielySS15} & $\Omega(m^{1/3}/\log^{2/3}(m))$~\cite{DanielySS15}
    \end{tabular}

    \bigskip
    
    \begin{tabular}{C{4.8cm}|C{4.8cm} C{5.3cm}}
Value Queries & General & Subadditive/XOS/Submodular \B \\ \hline
Prior Best Mechanisms & $O(m/\sqrt{\log(m)})$~\cite{HolzmanKMT04} & $O(\sqrt{m})$~\cite{DobzinskiNS10} \T \\
Our MIR Mechanisms & \bm{$O(m/\log(m))$}~(Thm.~\ref{thm:General}) & \bm{$O(\sqrt{m/\log(m)})$}~(Thm.~\ref{thm:Subadditive}) \\
Truthful Lower Bounds & $\downarrow$ & $\Omega(\sqrt{m/\log(m)})$~\cite{DobzinskiV16} \\
Algorithmic Lower Bounds & $\Omega(m/\log(m))$~\cite{BlumrosenN05a} & $\Omega(\sqrt{m/\log(m)})$ for Subadditive/XOS~\cite{DobzinskiNS10}
    \end{tabular}

    \bigskip
    
    \begin{tabular}{C{4.8cm}|C{4.8cm} C{5.3cm}}
Computation (Succ.~Rep.) & General & Subadditive/XOS/Submodular \B \\ \hline
Prior Best Mechanisms & $O(m/\sqrt{\log(m)})$~\cite{HolzmanKMT04} & $O(\sqrt{m})$~\cite{DobzinskiNS10} \T \\
Our MIR Mechanisms & \bm{$O(m/\log(m))$}~(Thm.~\ref{thm:General}) & \bm{$O(\sqrt{m/\log(m)})$}~(Thm.~\ref{thm:Subadditive}) \\
Our Truthful Lower Bounds & \bm{$\Omega(m/\log(m))$}~(Thm.~\ref{thm:GeneralComputationalLowerBound}) & \bm{$\Omega(\sqrt{m/\log(m)})$}~(Thm.~\ref{thm:SubmodularComputationalLowerBound}) \\
Prior Truthful Lower Bounds (for constant $\varepsilon > 0$) & $m^{1-\varepsilon}$~\cite{DanielySS15} & $\sqrt{m^{1-\varepsilon}}$~\cite{DobzinskiV16}
    \end{tabular}

    \caption{Summary of our results (bolded) compared to prior work (unbolded) when $k = \Theta(\log(m))$. All models referenced above are for deterministic mechanisms. See Table~\ref{table:ResultsGeneral} in \Cref{sec:MissingProofs} for the same table with general $k$.}
    
    \label{table:Results}
\end{table}
\egroup



\subsection{Technical Highlights} \label{sec:highlights}

Below, we overview one technical highlight from our algorithms, and one technical highlight from our lower bounds.

\medskip
\textbf{Technical Background: Prior Algorithms.}
The state-of-the-art MIR mechanism for general valuations~\cite{HolzmanKMT04} and for submodular valuations~\cite{DobzinskiNS10} are quite different. Parameterizing their approaches by $k$, the MIR mechanism of~\cite{HolzmanKMT04} partitions the items into $k$ chunks of size $m/k$, and considers $\calA$ to be the set of allocations that keeps together all items in the same chunk. Observe that there are only $2^k$ sets that any bidder might possibly receive in $\calA$, so optimizing over $\calA$ can be done with $2^k$ communication.

On the other hand,~\cite{DobzinskiNS10} considers $\calA$ to be the set of allocations that either give all items to the same bidder, or that gives each bidder a set of at most $O(k/\log(m/k))$ items. Again, there are only $2^k$ sets that any bidder might receive, so optimizing over $\calA$ can be done with $2^k$ communication.

\medskip
\textbf{Technical Highlight: Our Algorithms.}
Our algorithm for monotone valuations adds just one new idea to that of~\cite{HolzmanKMT04}: consider multiple partitions. Specifically, repeat the process $z$ times of partitioning $M$ into $k$ chunks, calling the chunks $B^{(\ell)}_s$ for $\ell \in [z]$ and $s \in [k]$. Let $\mathcal{A}$ be the set of allocations such that each bidder receives a set of the form $\bigcup_{s \in S} B^{(\ell)}_s$ for some $\ell \in [z]$ and $S \subseteq [k]$ (that is, each bidder receives a set that picks a single partition, and then is a union of chunks for that partition). Taking $z = 2^{O(k)}$ still requires just $2^{O(k)}$ communication. The main difference to~\cite{HolzmanKMT04} is that while any single partition can only give an $m/\sqrt{k}$-approximation, the best of $2^{O(k)}$ partitions improve the guarantee to $m/k$. Informally, this is because by taking many partitions, it is likely that for every set of $k$ items, there exists a partition where the items go to different chunks. This allows any $k$ items to be optimally allocated, so allocating the ``most important'' $k$ items will yield an $m/k$-approximation.

Our algorithm for subadditive/XOS/submodular valuations is in some sense more like~\cite{HolzmanKMT04} than~\cite{DobzinskiNS10}. Sticking exclusively to an approach like~\cite{DobzinskiNS10} is almost optimal, but doomed to lose a $\sqrt{\log(m/k)}$ factor due to the fact that we can only exhaust over sets of size $O(k/\log(m/k))$ in $2^k$ communication. On the other hand, sticking exclusively to an approach like~\cite{HolzmanKMT04} cannot guarantee an $o(m/k)$ approximation since at most $k$ bidders are given items, so significant changes are needed to leverage this approach. Our algorithm is as follows:
\begin{itemize}
    \item Take $z = 2^{O(k)}$ partitions of the items into $\sqrt{m/k}$ buckets of $\sqrt{mk}$ items (so there are $2^{O(k)}\sqrt{m/k}$ total buckets).
    
    \item Within each bucket, take $z' = 2^{O(k)}$ partitions of the items into $k$ chunks of $\sqrt{m/k}$ items. This induces sets of the form $C_{s',\ell',s,\ell}$, where $\ell \in [z]$ determines which bucketing we use, $s \in [\sqrt{m/k}]$ labels the bucket, $\ell' \in [z']$ determines which partition of that bucket into chunks we use, and $s' \in [k]$ labels the chunk.
    
    \item Finally, $\mathcal{A}$ denotes the set of allocations where each bidder either receives all the items $M$, or receives a set of the form $\cup_{s' \in S} C_{s',\ell',s,\ell}$ for some $S \subseteq [k]$ and $\ell', s, \ell$. In other words, each bidder chooses a bucket ($\ell$ chooses the bucketing and $s$ chooses a bucket in that bucketing), chooses a partition of that bucket into chunks, and then receives a subset of the chunks for the chosen partition of the chosen bucket. 
\end{itemize}

Observe that there are again only $2^{O(k)} \cdot \sqrt{mk} \cdot 2^{O(k)} \cdot 2^{k} = 2^{O(k)}$ ways to choose such a set that a bidder might get in $\calA$, so $\calA$ can be optimized over in $2^{O(k)}$ communication.

Broadly, the idea is to use subadditivity and binomial tail bounds to argue that a bucketing exists where we only need to allocate $k$ items optimally within each bucket to get a $\sqrt{m/k}$-approximation. Then we leverage our mechanism for general valuations within each bucket to allocate the $k$ items.

The key high-level technical takeaway we wish to emphasize is that exhausting over collections of large chunks of items, versus exhausting over collections of a few items, seems to be ``the right'' way to achieve optimal approximation guarantees for MIR mechanisms. This is because each partition into large chunks of items can achieve the desired approximation ratios for many configurations of bidders simultaneously, which allows us to overcome the barrier that there are too many configurations of bidders to try satisfying them by asking for small sets.

\medskip
\textbf{Technical Background: Prior Lower Bounds.}
Prior lower bounds on MIR mechanisms follow from an argument of the following form: (1), derive structure on $\calA$, using the fact that $\calA$ guarantees a good approximation, then (2), show that this structure embeds a hard communication problem. At a very high-level, the initial approach of~\cite{DobzinskiN11} could be described as using first-principles for (1), and then a non-trivial reduction from \textsc{SetDisjointness} for (2). The state-of-the-art approaches of~\cite{BuchfuhrerDFKMPSSU10, DanielySS15} instead use advanced machinery based on generalizations of the VC-dimension for (1), so that a trivial argument for (2) suffices.

In slightly more detail,~\cite{BuchfuhrerDFKMPSSU10, DanielySS15} find a set $N'$ of bidders and $M'$ of items such that $\calA$ must contain every possible allocation of items in $M'$ to bidders in $N'$ (in this case, we say that $\calA$ \emph{shatters} $(M', N')$). Then so long as the valuation class requires exponential communication to exactly optimize welfare, a communication lower bound of $2^{\Omega(\abs{M'})}$ follows immediately for (2), because when restricting attention to $M'$ and $N'$, $\calA$ considers all allocations and is exactly optimal.

\medskip
\textbf{Technical Highlight: Our Lower Bounds.}
Our lower bounds leverage \emph{some} of the advanced machinery developed in~\cite{BuchfuhrerDFKMPSSU10, DanielySS15} to understand the structure of any $\calA$ that achieves a good approximation, but stops short of going all the way to shattering. Indeed, the bounds in~\cite{DanielySS15} are tight (up to perhaps logarithmic factors) for approaches that insist on fully shattering some $(M', N')$. Instead, we leverage just enough structure to move towards a communication lower bound. For general valuations, the structure established in prior work actually suffices for a direct reduction from \textsc{SetDisjointness} that saves a $\log(m)$ factor. For submodular valuations, we derive a novel structure on $\calA$, but ultimately avoid a full shattering argument to save a $\tilde{\Theta}(m^{1/6})$ factor.

The key high-level takeaway we wish to emphasize is that our lower bounds improve over prior results by uncovering ``the right'' structure on $\calA$ to enable a simple-but-not-trivial communication lower bound, rather than pushing all the way towards fully shattering.


\subsection{Related Work} \label{sec:related}

We've previously discussed the most-related work to ours:~\cite{NisanR07} introduces the concept of MIR mechanisms, based off principles of the VCG mechanism~\cite{Vickrey61, Clarke71, Groves73}.~\cite{HolzmanKMT04} provides the first (and until our work, state-of-the-art) approximation for general valuations via an MIR mechanism, which is also the best previous truthful deterministic mechanism.~\cite{DobzinskiNS10} provides the first (and until our work, state-of-the-art) approximation for submodular valuations via an MIR mechanism, which is also the best previous truthful deterministic mechanism.~\cite{DobzinskiN11} provide the first lower bounds on $\poly(m)$-communication MIR mechanisms for submodular valuations.~\cite{DanielySS15}, building on tools developed in~\cite{BuchfuhrerDFKMPSSU10}, improve their bounds for submodular valuations and provide the first bounds for general valuations.

Beyond these directly-related works, there is a rich body of works on combinatorial auctions broadly. These works provide context for the study of MIR mechanisms specifically. For example,~\cite{LaviMN03} establish that all mechanisms satisfying four natural properties are affine maximizers (a generalization of MIR mechanisms that achieve identical approximation guarantees). Therefore, our results on MIR mechanisms also immediately bound approximation guarantees achievable by this class of mechanisms. There are, however, deterministic mechanisms that are not affine maximizers (posted-price mechanisms are one such example). As such, there is also an active body of research aiming to understand the approximation guarantees of deterministic truthful mechanisms with bounded communication. Our mechanisms are now the state-of-the-art deterministic truthful mechanisms with $\poly(m)$ communication for submodular, XOS, subadditive, and general valuations. On the other hand, there are significantly fewer lower bounds that hold for all deterministic truthful mechanisms. Specifically, the only such result for any of these four classes is a $4/3+\varepsilon$ lower bound for two XOS bidders~\cite{AssadiKSW20, BravermanMW18, Dobzinski16b}. 

The discussion of the previous paragraph considers a protocol to be truthful if it is an ex-post Nash equilibrium for bidders to follow it. That is, as long as every other bidder is following the protocol \emph{for some plausible valuations $\vec{v}_{-i}$}, it is in bidder $i$'s best interest to follow the protocol as well (for all $i$). One could instead seek mechanisms that are dominant strategy truthful: even if the other players use bizarre strategies that are not prescribed for any $\vec{v}_{-i}$, it is still in bidder $i$'s best interest to follow the protocol. On this front,~\cite{DobzinskiRV22} recently establish that no dominant strategy truthful mechanism can achieve an $m^{1-\varepsilon}$ approximation for general valuations in $\poly(m)$ communication. This means that, up to lower-order terms, the MIR mechanisms we develop are also optimal among dominant strategy truthful mechanisms (our mechanisms are dominant strategy truthful because they can be implemented using $\poly(m)$ simultaneous communication).

Finally, there is significant related work on the communication complexity of combinatorial auctions broadly, considering protocols (without incentives)~\cite{NisanS06, DobzinskiNS10,Feige09, FeigeV10, DobzinskiV13, DobzinskiNO14, Assadi17, EzraFNTW19}, deterministic truthful mechanisms~\cite{LehmannOS02}, and randomized truthful mechanisms~\cite{DobzinskiNS06, Dobzinski07, KrystaV12, Dobzinski16a,AssadiS19, AssadiKS21}. There is also significant related work on the computational complexity of combinatorial auctions broadly, again considering protocols without incentives~\cite{LehmannLN01, Vondrak08, MirrokniSV08}, and strong inapproximability results for truthful and computationally efficient mechanisms~\cite{Dobzinski11, DughmiV11, DobzinskiV12a, DobzinskiV12b, DobzinskiV16}.

\section{Preliminaries and Notation} \label{sec:prelim}

\subsection{Shattering}

When convenient, we may think of an allocation $A : M \to N \cup \{*\}$ as a function from items to bidders, where $*$ denotes an item not allocated to any bidder. As such, we may use notation such as $A \vert_{M'} \coloneqq (A_1 \cap M', \dots, A_n \cap M')$ and $\calA \vert_{M'} \coloneqq \{A \vert_{M'} : A \in \calA\}$ to denote the restriction of an allocation/allocation bank to the items $M'$. For allocation banks on disjoint sets of items $\calA$ and $\calB$, we may also use the notation $\calA \times \calB \coloneqq \{(A_1 \cup B_1, \dots, A_n \cup B_n) : A \in \calA, B \in \calA\}$ to denote an allocation bank where each combination of $A \in \calA$ and $B \in \calB$ is possible.

For a set of bidders $v_1, \dots, v_n : 2^M \to \mathbb{R}_+$, define $A^*(\vec{v})$ to be an optimal allocation, and let $\OPT(\vec{v})$ be the welfare under $A^*(\vec{v})$. For any $N' \subseteq N$, define $\OPT(\vec{v}, N')$ to be the welfare of bidders $N'$ under $A^*(\vec{v})$. Let $\MIR_\calA(\vec{v})$ be the welfare from an optimal allocation in $\calA$.

One concept that will repeatedly appear in our arguments is that of an allocation bank $\calA$ \emph{shattering} a collection of items/bidders.

\begin{definition}[Shattering]
    An allocation bank $\calA$ \emph{$d$-shatters} a pair $(M', N')$ if for all items $j \in M'$, there exists a set $T_j \subseteq N'$ with $\abs{T_j} = d$ such that $\bigtimes_{j \in M'} T_j \subseteq \calA \vert_{M'}$. That is, for each of the $d^{\abs{M'}}$ ways to allocate each item $j \in M'$ to a bidder in $T_j$, there exists an allocation in $\calA$ that allocates the items in $M'$ in this manner.

    If $\calA$ $\abs{N'}$-shatters $(M', N')$, we simply say that $\calA$ shatters $(M', N')$.
\end{definition}
\vsthm

Prior lower bounds of~\cite{BuchfuhrerDFKMPSSU10, DanielySS15} use this concept extensively, and eventually find a large set of items that are shattered. Our lower bound for submodular functions leverage this machinery for $d < \abs{N'}$ instead of $d = \abs{N'}$ to achieve a $\tilde{\Theta}(m^{1/6})$ improvement. This concept is also helpful for understanding intuitively how our mechanisms provide good approximation guarantees.

\subsection{Formal Statement of Models}

Our main results consider the communication model, where each player $i$ holds the valuation function $v_i(\cdot)$ and we consider only the communication cost of the protocol (for concreteness, in the blackboard model). Our new mechanisms (like the previous-best mechanisms) can be implemented simultaneously using only value queries. As such, these also imply results in the value query model, and the succinct representation model. In the succinct representation model, each player has a $v_i(\cdot)$ that can be represented by an explicit circuit of size at most $\poly(m)$. Because our main results are parameterized by $k$, we will further refer to the \emph{$2^k$-succinct representation model} as the case where each $v_i(\cdot)$ can be represented by an explicit circuit of size at most $2^{O(k)}$.

Additionally, we make the simplifying assumption that when considering the class of mechanisms that can be run in $2^{O(k)}$ communication/value queries/computation, all numbers are integers less than $2^{2^{O(k)}}$ (and therefore can be represented in $2^{O(k)}$ bits in a standard fashion). We make this assumption to avoid any strangeness with things like arithmetic, value queries representing arbitrary precision numbers, etc.

\section{An Optimal MIR Mechanism for General Valuations} \label{section:GeneralMechanism}

In this section, we will prove the upper bound in \Cref{thm:General} by constructing an $m/k$-approximate MIR mechanism which uses $2^{O(k)}$ communication. 

\begin{definition}[Chunking Mechanism]
    Let $B^{(1)}, \dots, B^{(z)} \in [t]^M$ partition $M$ into $t$ \emph{chunks} each. The allocation bank $\calA$ contains every allocation where each bidder gets a set of the form $C(S,\ell):=\bigcup_{j \in S} B^{(\ell)}_j$ for some $\ell \in [z], S \subseteq [t]$, and the \emph{chunking mechanism for $B^{(1)}, \dots, B^{(z)}$} is MIR over $\calA$.
\end{definition}
\vsthm

The prior state-of-the-art for general valuations is simply a chunking mechanism for a single partition into $k$ chunks of equal size~\cite{HolzmanKMT04} (i.e.~a chunking mechanism with $z=1$). They prove their mechanism guarantees an $m/\sqrt{k}$ approximation ratio for all monotone valuations (and this is tight -- no chunking mechanism with $z=1$ can guarantee better than $m/\sqrt{k}$).

Our only new idea is to instead consider a chunking mechanism for \emph{multiple} carefully chosen partitions which satisfy the following property.

\begin{definition}[$r$-Itemizing]
    A partition $B^{(\ell)}$ \emph{itemizes} a set $S$ if each chunk of $B^{(\ell)}$ contains at most one item of $S$. A list of partitions $B^{(1)}, \dots, B^{(z)}$ is \emph{$r$-itemizing} if for any set $S$ of size at most $r$, there exists $B^{(\ell)}$ which itemizes $S$.
\end{definition}

\begin{lemma} \label{lemma:rItemizingPartitionsExist}
    For all $r = \Omega(\log\log(m))$ and some $z = 2^{\Theta(r)}$, there exists a list of partitions $B^{(1)}, \dots, B^{(z)} \in [r]^M$ into $r$ chunks which is $r$-itemizing.
\end{lemma}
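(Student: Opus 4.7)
The plan is to apply the probabilistic method: draw $z$ independent random partitions $B^{(1)}, \ldots, B^{(z)}$, where each item of $M$ is assigned uniformly and independently to one of $r$ chunks. I then want to show that, with positive probability, the resulting collection is $r$-itemizing, and hence that at least one such collection exists.

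First I would observe that it suffices to check sets of size exactly $r$. If a partition $B^{(\ell)}$ itemizes some $T \supseteq S$ with $|T| = r$, then $B^{(\ell)}$ also itemizes $S$, and every $S$ with $|S| \leq r$ is contained in some size-$r$ subset of $M$ (we may pad with arbitrary extra items, assuming $m \geq r$; otherwise the statement is trivial). Next, for a fixed size-$r$ set $S$, the probability that a single uniformly random partition places all $r$ items of $S$ into $r$ distinct chunks is exactly $r!/r^r$, which by Stirling is at least $\sqrt{2\pi r}\cdot e^{-r}$. Independence across the $z$ partitions gives
\[
\Pr\bigl[\text{no } B^{(\ell)} \text{ itemizes } S\bigr] \leq \left(1 - \sqrt{2\pi r}\, e^{-r}\right)^z \leq \exp\!\left(-z\sqrt{2\pi r}\, e^{-r}\right).
\]
A union bound over the $\binom{m}{r} \leq m^r$ size-$r$ subsets of $M$ then shows that the random collection is $r$-itemizing with positive probability whenever $z > e^r \cdot r\ln(m)/\sqrt{2\pi r}$.

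Finally I would verify that this threshold can be taken to be $2^{\Theta(r)}$. Setting $z := \lceil e^r \cdot r\ln m\rceil$, the lower bound $z \geq e^r = 2^{\Omega(r)}$ is immediate. For the matching upper bound, the hypothesis $r = \Omega(\log\log m)$ gives $\ln m \leq 2^{O(r)}$, hence $z \leq e^r \cdot r \cdot 2^{O(r)} = 2^{O(r)}$, as required. The only conceptual point — and the main thing to get right — is exactly this calibration: the $m^r$ term from the union bound contributes a factor of $\ln m$ to the needed value of $z$, so for $z$ to remain $2^{\Theta(r)}$ we must have $\ln m = 2^{O(r)}$, which is precisely the regime $r = \Omega(\log\log m)$ in the hypothesis. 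No heavier machinery than a single Chernoff-style union bound is needed.
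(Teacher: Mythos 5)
Your proposal is correct and follows essentially the same argument as the paper: sample $z$ independent uniformly random partitions, bound the probability that a fixed size-$r$ set is itemized by a single partition by $r!/r^r = 2^{-\Theta(r)}$, and union bound over all $\binom{m}{r}$ sets, using $r = \Omega(\log\log m)$ to ensure the union bound closes with $z = 2^{\Theta(r)}$. Your write-up is somewhat more explicit about the reduction to sets of size exactly $r$ and the precise calibration of $z$, but there is no substantive difference in approach.
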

\begin{proof}
Suppose we randomly sample the partitions such that $B^{(1)}, \dots, B^{(z)} \in [r]^M$ are independent and uniformly random. Then for a fixed set $S$ of size $r$, the partition $B^{(\ell)}$ itemizes $r$ w.p. $r!/r^r = 2^{-\Theta(r)}$. Therefore, by independence, no partition itemizes $S$ w.p. at most $(1 - 2^{-\Theta(r)})^z = 2^{-2^{\Theta(r)}}$. By a union bound over the $\binom{m}{r} \leq 2^{r\log(m)} \leq 2^{2^{\Theta(r)}}$ sets of size $r$, $B^{(1)}, \dots, B^{(z)}$ is $r$-itemizing w.p. $> 0$. Thus, there exists a fixed list of partitions $B^{(1)}, \dots, B^{(z)}$ that is $r$-itemizing.
\end{proof}

\begin{theorem} \label{thm:GeneralMechanism}
    Let $k = \Omega(\log(m))$ and $z = 2^{\Theta(k)}$, and let $B^{(1)}, \dots, B^{(z)} \in [r]^M$ be a $(4k)$-itemizing list of partitions, which exists by \Cref{lemma:rItemizingPartitionsExist}. Then the chunking mechanism for $B^{(1)}, \dots, B^{(z)}$ is $m/k$-approximate and can be implemented using $2^{O(k)}$ communication. 

    Moreover, the mechanism can be implemented simultaneously with $2^{O(k)}$ value queries, and in time $2^{O(k)}$ in the $2^k$-succinct representation model.
\end{theorem}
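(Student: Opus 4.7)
The plan is to handle efficiency and the $m/k$ approximation separately. For efficiency: each bidder's candidate set $C(S,\ell)$ lives in a collection of size $z \cdot 2^{4k} = 2^{O(k)}$, so each bidder can report values on all such sets using $2^{O(k)}$ bits, answer $2^{O(k)}$ simultaneous value queries, or be evaluated $2^{O(k)}$ times in the succinct model. Once values are elicited, the welfare-maximizing allocation in $\calA$ can be computed: for each fixed $\ell \in [z]$, a subset-dynamic-program over $[4k]$ finds the best allocation that has all bidders using partition $\ell$ in $n \cdot 2^{O(k)}$ time, and taking the best across the $z = 2^{O(k)}$ partitions uses $2^{O(k)}$ time overall.

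For the approximation, I would fix an optimal allocation $A^*$ with value $V = \OPT$ and set $\alpha_i = v_i(A^*_i)$. I classify each bidder as \emph{big} if $|A^*_i| > 4k$ (of which there are at most $m/(4k)$, since $\sum_i |A^*_i| \leq m$) and \emph{small} otherwise; at least one class accounts for $\geq V/2$ of the welfare. If big bidders contribute $\geq V/2$, the maximum $\alpha_i$ among big bidders is at least $(V/2)/(m/(4k)) = 2Vk/m$; since $M = C([4k],\ell) \in \calA$ for any $\ell$, allocating $M$ to this bidder achieves welfare $\geq 2Vk/m$ by monotonicity. If instead small bidders contribute $\geq V/2$ and some small bidder has $\alpha_i \geq Vk/m$, I would again allocate $M$ to that single bidder. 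Otherwise every small bidder has $\alpha_i < Vk/m$, and I would apply a knapsack argument on the small bidders (items of weight $|A^*_i| \leq 4k$ and value $\alpha_i$, capacity $4k$, total weight $\leq m$, total value $\geq V/2$): the LP relaxation has value $\geq 2Vk/m$ (take $x_i = \min\{1,\, 4k/W\}$ where $W$ is the total small weight), and the integer optimum is at least the LP value $- \max_i \alpha_i > Vk/m$, yielding a subset $T$ with $\sum_{i \in T} |A^*_i| \leq 4k$ and $\sum_{i \in T} \alpha_i \geq Vk/m$. By the $(4k)$-itemizing property, some partition $B^{(\ell^*)}$ itemizes $\cup_{i \in T} A^*_i$; assigning each $i \in T$ the chunks of $B^{(\ell^*)}$ that meet $A^*_i$ produces disjoint sets $C(S_i,\ell^*) \supseteq A^*_i$, so by monotonicity the total welfare is $\geq \sum_{i \in T} \alpha_i \geq Vk/m$.

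The main obstacle is that for general (non-subadditive) monotone valuations, $v_i$ restricted to a subset of $A^*_i$ may collapse to zero, so the natural ``allocate the $k$ most valuable items'' idea that works for additive or subadditive valuations fails outright. The case split above bypasses this: bidders with large $|A^*_i|$ are handled by the single-bidder allocation $M$, which preserves $\alpha_i$ by monotonicity, while small bidders' entire $A^*_i$'s can be covered (plus harmless extras) by chunks of an itemizing partition. Tightening the knapsack constants to land cleanly at $m/k$ rather than a slightly worse constant is the most delicate piece of bookkeeping.
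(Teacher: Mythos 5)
Your proposal is correct and follows essentially the same route as the paper: classify bidders by the size of their bundle in $A^*$, handle the large ones via the single-bidder allocation $M \in \calA$ together with a counting bound on how many such bidders exist, and handle a subset of small bidders whose optimal bundles total at most $4k$ items by covering each $A^*_i$ with chunks of a single itemizing partition and invoking monotonicity. The only difference is cosmetic: the paper extracts the good small-bidder subset by greedily partitioning the small bidders into $m/(2k)$ groups of total weight at most $4k$ and averaging, whereas you use a knapsack LP-rounding argument (plus an extra subcase for a single high-value small bidder to absorb the rounding loss); both land on the same $k/m$ fraction of $\OPT$, and your efficiency analysis matches the paper's.
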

\begin{proof}
The first step in our analysis for general valuations is similar to the analysis of~\cite{DobzinskiNS10} for subadditive valuations, which separately analyzes the bidders who receive many items versus few items in the optimal allocation.

Let $t = m/(2k)$. We will partition bidders into sets $N_0, N_1, \dots, N_t$ such that $i \in N_0$ if and only if $\abs{A^*_i(\vec{v})} > 2k$, and for all $s > 0$, $\sum_{i \in N_s} \abs{A^*_i(\vec{v})} \leq 4k$. Observe that the condition on all $N_s$ is possible because the bidders not in $N_0$ all get at most $2k$ items each.\footnote{That is, this partition can be created by first placing all bidders with $|A_i^*(\vec{v})|> 2k$ in $N_0$, and then greedily filling $N_i$ with remaining bidders without exceeding the cap of $4k$. Because each bidder not in $N_0$ gets at most $2k$ items, each non-empty bidder set will have at least $2k$ items.}

Let $\calA$ be the allocation bank which defines the chunking mechanism and observe that $\calA$ can allocate all items to a single bidder. Then since there are at most $m/(2k)$ bidders who get more than $2k$ items, $\MIR_\calA(\vec{v}) \geq (2k/m)\OPT(\vec{v}, N_0)$.

Now, observe that any set $S$ which is itemized by some $B^{(\ell)}$ is shattered by $\calA$, as the chunking mechanism can assign any combination of the chunks (and hence any combination of the items) to the bidders. Thus, $\calA$ shatters every set of size $4k$. Since each set of bidders $N_s$ gets at most $4k$ items in $A^*(\vec{v})$, $\MIR_\calA(\vec{v}) \geq \max_{s \in [t]} \OPT(\vec{v}, N_s) \geq (2k/m)\OPT(\vec{v}, N \setminus N_0$).

Therefore, we get that for all valuations $v_1, \dots, v_n$,
\[
    \MIR_\calA(\vec{v}) \quad \geq \quad \max\bigg\{\frac{2k}{m}\OPT(\vec{v}, N_0),\; \frac{2k}{m}\OPT(\vec{v}, N \setminus N_0)\bigg\} \quad \geq \quad \frac{k}{m} \OPT(\vec{v}) \enspace .
\]

\textbf{Communication and Computation.} Each bidder can only receive at most $2^{4k} z = 2^{\Theta(k)}$ possible sets in $\calA$, so optimizing over $\calA$ can be done with just $2^{\Theta(k)}$ simultaneous value queries per bidder. On the computation side, we will make use of the following lemma.

\begin{lemma} \label{lemma:WelfareMaximizationComputation}
    In the $2^{m'}$-succinct representation model with items $M' \coloneqq [m']$ and bidders $N' \coloneqq [n']$, a welfare-maximizing allocation can be found in time $2^{O(m')} \cdot n'$.
\end{lemma}
\begin{proof}
For $T \subseteq N'$ and $S \subseteq M'$, define $v_T(S)$ to be the optimal welfare for bidders $T$ and items $S$, and define $A_T(S)$ to be an optimal allocation of items $S$ to bidders $T$. Suppose the functions $v_{[i]}$ and $A_{[i]}$ are known for some $i \in [n'-1]$. Then $v_{[i+1]}$ and $A_{[i+1]}$ can be computed in $2^{O(m')}$ computation by brute forcing over all $2^{m'}$ sets $S \subseteq M$ and all of at most $2^{m'}$ allocations of $S$ between $v_{[i]}$ and $v_{i+1}$. Hence, we can iteratively compute $v_{N'}$ and $A_{N'}$ in $2^{O(m')} n'$ time, and the optimal allocation is $A_{N'}(M')$.
\end{proof}

Observe that to run our mechanism, we only need to solve $z$ welfare maximization problems over $4k$ ``items,'' where we interpret a chunk as a single item. Therefore, by \Cref{lemma:WelfareMaximizationComputation}, the total computation needed is $2^{O(k)} nz = 2^{O(k)}$.
\end{proof}

\begin{remark}
    Note that while we can run the above mechanism in polynomial time \emph{given} an $r$-itemizing list of partitions, we do not know how to \emph{explicitly} find such a list in polynomial time. Therefore, if we want an explicit mechanism, then we can only achieve an $m/\log(m)$ approximation in polynomial time w.h.p. by sampling a random list of partitions. Note that this is still stronger than a mechanism which achieves the desired approximation with constant probability/in expectation since not all truthful mechanisms can have their success probability amplified by repetition.

    A similar statement holds true for the subadditive mechanism in the next section.
\end{remark}

\section{An MIR Mechanism for Subadditive Valuations} \label{section:SubadditiveMechanism}

In this section, we will prove \Cref{thm:Subadditive} by giving a $\sqrt{m/k}$-approximate MIR mechanism for subadditive valuations which uses $2^{O(k)}$ communication.

The prior state-of-the-art for subadditive valuations asks each bidder for their valuation for all sets of size $O(k/\log(m/k))$, and the entire set of items~\cite{DobzinskiNS10}. Then we can either give a single influential bidder every item (which yields an approximation by observing there are not too many influential bidders), or we can give every bidder their $k/\log(m/k)$ favorite items (which yields an approximation by subadditivity).

Our mechanism for subadditive valuations deviates from this approach of asking for smaller sets, and is instead more closely related to our approach for general mechanisms.

\begin{definition}[Bucketing Mechanism] \label{def:BucketingMechanism}
    Let $B^{(1)}, \dots, B^{(z)} \in [t]^M$ partition $M$ into $t$ \emph{buckets} each, and for each $\ell \in [z]$, $s \in [t]$, and $T \subseteq N$, let $\calA^{(\ell)}_{s,T} \subseteq T^{B^{(\ell)}_s}$ be an allocation bank for items $B^{(\ell)}_s$ among bidders $T$. Also, let $A^{(i)}$ denote the allocation that awards all items to bidder $i$. Then the \emph{bucketing mechanism for $\{\calA^{(\ell)}_{s,T} : \ell \in [z], s \in [t], T \subseteq N\}$} is MIR over the allocation bank
    \[
        \calA \quad \coloneqq \quad \bigg(\bigcup_{i \in N} \{A^{(i)}\}\bigg) \cup \bigg(\bigcup_{\ell \in [z]} \bigcup_{P \in [t]^N} \bigtimes_{s \in [t]} \calA^{(\ell)}_{s,P_s}\bigg) \enspace .
    \]
    In other words, $\mathcal{A}$ includes all allocations that award all items to the same bidder. All other allocations in $\mathcal{A}$ first choose a bucketing $\ell \in [z]$, then partitions bidders among buckets (with bidders $P_s$ going to bucket $s$), and then finally chooses an allocation in $\mathcal{A}^{(\ell)}_{s, P_s}$ of items in bucket $B^{(\ell)}_s$ to bidders $P_s$. The bucketing mechanism for $\mathcal{A}$ is MIR over $\mathcal{A}$.
\end{definition}

\begin{definition}[Regular]
    A partition $B^{(\ell)} \in [t]^M$ is \emph{regular} for a (possibly incomplete) partition of the items $B$ into $t$ buckets if for all $s \in [t]$ where $\abs{B_s} = O(m/t)$, $\abs{B^{(\ell)}_{s'} \cap B_s} = O(m/t^2)$ for all $s' \in [t]$. A list of partitions $B^{(1)}, \dots, B^{(z)} \in [t]^M$ is \emph{regular} if for all $B$, some $B^{(\ell)}$ is regular for $B$.
\end{definition}

\begin{lemma} \label{lemma:RegularPartitionsExist}
    For $t = O(\sqrt{m/\log(m)})$, there exists a regular list of partitions $B^{(1)}, \dots, B^{(m)} \in [t]^M$.
\end{lemma}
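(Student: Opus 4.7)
The plan is to apply the probabilistic method: take $B^{(1)}, \dots, B^{(m)}$ i.i.d., with each $B^{(\ell)}$ a uniformly random function $M \to [t]$, and show that the resulting list is regular with strictly positive probability.

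First I would fix an arbitrary target (possibly incomplete) partition $B$ of $M$ into $t$ buckets and consider a single random partition $B^{(\ell)}$. For each bucket $B_s$ of $B$ satisfying $|B_s| \leq c\,m/t$ (where $c$ is the constant implicit in the hypothesis of the definition of regular) and each $s' \in [t]$, the random variable $|B^{(\ell)}_{s'} \cap B_s|$ is a sum of $|B_s|$ independent $\operatorname{Bernoulli}(1/t)$ indicators with mean at most $c\,m/t^2$. Because $t = O(\sqrt{m/\log m})$, we have $m/t^2 = \Omega(\log m)$, so the multiplicative Chernoff bound gives
$$\Pr\!\bigl[\,|B^{(\ell)}_{s'} \cap B_s| > c'\,m/t^2\,\bigr] \;\leq\; 2^{-\Omega(m/t^2)} \;\leq\; m^{-C}$$
for any desired constant $C$, as long as the constant $c'$ (which becomes the constant hidden in the $O(m/t^2)$ conclusion of the definition of regular) is chosen large enough relative to $c$. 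Union bounding over the at most $t^2 \leq m$ pairs $(s, s')$, a single random $B^{(\ell)}$ fails to be regular for the fixed $B$ with probability at most $p \leq m^{-(C-1)}$.

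Next I would use the independence of the $m$ sampled partitions: the probability that none of $B^{(1)}, \dots, B^{(m)}$ is regular for $B$ is at most $p^m \leq m^{-(C-1)m}$. Finally I would union bound over all target partitions $B$. Since each of the $m$ items is either unassigned or placed in one of $t$ buckets, there are at most $(t+1)^m \leq m^m$ possible $B$. The total probability that the sampled list fails to be regular is therefore at most $m^m \cdot m^{-(C-1)m}$, which is strictly less than $1$ as soon as $C \geq 3$. Hence a regular list of length $m$ exists, and the lemma follows.

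The main obstacle, and the only place where real work happens, is calibrating the Chernoff tail against the two layers of union bound. The hypothesis $t = O(\sqrt{m/\log m})$ is exactly what forces $m/t^2 = \Omega(\log m)$, which is precisely the threshold at which the per-bucket tail probability can be pushed below any polynomial in $1/m$; amplification via the $m$ independent samples is then exactly enough to beat the $t^m$-sized union bound over target partitions. Neither a smaller list nor a larger $t$ would go through with this direct argument.
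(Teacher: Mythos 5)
Your proposal is correct and follows essentially the same argument as the paper: sample the $m$ partitions uniformly and independently, bound the per-bucket overflow probability by a binomial tail (the paper computes the tail explicitly, you invoke a multiplicative Chernoff bound, both relying on $m/t^2 = \Omega(\log m)$), amplify over the $m$ independent samples, and union bound over the at most $(t+1)^m \leq m^m$ target partitions. No substantive difference.
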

\begin{proof}
Suppose we randomly sample the partitions such that $B^{(1)}, \dots, B^{(m)} \in [t]^M$ are independent and uniformly random. Then for a fixed (possibly incomplete) partition $B$ into $t$ buckets and any $s \in [t]$ where $\abs{B_s} \leq Cm/t$ for some constant $C$, $\abs{B^{(\ell)}_{s'} \cap B_s}$ is stochastically dominated by $X \sim \Binom(Cm/t, 1/t)$. Thus, noting that $m/t^2 = \Omega(\log(m))$, we have
\begin{align*}
    \Pr\Big[B^{(1)}, \dots, B^{(m)} \text{ is not regular for } B\Big] \quad &\leq \quad \Big(t^2 \Pr[X = \omega(m/t^2)]\Big)^m \\
    &\leq \quad \Big(mt^2 \Pr[X = 3Cm/t^2]\Big)^m \\
    &\leq \quad \bigg(mt^2 \binom{Cm/t}{3Cm/t^2} \bigg(\frac{1}{t}\bigg)^{3Cm/t^2}\bigg)^m \\
    &\leq \quad \bigg(mt^2 \bigg(\frac{Cem/t}{3Cm/t^2}\bigg)^{3Cm/t^2} \bigg(\frac{1}{t}\bigg)^{3Cm/t^2}\bigg)^m \\
    &= \quad \bigg(mt^2 \bigg(\frac{e}{3}\bigg)^{3Cm/t^2}\bigg)^m \\
    &\leq \quad \frac{1}{m^m} \enspace .
\end{align*}
Hence, by a union bound over $\leq 2^m t^m < m^m$ (possibly incomplete) partitions of the items into $t$ buckets, $B^{(1)}, \dots, B^{(m)}$ is regular w.p. $> 0$, and therefore there exists a fixed list of partitions $B^{(1)}, \dots, B^{(m)}$ which is regular.
\end{proof}

We now present a $\sqrt{m/k}$-approximate mechanism for subadditive valuations using $2^{O(k)}$ simultaneous value queries.

\begin{definition}[Bucket-Shattering Mechanism] \label{def:SubadditiveMechanism}
    Let $k = \Omega(\log(m))$, let $t = \sqrt{m/k}/2$, and let $B^{(1)}, \dots, B^{(m)} \in [t]^M$ be a regular list of partitions, which exists by \Cref{lemma:RegularPartitionsExist}. For each $\ell \in [z], s \in [t]$, fix a $\Theta(k)$-itemizing list of $2^{\Theta(k)}$ partitions for $B^{(\ell)}_s$ (which exists by Lemma~\ref{lemma:rItemizingPartitionsExist}), and for each $T \subseteq N$, let $\calA^{(\ell)}_{s,T}$ be its chunking mechanism (i.e.~the $\Theta(m/k)$-approximate MIR mechanism from \Cref{thm:GeneralMechanism} for this specific list of partitions). The \emph{bucket-shattering mechanism for $k$} is the bucketing mechanism for this choice of $\mathcal{A}^{(\ell)}_{s,P_s}$.
\end{definition}

\begin{example}
    Suppose we have $m = 8$ and $n = 4$. Then one bucketing and chunking is
    \begin{align*}
        \text{Buckets 1: } (\{1, 2, 3, 4\}, \{5, 6, 7, 8\}) &\quad \begin{cases}
            \text{Chunks 1a: } (\{1, 2\}, \{3, 4\}), \enspace \text{Chunks 2a: } (\{1, 3\}, \{2, 4\}) \\
            \text{Chunks 1b: } (\{5, 6\}, \{7, 8\}), \enspace \text{Chunks 2b: } (\{5, 7\}, \{6, 8\})
        \end{cases} \\
        \text{Buckets 2: } (\{1, 3, 5, 7\}, \{2, 4, 6, 8\}) &\quad \begin{cases}
            \text{Chunks 1a: } (\{1, 3\}, \{5, 7\}), \enspace \text{Chunks 2a: } (\{1, 5\}, \{3, 7\}) \\
            \text{Chunks 1b: } (\{2, 4\}, \{6, 8\}), \enspace \text{Chunks 2b: } (\{2, 6\}, \{4, 8\})
        \end{cases}
    \end{align*}

    Then we could
    \begin{itemize}[topsep=0pt,itemsep=0pt]
        \item Choose bucketing 1

        \item Choose chunking 1a for bucket a and chunking 2b for the bucket b.

        \item Assign bidder 1 to the bucket a and bidders 2, 3, 4 to the bucket b.

        \item Give bidder 1 chunks $\{1, 2\}$ and $\{3, 4\}$ from bucket a.

        \item Give bidder 2 chunk $\{5, 7\}$ and bidder 4 chunk $\{6, 8\}$ from bucket b.
    \end{itemize}
    This results in the allocation $(\{1, 2, 3, 4\}, \{5, 7\}, \emptyset, \{6, 8\})$. Any allocation resulting from a similar procedure would be in the allocation bank of the bucket-shattering mechanism.

    On the other hand, the allocation $(\{1, 2, 5, 6\}, \{3, 4\}, \{7, 8\}, \emptyset)$ is impossible, because while we could choose bucketing 1, and chunkings 1a for bucket a and 1b for bucket b, bidder 1 can only receive chunks from a single bucket. This is a crucial restriction that saves a large factor of communication.
\end{example}

\begin{remark}
    One can think of the bucket-shattering mechanism as adding an additional layer of shattering to the mechanism for general valuations: we first shatter the bidders among the buckets, in the sense that any allocation of the bidders to buckets is possible. Then we run the general mechanism within each bucket, which shatters the items among the bidders for that bucket.
    
    Note that once the general mechanism within each bucket is solved, no additional communication is needed to find the optimal bucketing. However, $2^{\Omega(n)}$ \emph{computation} is needed (at least naively), which is too much when $n = \omega(k)$. This can be avoided by restricting consideration to only $\poly(m)$ random allocations of bidders to buckets instead of all $t^n$ allocations. We first analyze the more elegant, computationally inefficient version.
\end{remark}

\begin{theorem} \label{thm:SubadditiveMechanismComputationallyInefficient}
    For $k = \Omega(\log(m))$, the bucket-shattering mechanism for $k$ (\Cref{def:SubadditiveMechanism}) is $\sqrt{m/k}$-approximate for subadditive valuations and can be implemented using $2^{O(k)}$ communication.
    
    Moreover, the mechanism can be implemented simultaneously with $2^{O(k)}$ value queries.
\end{theorem}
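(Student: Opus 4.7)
The plan is to bound $\MIR_\calA(\vec v)$ from below by splitting bidders according to the size of their bundle in the optimal allocation. Call bidder $i$ \emph{big} if $\abs{A_i^*(\vec v)} > 2\sqrt{mk}$ and collect these in $N_0$; the remaining bidders are \emph{small}. I would handle big bidders via the ``award all items to a single bidder'' branch of $\calA$ and handle small bidders via the bucketing branch. Taking the maximum of the two bounds will yield the $\sqrt{m/k}$-approximation.

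\textbf{Big bidders.} Because $\sum_i\abs{A_i^*(\vec v)} \leq m$, we have $\abs{N_0} \leq \sqrt{m/k}/2$, so the best ``give all items to $i \in N_0$'' allocation in $\calA$ already achieves welfare at least $\OPT(\vec v, N_0)/\abs{N_0} \geq 2\sqrt{k/m}\cdot \OPT(\vec v, N_0)$.

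\textbf{Small bidders via a permutation assignment.} Each small bidder satisfies $\abs{A_i^*(\vec v)} \leq m/t$, so I would greedily partition the small bidders into $t$ groups $N_1, \dots, N_t$ so that $B_s \coloneqq \bigcup_{i \in N_s} A_i^*(\vec v)$ has $\abs{B_s} = O(m/t)$ (each group is filled until adding another small bidder would push it past $m/t$ items, and since each small bidder contributes at most $m/t$ items, no group ever exceeds $2m/t$). By \Cref{lemma:RegularPartitionsExist}, some outer bucketing $\ell \in [m]$ is regular for $(B_1,\dots,B_t)$, giving $\abs{B^{(\ell)}_{s'} \cap B_s} = O(k)$ for all $s, s'$. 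I would then consider assignments $P \in [t]^N$ of the form $P_i = \pi(s(i))$ for a permutation $\pi : [t] \to [t]$, where $s(i)$ denotes the group of small bidder $i$ (bidders in $N_0$ can be routed arbitrarily). Within bucket $s' = \pi(s)$ only bidders in $N_s$ are assigned, and their OPT-items inside $B^{(\ell)}_{s'}$ all lie in the $O(k)$-element set $B_s \cap B^{(\ell)}_{s'}$. Since the inner chunking mechanism $\calA^{(\ell)}_{s',N_s}$ is $\Theta(k)$-itemizing, it shatters this set, so $\calA^{(\ell)}_{s',N_s}$ contains an allocation giving each $i \in N_s$ a superset of $A_i^* \cap B^{(\ell)}_{s'}$, which by monotonicity yields welfare at least $\sum_{i \in N_s} v_i(A_i^* \cap B^{(\ell)}_{s'})$ inside bucket $s'$. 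Averaging over a uniformly random permutation $\pi$, each $\pi(s)$ is marginally uniform on $[t]$, and subadditivity gives
\[
\E_\pi\bigl[v_i(A_i^* \cap B^{(\ell)}_{\pi(s(i))})\bigr] \;=\; \frac{1}{t}\sum_{s' \in [t]} v_i(A_i^* \cap B^{(\ell)}_{s'}) \;\geq\; \frac{v_i(A_i^*)}{t}
\]
for every small bidder $i$. Summing and invoking the probabilistic method, some specific $\pi$ (hence some specific $P$ that the MIR mechanism considers) achieves total welfare at least $\OPT(\vec v, N \setminus N_0)/t = 2\sqrt{k/m}\cdot \OPT(\vec v, N \setminus N_0)$. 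Combining with the big-bidder bound gives $\MIR_\calA(\vec v) \geq \sqrt{k/m}\cdot \OPT(\vec v)$, i.e., a $\sqrt{m/k}$-approximation.

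\textbf{Communication and main obstacle.} The number of distinct sets any single bidder can receive in $\calA$ is at most $1 + m \cdot t \cdot 2^{\Theta(k)} \cdot 2^{\Theta(k)} = 2^{O(k)}$, using $k = \Omega(\log m)$ to absorb the $m$ and $t = \sqrt{m/k}$ factors, so each bidder can report all relevant values via $2^{O(k)}$ simultaneous value queries. I expect the main obstacle to be arranging that each bucket's ``active'' items stay within the $O(k)$ shattering budget of the inner chunking: restricting to permutation assignments (rather than arbitrary $[t]^N$) is essential, since it ensures each bucket sees bidders from only a single group $N_s$ and so confines the items to be reallocated to $B_s \cap B^{(\ell)}_{s'}$. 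Once that is in place, subadditivity plus averaging over $\pi$ deliver the bound automatically.
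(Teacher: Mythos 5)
Your proposal is correct and follows essentially the same route as the paper's proof: the same big/small bidder split handled by the all-items branch, the same greedy grouping into $t$ buckets of $O(m/t)$ items, the same appeal to \Cref{lemma:RegularPartitionsExist} to get $O(k)$ items per (group, bucket) intersection, and the same shattering-plus-subadditivity averaging to extract a $1/t$ fraction of $\OPT(\vec v, N\setminus N_0)$. The only (cosmetic) difference is that you average over all permutations $\pi:[t]\to[t]$ whereas the paper averages over just the $t$ cyclic shifts $\Delta\in[t]$, which already make each group's bucket marginally uniform.
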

\begin{proof}
We again first partition bidders into sets $N_0, N_1, \dots, N_t$ such that $i \in N_0$ if and only if $\abs{A^*_i(\vec{v})} \geq 2\sqrt{mk}$, and for all $s \in [t]$, $\sum_{i \in N_s} \abs{A^*_i(\vec{v})} \leq 4\sqrt{mk}$.

Observe that $\calA$ can allocate all items to a single bidder, so since there are at most $\sqrt{m/k}/2$ bidders who get more than $2\sqrt{mk}$ items, $\MIR_\calA(\vec{v}) \geq 2\sqrt{k/m}\OPT(\vec{v}, N_0)$.

Define $B^*_s \coloneqq \bigcup_{i \in N_s} A^*_i(\vec{v})$, meaning that $B^*_s$ is the items that bidders in bucket $s$ get in an optimal allocation. By construction of $N_1, \dots, N_t$, $\abs{B^*_s} \leq 4\sqrt{mk} = 2m/t$ for all $s \in [t]$. Then if we interpret $B^* \coloneqq (B^*_1, \dots, B^*_t)$ as an incomplete partition into $t$ buckets, there exists $\ell \in [m]$ such that $B^{(\ell)}$ is regular for $B^*$.

Now, define $B^{(\ell)}_s$ for $s \in [t+1, 2t]$ by $B^{(\ell)}_{s-t}$, and observe that by subadditivity of $v_1, \dots, v_n$,
\[
    \sum_{\Delta \in [t]} \sum_{s \in [t]} \sum_{s \in N_s} v_i(B^{(\ell)}_{s+\Delta} \cap A^*_i(\vec{v})) \quad \geq \quad \sum_{s \in [t]} \sum_{i \in N_s} v_i(A_i^*(\vec{v})) \quad = \quad \OPT(\vec{v}, N \setminus N_0) \enspace ,
\]
because $\bigcup_{\Delta \in [t]} (B^{(\ell)}_{s+\Delta} \cap A^*_i(\vec{v})) = A^*_i(\vec{v})$ for any $s, i$. Hence,
\begin{equation} \label{eq:SubadditiveMainProperty}
    \max_{\Delta \in [t]} \bigg\{\sum_{s \in [t]} \sum_{i \in N_s} v_i(B^{(\ell)}_{s+\Delta} \cap A^*_i(\vec{v}))\bigg\} \quad \geq \quad 2\sqrt{\frac{k}{m}} \OPT(\vec{v}, N \setminus N_0) \enspace .
\end{equation}

Now, observe that:
\begin{itemize}
    \item $\bigtimes_{s \in [t]} \calA^{(\ell)}_{s+\Delta,N_s} \subseteq \calA$ for all $\Delta \in [t]$.

    \item $\calA^{(\ell)}_{s+\Delta,N_s}$ shatters every subset of $B^{(\ell)}_s$ of size $O(k)$ (see proof of \Cref{thm:GeneralMechanism}).

    \item $\sum_{i \in N_s} \abs{B^{(\ell)}_{s+\Delta} \cap A^*_i(\vec{v})} = \abs{B^{(\ell)}_{s+\Delta} \cap B^*_s} = O(k)$, since $B^{(\ell)}$ is regular for $B^*$.
\end{itemize}
By the above points, there exists $A \in \calA$ such that $B^{(\ell)}_{s+\Delta} \cap A^*_i(\vec{v}) \subseteq A_i$ for all $i \in N$ and $\Delta \in [t]$. Therefore, by \eqref{eq:SubadditiveMainProperty},
\[
    \MIR_\calA(\vec{v}) \quad \geq \quad 2\sqrt{\frac{k}{m}}\OPT(\vec{v}, N \setminus N_0) \enspace ,
\]
and thus,
\[
    \MIR_\calA(\vec{v}) \quad \geq \quad \max\bigg\{2\sqrt{\frac{k}{m}}\OPT(\vec{v}, N_0),\, 2\sqrt{\frac{k}{m}}\OPT(\vec{v}, N \setminus N_0)\bigg\} \quad \geq \quad \sqrt{\frac{k}{m}}\OPT(\vec{v}) \enspace .
\]

\textbf{Communication.}
Observe that for each $\ell \in [m]$, $s \in [t]$, and $T \subseteq N$, $\calA^{(\ell)}_{s,T} \subseteq \calA^{(\ell)}_{s,N}$, and each bidder can only receive at most $2^{\Theta(k)}$ possible sets in $\calA^{(\ell)}_{s,N}$, as it is a Chunking mechanism with $2^{\Theta(k)}$ partitions into $\Theta(k)$ chunks. Hence, each bidder can only receive at most $2^{\Theta(k)} mt = 2^{\Theta(k)}$ sets in $\calA$, so optimizing over $\calA$ can be done with just $2^{\Theta(k)}$ simultaneous value queries per bidder.
\end{proof}

\subsection{Computational Efficiency}

Naively, the bucket-shattering mechanism requires $2^{\Omega(n)}$ computation to implement, as there are exponentially many allocations of bidders to buckets. We can resolve this by randomly sampling polynomially-many allocations of bidders to buckets, and optimizing over this restricted subset instead. We define the mechanism below, but relegate all proofs to~\Cref{sec:MissingProofs}.

\begin{definition}[$\calP$-Bucketing Mechanism]
    Let $\calP \subseteq [t]^N$ be a set of partitions of the bidders into $t$ buckets, and for each $\ell \in [z]$, $s \in [t]$, and $T \subseteq N$, define $\calA^{(\ell)}_{s,T}$ as in \Cref{def:BucketingMechanism}. Similarly, let $A^{(i)}$ denote the allocation that awards all items to bidder $i$. Then the \emph{$\calP$-bucketing mechanism for $\{\calA^{(\ell)}_{s,T} : \ell \in [z], s \in [t], T \subseteq N\}$} is MIR over the allocation bank
    \[
        \calA \quad \coloneqq \quad \bigg(\bigcup_{i \in N} \{A^{(i)}\}\bigg) \cup \bigg(\bigcup_{\ell \in [z]} \bigcup_{P \in \calP} \bigtimes_{s \in [t]} \calA^{(t)}_{s,P_s}\bigg) \enspace .
    \]
    In other words, the $\calP$-bucketing mechanism is the bucketing mechanism with a restricted range for the assignment of bidders to buckets.
\end{definition}

\begin{definition}[restate=Balanced,name=Balanced] \label{def:Balanced}
    Let $\vec{v} \coloneqq (v_1, \dots, v_n)$, and let $N_1(\vec{v}) \coloneqq \{i \in N : \abs{A^*_i(\vec{v})} \leq m/t\}$. For any bucketing of the bidders $P \in [t]^N$ and $s \in [t]$, let $B^*_s(P) \coloneqq \bigcup_{i \in P_s \cap N_1(\vec{v})} A^*_i(\vec{v})$, let $S_P(\vec{v}) \coloneqq \{s \in [t] : \abs{B^*_s(\vec{v}, P)} = O(m/t)\}$, and let $N_P(\vec{v}) \coloneqq \bigcup_{s \in S_P(\vec{v})} (P_s \cap N_1)$. In other words, $N_P(\vec{v})$ is the set of bidders which belong to buckets that do not receive many items, when we restrict attention only to items awarded in OPT to bidders who do not individually receive many items (in OPT).
    
    Then a bucketing $P^{(\ell)} \in [t]^N$ is \emph{balanced} for $\vec{v}$ if $\OPT(\vec{v}, N_P(\vec{v})) = \Theta(\OPT(\vec{v}, N_1(\vec{v})))$. A list of bucketings $P^{(1)}, \dots, P^{(z)} \in [t]^N$ is \emph{balanced} if for all $\vec{v}$, some $P^{(\ell)}$ is balanced for $\vec{v}$.
\end{definition}

\begin{lemma}[restate=BalancedBucketingsExist,name=] \label{lemma:BalancedBucketingsExist}
    For $y = \poly(m)$, there exists a balanced list of bucketings $P^{(1)}, \dots, P^{(y)} \in [t]^N$.
\end{lemma}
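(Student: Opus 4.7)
The plan is to apply the probabilistic method: sample $y$ bucketings $P^{(1)}, \ldots, P^{(y)} \in [t]^N$ independently, each by placing every bidder into one of the $t$ buckets uniformly and independently at random, and argue that for $y = \poly(m)$ this random list is balanced for every $\vec{v}$ with positive probability. The key reduction is from the global balanced condition to a per-bidder coverage condition: if for every $\vec{v}$ each $i \in N_1(\vec{v})$ satisfies $i \in N_{P^{(\ell)}}(\vec{v})$ for at least a $(1-\varepsilon)$-fraction of $\ell \in [y]$, then a pigeonhole argument produces a balanced bucketing for $\vec{v}$.

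For the per-bidder coverage bound, fix $\vec{v}$ and $i \in N_1(\vec{v})$, and let $s$ be the random bucket containing $i$ under a single random $P$. Then $|B^*_s(P)| = |A^*_i| + Y$ with $Y := \sum_{j \in N_1,\, j \neq i,\, j \in P_s} |A^*_j|$; conditioned on $s$, $Y$ is a sum of independent random variables over the bucket assignments of the other $N_1$ bidders, and $\E[Y] \leq \sum_{j \in N_1} |A^*_j|/t \leq m/t$. Markov's inequality then gives $\Pr[Y > (C-1)\, m/t] \leq 1/(C-1)$ for any $C > 2$, and combined with $|A^*_i| \leq m/t$ this yields $\Pr[|B^*_s(P)| > C \cdot m/t] \leq 1/(C-1)$; choosing $C$ to match the hidden constant in the $O(m/t)$ threshold of $S_P$ gives $\Pr[i \in N_P(\vec{v})] \geq 1 - 1/(C-1)$. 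Over $y$ independent bucketings, Chernoff implies $|\{\ell : i \in N_{P^{(\ell)}}(\vec{v})\}| \geq (1 - 2/C)\, y$ except with probability $\exp(-\Omega(y))$. Since this event depends on $\vec{v}$ only through the combinatorial data of $A^*$, I would union-bound over the at most $n(n+1)^m = 2^{O(m \log n)}$ pairs $(A^*, i)$ with $i \in N_1(A^*)$; taking $y = \Theta(m \log n) = \poly(m)$ ensures the random list satisfies the per-bidder coverage simultaneously for every $(A^*, i)$ with positive probability.

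To finish, condition on the coverage property, fix any $\vec{v}$, and set $w_i := v_i(A^*_i(\vec{v}))$. Swapping the order of summation,
\[
\sum_{\ell=1}^{y} \OPT(\vec{v}, N_{P^{(\ell)}}(\vec{v})) \;=\; \sum_{i \in N_1(\vec{v})} w_i \cdot \bigl|\{\ell : i \in N_{P^{(\ell)}}(\vec{v})\}\bigr| \;\geq\; (1 - 2/C)\, y \cdot \OPT(\vec{v}, N_1(\vec{v})).
\]
By pigeonhole, some $\ell^*$ satisfies $\OPT(\vec{v}, N_{P^{(\ell^*)}}(\vec{v})) \geq (1 - 2/C)\, \OPT(\vec{v}, N_1(\vec{v}))$, which (together with the trivial $\leq$ direction from $N_P \subseteq N_1$) is exactly the balanced condition for $P^{(\ell^*)}$. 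The main obstacle is the union bound over the uncountable space of valuations; the crucial observation that makes $y = \poly(m)$ feasible is that the per-bidder coverage event depends on $\vec{v}$ only through the combinatorial structure of $A^*$, reducing the problem to a union bound over finitely many combinatorial configurations.
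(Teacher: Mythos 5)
Your proof is correct, and while it shares the paper's starting point (a first-moment/Markov bound showing that under a uniformly random bucketing, the bucket containing a given bidder of $N_1(\vec{v})$ has load $O(m/t)$ with constant probability), it handles the crucial union bound over the continuum of valuations by a genuinely different route. The paper argues per-$\vec{v}$: it applies a reverse-Markov argument to $\OPT(\vec{v}, N_P)$ (whose expectation is $\Theta(\OPT(\vec{v},N_1))$ by linearity) to get a constant success probability per repetition, and then observes that the event ``$P$ is balanced for $\vec{v}$'' depends on $\vec{v}$ only through $A^*(\vec{v})$ and the collection $\calN(\vec{v})$ of bidder subsets with near-optimal weight, counting the latter by a hyperplane-arrangement bound of $2^{n^2}$ (citing Orlik--Terao) so that the union bound runs over $2^{n^2}n^m$ classes. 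You instead extract a purely combinatorial per-bidder coverage property that depends only on the pair $(A^*, i)$, union-bound over the at most $n(n+1)^m$ such pairs with $y = \Theta(m\log n)$, and then recover the balanced condition for \emph{arbitrary} valuations deterministically via the weighted averaging/pigeonhole step $\sum_{\ell}\OPT(\vec{v},N_{P^{(\ell)}}) = \sum_{i\in N_1} w_i\,\abs*{\{\ell : i \in N_{P^{(\ell)}}\}}$, which is valid because $\OPT(\vec{v},\cdot)$ is additive over bidders under $A^*$. Your version is more elementary (no arrangement counting, no reverse Markov), makes the required $y$ explicit, and in fact gives a slightly stronger uniform guarantee (a constant fraction of the listed bucketings is balanced for every $\vec{v}$). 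Two shared caveats, neither of which is a gap relative to the paper: the hidden constants in the $O(m/t)$ threshold of $S_P(\vec{v})$ and in the $\Theta(\cdot)$ of the balanced condition must be fixed consistently (you note this), and $y=\poly(m)$ implicitly requires $n$ not to be superpolynomially large in $m$ --- your argument only needs $\log n = \poly(m)$, whereas the paper's $2^{n^2}$ count needs the stronger $n = \poly(m)$.
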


\begin{definition}[restate=EfficientBucketingMechanism,name=Efficient Bucket-Shattering Mechanism] \label{def:EfficientBucketingMechanism}
    For $y = \poly(m)$, let $P^{(1)}, \dots, P^{(y)}$ be a balanced list of bucketings, which exists by \Cref{lemma:BalancedBucketingsExist}. Let $\calP$ be the set of all $P^{(x)}$ and their shifts (e.g., $(P^{(x)}_{1+\Delta}, \dots, P^{(x)}_{t+\Delta})$ for all $\Delta \in [t]$). The \emph{efficient bucket-shattering mechanism for $k$} is defined as the $\calP$-bucketing mechanism for $\{\calA^{(\ell)}_{s,T} : \ell \in [z], s \in [t], T \subseteq N\}$ as defined by the bucket-shattering mechanism for $k$.
\end{definition}

\begin{theorem}[restate=SubadditiveMechanismComputationallyEfficient,name=]
    For $k = \Omega(\log(m))$, the efficient bucket-shattering mechanism for $k$ (\Cref{def:EfficientBucketingMechanism}) is $\sqrt{m/k}$-approximate for subadditive valuations and can be implemented simultaneously with $2^{O(k)}$ value queries, and in time $2^{O(k)}$ in the $2^k$-succinct representation model.
\end{theorem}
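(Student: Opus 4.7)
The plan is to reduce \Cref{thm:SubadditiveMechanismComputationallyEfficient} to the analysis of \Cref{thm:SubadditiveMechanismComputationallyInefficient} via \Cref{lemma:BalancedBucketingsExist}, which I would prove first. Because the efficient mechanism only shrinks the allocation bank of the (inefficient) bucket-shattering mechanism, the value query count carries over verbatim from \Cref{thm:SubadditiveMechanismComputationallyInefficient}; the substantive content is proving balancedness, showing the approximation ratio survives the restriction, and bounding the computation time.

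For \Cref{lemma:BalancedBucketingsExist}, I would sample $y = \poly(m)$ independent uniformly random bucketings $P^{(1)}, \dots, P^{(y)} \in [t]^N$. Fix any valuation profile $\vec{v}$. For a single random $P$ and any $i \in N_1(\vec{v})$, conditioning on the bucket $s := P(i)$, the expected total items assigned in $A^*(\vec{v})$ to other $N_1(\vec{v})$-bidders landing in bucket $s$ is at most $m/t$, so Markov's inequality gives $\Pr[\abs{B_s^*(\vec{v}, P)} > 4m/t] \leq 1/2$, and hence $\Pr[i \in N_P(\vec{v})] \geq 1/2$. Since the $P^{(\ell)}$ are independent, the indicators $\mathbbm{1}[i \in N_{P^{(\ell)}}(\vec{v})]$ are independent across $\ell$, so Chernoff yields $\Pr[i \in N_{P^{(\ell)}}(\vec{v})\text{ for fewer than }y/4\text{ values of }\ell] \leq e^{-\Omega(y)}$. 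The key reduction is a double-counting observation: if every $i \in N_1(\vec{v})$ lies in at least $y/4$ of the $N_{P^{(\ell)}}(\vec{v})$'s, then for \emph{any} non-negative weights $w$, $\sum_\ell \sum_{i \in N_{P^{(\ell)}}(\vec{v})} w_i \geq (y/4) \sum_i w_i$, so some $\ell$ captures at least $(\sum_i w_i)/4$; applied to $w_i = v_i(A_i^*(\vec{v}))$ this yields balancedness. This coverage condition depends on $\vec{v}$ only through the combinatorial structure of $A^*(\vec{v})$, of which there are at most $(n+1)^m$ options, so a union bound over these structures and over bidders drives failure below $1$ for $y = \Theta(m \log(n+m)) = \poly(m)$ (under the standard assumption $n \leq \poly(m)$).

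Given \Cref{lemma:BalancedBucketingsExist}, the approximation proof closely follows \Cref{thm:SubadditiveMechanismComputationallyInefficient}. Split bidders into the big ones ($\abs{A_i^*(\vec{v})} > m/t$, handled by the all-items-to-one-bidder allocations in $\calA$) and $N_1(\vec{v})$. By balancedness, some $P^{(\ell)} \in \calP$ satisfies $\OPT(\vec{v}, N_{P^{(\ell)}}(\vec{v})) = \Omega(\OPT(\vec{v}, N_1(\vec{v})))$, and by construction of $N_{P^{(\ell)}}(\vec{v})$ each relevant bucket receives only $O(m/t)$ items from these bidders. Applying \Cref{lemma:RegularPartitionsExist} to find an item partition $B^{(\ell')}$ regular for $\{B_s^*(\vec{v}, P^{(\ell)})\}_s$, then using subadditivity together with pigeonhole over $t$ cyclic shifts exactly as in \eqref{eq:SubadditiveMainProperty}, extracts $\Omega(\sqrt{k/m})$ of this welfare and matches the $\sqrt{m/k}$-approximation ratio. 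Including all shifts of each $P^{(x)}$ in $\calP$ (per \Cref{def:EfficientBucketingMechanism}) is precisely what ensures the pigeonhole shift argument stays inside the restricted allocation bank. For computation, $\abs{\calP} = yt = \poly(m)$ and $z = m$, while each of the $t$ chunking subproblems runs in $2^{O(k)}$ time via \Cref{thm:GeneralMechanism} and \Cref{lemma:WelfareMaximizationComputation}, for total time $\poly(m) \cdot 2^{O(k)} = 2^{O(k)}$. The hardest part is \Cref{lemma:BalancedBucketingsExist}: the crux is identifying the purely combinatorial coverage condition that reduces balancedness against arbitrary weights to an event about bucketings alone, so the union bound need only traverse finitely many allocation structures rather than uncountably many valuations.
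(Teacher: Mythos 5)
Your proposal is correct, and the overall architecture matches the paper's: restrict the bidder-to-bucket assignments to a balanced list plus its cyclic shifts, reuse the inefficient analysis (regularity of some item bucketing for the induced $B^*(\vec{v},P)$, shattering of $O(k)$-size intersections within each bucket, and the subadditivity-plus-pigeonhole argument of \eqref{eq:SubadditiveMainProperty} over the shifts contained in $\calP$, exactly as in \Cref{lemma:MechanismIsAtLeastVP,lemma:PApproximatesP}), and the same accounting for value queries and for computation. Where you genuinely diverge is the proof of \Cref{lemma:BalancedBucketingsExist}. The paper fixes $\vec{v}$, shows a single random $P$ is balanced with constant probability (Chernoff on $\abs{B^*_s}$ plus linearity and reverse Markov), amplifies over $y$ repetitions, and then union bounds over both the optimal allocation ($n^m$ choices) and the collection $\calN(\vec{v})$ of bidder subsets with large $\OPT$, whose number it bounds by $2^{n^2}$ via hyperplane-arrangement counting~\cite{OrlikT13}. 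You instead isolate a weight-free combinatorial event --- every bidder of $N_1(\vec{v})$ lies in $N_{P^{(\ell)}}(\vec{v})$ for at least $y/4$ indices $\ell$ --- prove it per bidder by Markov plus Chernoff, and convert it into balancedness for \emph{all} valuation profiles sharing the allocation structure via a double-counting/averaging step; your union bound then runs only over the $(n+1)^m$ allocation structures and the $n$ bidders. This buys a more elementary argument (no arrangement-counting citation) and makes transparent why finitely many objects suffice for the union bound; the paper's route buys essentially nothing extra here, so your version is a legitimate simplification. Two shared caveats, neither specific to you: both arguments implicitly need $n = \poly(m)$ for $y = \poly(m)$ (the paper's $2^{n^2}$ term needs it even more), and both rely on choosing the implicit constant in the $O(m/t)$ threshold defining $S_P(\vec{v})$ consistently with the constant in the regularity definition, which is fine since all such constants are free.
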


\section{Communication Lower Bounds} \label{section:LowerBounds}

In this section, we will prove the lower bound in \Cref{thm:General} and \Cref{thm:LowerBound}.

Our approach can be broken down into the same two parts as prior approaches: we first show that guarantees on the approximation ratio implies a rich allocation bank, then show that optimizing over the allocation bank requires lots of communication. While the first part primarily adapts existing results, the second part uses novel techniques to close the $m^{1/6}$ gap between the upper and lower bounds for submodular valuations.


\subsection{Part 1: Approximation Implies Rich Allocation Bank}

We will show that a rich allocation bank $2$-shatters a large set of items. To do so, we make use of the following results.

\begin{proposition}[\cite{DanielySS15}, Theorem 1.5] \label{prop:MultiVCDimension}
    Suppose an allocation bank $\calA \subseteq N^M$ does not $2$-shatter any set of size $d$. Then
    \[
        \abs{\calA} \quad \leq \quad \sum_{i=0}^d \binom{m}{i} \binom{n}{2}^i \quad \leq \quad \bigg(\frac{emn^2}{d}\bigg)^d \enspace .
    \]
    For an allocation bank $\calA$, there exists $S \subseteq M$ such that $\calA$ $2$-shatters $(S, N)$ and
    \[
        \abs{\calA} \quad \leq \quad (mn^2)^{\abs{S}} \enspace .
    \]
\end{proposition}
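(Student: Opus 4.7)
The plan is to prove the first inequality via a Sauer-Shelah-style induction on $m$, adapted from subsets of $M$ to $N$-valued functions and from standard shattering to $2$-shattering. The second assertion will then follow by applying the first to a maximal $2$-shattered subset, with some additional care in the algebra.

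For the inductive step, pick any item $j \in M$. Let $\calA_0 \coloneqq \calA \vert_{M \setminus \{j\}}$ and, for each unordered pair $\{i_1, i_2\} \subseteq N$, define the ``doubled'' family
\[
    \calA_{\{i_1, i_2\}} \coloneqq \big\{A' \in N^{M \setminus \{j\}} : A' \cup \{j \mapsto i_1\},\, A' \cup \{j \mapsto i_2\} \in \calA\big\} \enspace .
\]
For each $A' \in \calA_0$, let $k(A')$ denote the number of bidders $i \in N$ with $A' \cup \{j \mapsto i\} \in \calA$. The elementary inequality $k - 1 \leq \binom{k}{2}$ (for $k \geq 1$), summed over $A' \in \calA_0$, yields
\[
    \abs{\calA} \quad \leq \quad \abs{\calA_0} + \sum_{\{i_1, i_2\} \subseteq N} \abs{\calA_{\{i_1, i_2\}}} \enspace .
\]
Crucially, $\calA_0$ does not $2$-shatter any set of size $d$ (since any such shattering in $\calA_0$ lifts to $\calA$), and each $\calA_{\{i_1, i_2\}}$ does not $2$-shatter any set of size $d - 1$ (since adjoining $j$ with $T_j = \{i_1, i_2\}$ would yield a size-$d$ $2$-shattering of $\calA$). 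Applying induction to each sub-family and then invoking Pascal's identity $\binom{m-1}{i} + \binom{m-1}{i-1} = \binom{m}{i}$ gives
\[
    \abs{\calA} \;\leq\; \sum_{i=0}^d \binom{m-1}{i}\binom{n}{2}^i \;+\; \binom{n}{2}\sum_{i=0}^{d-1} \binom{m-1}{i}\binom{n}{2}^i \;=\; \sum_{i=0}^d \binom{m}{i}\binom{n}{2}^i \enspace ,
\]
and the closed-form bound $(emn^2/d)^d$ then follows from $\binom{m}{i} \leq (em/i)^i$ together with the observation that the terms of the sum grow roughly geometrically when $d \ll m$.

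For the second assertion, I would take $S$ to be a maximal $2$-shattered subset of $M$, so that $\calA$ does not $2$-shatter any set of size $\abs{S}+1$. The main obstacle I anticipate is that feeding $d = \abs{S}+1$ into the closed-form bound above produces $(emn^2/(\abs{S}+1))^{\abs{S}+1}$, which does not immediately simplify to $(mn^2)^{\abs{S}}$ in all regimes, especially for very small $\abs{S}$. The fix will be to handle the edge case directly: when $\abs{S} = 0$, maximality forces $\abs{\calA \vert_{\{j\}}} \leq 1$ for every item $j$, so $\abs{\calA} \leq 1 = (mn^2)^0$; for $\abs{S} \geq 1$, one can either loosen the first bound by factoring out $(mn^2)^{\abs{S}}$ from the dominant term in the sum, or slice $\calA$ fibrewise over the projection $\calA \vert_S$ and apply the first bound within each fibre, thereby absorbing the extra factors into the desired exponential form.
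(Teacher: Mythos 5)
The paper itself gives no proof of this proposition—it is imported verbatim from \cite{DanielySS15} (Theorem 1.5)—so there is no in-paper argument to compare against. Your induction for the first inequality is the standard multiclass Sauer--Shelah (Natarajan-style) argument and is correct: the counting step $\abs{\calA} = \sum_{A'} k(A') \leq \abs{\calA_0} + \sum_{\{i_1,i_2\}} \abs{\calA_{\{i_1,i_2\}}}$ via $k-1 \leq \binom{k}{2}$, the two dimension-reduction claims for $\calA_0$ and $\calA_{\{i_1,i_2\}}$, and the Pascal recombination all check out, and the closed form follows from the usual $t^{-d}(1+xt)^m$ optimization with $x = \binom{n}{2}$.

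The second assertion is where you have a genuine gap, and your own hedging points directly at it. Taking $S$ maximal and feeding $d = \abs{S}+1$ into the first bound yields $\abs{\calA} \leq \sum_{i=0}^{\abs{S}+1}\binom{m}{i}\binom{n}{2}^i$, and this quantity is \emph{not} always at most $(mn^2)^{\abs{S}}$: for $m = 10$, $n = 2$, $\abs{S} = 1$ the sum is $1 + 10 + 45 = 56$ while $(mn^2)^{\abs{S}} = 40$. So no amount of ``factoring out the dominant term'' can close the argument—the extra $i = \abs{S}+1$ term is genuinely too large. The fibrewise-slicing alternative is also not developed enough to evaluate (the fibres of $\calA$ over $\calA\vert_S$ need not have small $2$-shattering dimension, so it is unclear what bound you would apply inside each fibre). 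The fix is cheap: run your identical induction under the hypothesis ``does not $2$-shatter any set of size $d+1$'' to obtain the tight, off-by-one bound $\abs{\calA} \leq \sum_{i=0}^{d}\binom{m}{i}\binom{n}{2}^i$ (the base case $d = 0$ is exactly your $\abs{S} = 0$ observation that $\abs{\calA\vert_{\{j\}}} \leq 1$ for every $j$). Applying that with $d = \abs{S}$ gives $\abs{\calA} \leq \sum_{i=0}^{\abs{S}} (mn^2/2)^i \leq 2(mn^2/2)^{\abs{S}} \leq (mn^2)^{\abs{S}}$ for $\abs{S} \geq 1$, with the $\abs{S} = 0$ case immediate. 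In short: prove and invoke the tight version rather than trying to salvage the loose one as stated.
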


\begin{proposition}[\cite{BuchfuhrerDFKMPSSU10}, Lemma 3.2] \label{prop:SubadditiveManyAllocations}
    If the MIR mechanism for $\calA$ is $n/3$-approximate for additive valuations, then there exists $S \subseteq M$ such that $\abs{\calA \vert_S} \geq 2^{m/n}$.
\end{proposition}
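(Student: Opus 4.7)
The plan is to prove this by a covering and double-counting argument over a parameterized family of additive valuations. For each function $f \colon M \to N$, I would consider the additive profile defined by $v_i^f(\{j\}) = 1$ if $f(j) = i$ and $0$ otherwise. Every such profile has $\OPT(v^f) = m$, achieved by allocating each item $j$ to bidder $f(j)$.

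Identifying each $A \in \calA$ with its partial allocation function $a_A \colon M \to N \cup \{*\}$, the welfare of $A$ under $v^f$ equals $|\{j \in M : a_A(j) = f(j)\}|$. The $n/3$-approximation hypothesis therefore requires that for every $f \in N^M$ there exists some $A \in \calA$ with $|\{j : a_A(j) = f(j)\}| \geq 3m/n$; when this holds, I say $A$ \emph{covers} $f$.

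The key estimate is to bound how many $f$'s any single $A$ can cover. For uniformly random $f \in N^M$, the agreement count $|\{j : a_A(j) = f(j)\}|$ is stochastically dominated by $X \sim \Binom(m, 1/n)$, which has mean $\mu = m/n$. A standard Chernoff bound with multiplicative deviation $\delta = 2$ yields $\Pr[X \geq 3\mu] \leq (e^2/27)^{m/n}$, so each fixed $A$ covers at most $(e^2/27)^{m/n} \cdot n^m$ functions. Since the covers of elements of $\calA$ together must contain every $f \in N^M$, double counting forces
\[
    |\calA| \geq (27/e^2)^{m/n} > 2^{m/n},
\]
using $27/e^2 \approx 3.65$. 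Taking $S = M$ then gives $|\calA \vert_S| = |\calA| \geq 2^{m/n}$, which is the desired conclusion.

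The only step that really requires care is the Chernoff calculation producing a base strictly larger than $2$; this is precisely the quantitative content of the constant $n/3$ in the hypothesis (any constant threshold strictly above $e$ times the mean $m/n$ would suffice, and $3 > e$). Conceptually, the main leverage is the choice of the indicator-style additive family, which converts the approximation guarantee into a clean combinatorial covering problem on $N^M$ for which a single tail estimate suffices.
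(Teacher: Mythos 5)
The paper never proves this proposition---it is imported verbatim from \cite{BuchfuhrerDFKMPSSU10} (their Lemma 3.2)---so there is no in-paper argument to compare against; your proof is a correct, self-contained derivation. The key steps all check out: the indicator additive profiles $v^f$ have $\OPT = m$, so $n/3$-approximation forces, for every $f \in N^M$, some $A \in \calA$ agreeing with $f$ on at least $3m/n$ items; for a fixed $A$ the agreement count under uniformly random $f$ is indeed dominated by $\Binom(m,1/n)$, and the Chernoff bound $\Pr[X \geq 3\mu] \leq (e^2/27)^{m/n}$ caps the number of covered $f$ at $(e^2/27)^{m/n} n^m$, whence $\abs{\calA} \geq (27/e^2)^{m/n} \geq 2^{m/n}$ and $S = M$ witnesses the conclusion. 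This matches both the spirit of the original counting argument in the cited source and, importantly, the exact form in which the paper uses the proposition (Lemma~\ref{lemma:RichAllocationsSubmodular} only invokes the consequence $\abs{\calA} \geq 2^{m/n}$ with $n = 3\sqrt{m/k}$), so nothing is lost by taking $S = M$; the only cosmetic caveat is that for $n < 3$ the hypothesis is unsatisfiable on instances with positive welfare, so the statement is vacuous there, which your argument handles automatically.
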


\begin{lemma} \label{lemma:RichAllocationsSubmodular}
    Let $k = \Omega(\log(m))$ and $n = 3\sqrt{m/k}$. Then if the MIR mechanism for $\calA$ is $\sqrt{m/k}$-approximate for submodular valuations, there exists $S \subseteq M$ of size $\Omega(\sqrt{mk}/\log(m/k))$ such that $\calA$ $2$-shatters $(S, N)$.
\end{lemma}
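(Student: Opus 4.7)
The plan is to combine Propositions~\ref{prop:SubadditiveManyAllocations} and~\ref{prop:MultiVCDimension}: the first forces $\calA$ to have many distinct allocations (because good additive approximation requires it), and the second converts a large $\calA$ into a $2$-shattered set via a VC-dimension-style argument.

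First I would observe that every additive valuation is submodular, so the MIR mechanism for $\calA$ is $\sqrt{m/k}$-approximate for additive valuations as well. Since $n = 3\sqrt{m/k}$, we have $\sqrt{m/k} = n/3$, matching the hypothesis of Proposition~\ref{prop:SubadditiveManyAllocations}. Applying it gives some $S' \subseteq M$ with $\abs{\calA\vert_{S'}} \geq 2^{m/n} = 2^{\sqrt{mk}/3}$, and in particular $\abs{\calA} \geq 2^{\sqrt{mk}/3}$.

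Second I would invoke the contrapositive of the first part of Proposition~\ref{prop:MultiVCDimension}: if $\abs{\calA} > (emn^2/d)^d$, then $\calA$ $2$-shatters some $(S, N)$ with $\abs{S} = d$. So it suffices to choose $d = c\sqrt{mk}/\log(m/k)$ for a sufficiently small constant $c$ and verify that $(emn^2/d)^d < 2^{\sqrt{mk}/3}$. Substituting $n^2 = 9m/k$ gives
\[
    \frac{emn^2}{d} \;=\; \Theta\!\left((m/k)^{3/2}\log(m/k)\right),
\]
so $\log(emn^2/d) = \tfrac{3}{2}\log(m/k) + O(\log\log(m/k))$, and multiplying by $d$ yields $\tfrac{3c}{2}\sqrt{mk} + o(\sqrt{mk})$. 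Any $c < 2/9$ then makes this strictly less than $\sqrt{mk}/3$, so $\abs{\calA} \geq 2^{\sqrt{mk}/3} > (emn^2/d)^d$ as required.

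The only real obstacle is the bookkeeping in that last logarithm calculation: the $O(\log\log(m/k))$ lower-order term must not swamp the leading $\tfrac{3}{2}\log(m/k)$, which the hypothesis $k = \Omega(\log m)$ keeps in a routine regime. Conceptually, the key departure from prior work~\cite{DanielySS15} is to stop at the ``weak'' $2$-shattering guarantee of Proposition~\ref{prop:MultiVCDimension} rather than pushing all the way to fully shattering $(S,N)$ (i.e.~$\abs{N}$-shattering), which would cost an additional factor in $d$; as discussed in \Cref{sec:highlights}, this careful accounting is the source of the $\tilde{\Theta}(m^{1/6})$ improvement over prior MIR lower bounds.
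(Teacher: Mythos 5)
Your proposal is correct and takes essentially the same approach as the paper: it combines \Cref{prop:SubadditiveManyAllocations} (applied to additive, hence submodular, valuations with $n/3 = \sqrt{m/k}$) to get $\abs{\calA} \geq 2^{\sqrt{mk}/3}$, with the first bound of \Cref{prop:MultiVCDimension} at $d = \Theta(\sqrt{mk}/\log(m/k))$ to force $2$-shattering. The only cosmetic difference is that you argue via the contrapositive with an explicit constant $c < 2/9$, whereas the paper phrases the same calculation as a proof by contradiction.
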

\begin{proof}
Suppose that $\calA$ does not $2$-shatter any set of size $\Omega(\sqrt{mk}/\log(m/k))$. Then by \Cref{prop:MultiVCDimension}, choosing the correct constants yields
\[
    \abs{\calA} \quad \leq \quad \bigg(\frac{emn^2}{\Theta(\sqrt{mk}/\log(m/k))}\bigg)^{\Theta(\sqrt{mk}/\log(m/k))} \quad = \quad O\bigg(\frac{m^2}{k^2}\bigg)^{\Theta(\sqrt{mk}/\log(m/k))} \quad < \quad 2^{\sqrt{mk}/3} \enspace .
\]
On the other hand, by \Cref{prop:SubadditiveManyAllocations}, $\abs{\calA} \geq 2^{\sqrt{mk}/3}$, a contradiction. Thus, $\calA$ must $2$-shatter some set of size $\Omega(\sqrt{mk}/\log(m/k))$.
\end{proof}

\begin{lemma} \label{lemma:RichAllocationsGeneral}
    Let $k = \Omega(\log(m))$ and $n \geq 2m/k$. Then if the MIR mechanism for $\calA$ is $m/k$-approximate for all monotone valuations, there exists $S \subseteq M$ of size $\Omega(k)$ and $T \subseteq N$ of size $2$ such that $\calA$ shatters $(S, T)$.
\end{lemma}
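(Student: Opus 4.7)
The plan is to mirror the strategy of \Cref{lemma:RichAllocationsSubmodular}: assume the shattering conclusion fails, combine an upper bound on $|\calA|$ derived from this assumption with a lower bound on $|\calA|$ derived from the $m/k$-approximation guarantee, and derive a contradiction.

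Suppose, for contradiction, that $\calA$ does not shatter any $(S, T)$ with $|S| = \Omega(k)$ and $|T| = 2$. A pigeonhole over the $\binom{n}{2}$ pairs of bidders then implies that $\calA$ cannot even $2$-shatter any set of size $\Omega\bigl(k \binom{n}{2}\bigr)$: if it did, then among the associated pairs $T_j$ of the $2$-shattered items, some fixed pair $T$ would recur for $\Omega(k)$ of them, yielding the forbidden shattering of $(S', T)$ with $|S'| = \Omega(k)$. Invoking \Cref{prop:MultiVCDimension} then gives an upper bound of the form $|\calA| \leq (emn^2/d)^d$ at $d = \Omega(k n^2)$.

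For the complementary lower bound, I would use that $m/k$-approximation for monotone valuations is in particular $m/k$-approximation for additive valuations, and the hypothesis $n \geq 2m/k$ gives $m/k \leq n/2$. Applying \Cref{prop:SubadditiveManyAllocations}---or a mild adaptation for approximation $n/2$ rather than $n/3$, achieved by a minor change of constants in its underlying hard-instance construction---yields $|\calA| \geq 2^{\Omega(m/n)} = 2^{\Omega(k)}$. Comparing the upper and lower bounds with appropriately tuned constants then delivers the contradiction, establishing the desired shattering.

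The main technical obstacle will be absorbing the $\binom{n}{2} = \Theta((m/k)^2)$ factor incurred by the pigeonhole step, since the bare $2^{\Omega(k)}$ lower bound may be too weak to dominate the upper bound at $d = \Omega(kn^2)$. To close this gap, I would either strengthen the lower bound by exhibiting a richer family of hard monotone instances---for example, $n$ single-minded bidders with disjoint target sets of size $\Theta(k)$, where the $m/k$-approximation forces $\calA$ to satisfy at least two bidders per instance, and a double-counting argument over the $\approx m!/((k/2)!)^n$ configurations substantially sharpens the lower bound on $|\calA|$---or else bypass the pigeonhole by directly identifying a specific pair $T^*$ whose projection $\calA|_{T^*}$ is forced to be rich by the approximation guarantee, using a Sauer--Shelah-style argument applied to pair-projections rather than all of $\calA$. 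The hypothesis $n \geq 2m/k$ is calibrated precisely so that one of these refinements closes.
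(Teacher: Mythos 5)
Your main route does not work, and the obstacle you flag at the end is in fact fatal rather than a tunable-constants issue. After the pigeonhole over pairs, the hypothesis you can refute is only that $\calA$ $2$-shatters no set of size $d = \Omega(k\binom{n}{2}) = \Omega(kn^2)$, and \Cref{prop:MultiVCDimension} then gives $\abs{\calA} \leq (emn^2/d)^d = 2^{\Theta(kn^2\log(m/k))}$. With $n = \Theta(m/k)$ this exceeds even the trivial bound $\abs{\calA} \leq (n+1)^m = 2^{O(m\log n)}$, so the upper bound is vacuous and \emph{no} lower bound on $\abs{\calA}$ --- not the $2^{\Omega(k)}$ from \Cref{prop:SubadditiveManyAllocations}, nor any sharpened count over single-minded configurations --- can produce a contradiction. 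The global-counting framework of \Cref{lemma:RichAllocationsSubmodular} simply cannot absorb the $\binom{n}{2}$ loss, so your first proposed fix is a dead end in principle, not merely in the constants.

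Your second fallback (work directly with a pair-projection and apply Sauer--Shelah to it) is the right direction and is essentially what the paper does: it takes single-minded valuations $v_i(S) = \mathbbm{1}_{S \supseteq B_i}$ induced by partitions of $M$ into parts of size $k/2$, notes that $m/k$-approximation forces two bidders to be satisfied, and pigeonholes over the $\binom{n}{2} < m^2$ pairs to fix a pair $T = \{i,j\}$ served for a $1/m^2$ fraction of instances. But your sketch omits the decisive second step: one must pigeonhole \emph{again} over the union $R = B_i \cup B_j$, a set of only $k$ items, to conclude $\abs{T^R \cap \calA\vert_R} \geq \binom{k}{k/2}/m^2 = 2^{\Omega(k)}$, and only then apply Sauer's lemma \emph{over the $k$-element universe $R$}. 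Applying Sauer to pair-projections over all $m$ items with $2^{\Omega(k)}$ distinct restrictions, as your sketch suggests, yields a shattered set of size only $\Omega(k/\log m)$, since $\sum_{i \leq d}\binom{m}{i} \geq 2^{\Omega(k)}$ forces merely $d\log m = \Omega(k)$ --- i.e., you lose exactly the logarithmic factor this lemma exists to save. Without the restriction to a fixed $k$-item ground set before invoking Sauer, neither of your routes establishes shattering of a set of size $\Omega(k)$.
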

\begin{proof}
Suppose for simplicity that $n = 2m/k$ is an integer and $k$ is even. Let $\calB \subseteq N^M$ be the collection of partitions such that each part has $k/2$ items. The valuations \emph{induced} by $B \in \calB$ are $v_i(R) \coloneqq \mathbbm{1}_{R \supseteq B_i}$. Consider all partitions $B \in N^M$ where $\abs{B_i} = k/2$ for all $i \in N$, and consider the valuations $v_i(S) \coloneqq \mathbbm{1}_{S \supseteq B_i}$ induced by each such $B$.

Since the MIR mechanism for $\calA$ is $m/k$-approximate, for every set of valuations induced by $B \in \calB$, there exist two bidders who get value $1$. Since there are only $\binom{n}{2} < m^2$ pairs of bidders, there exists a fixed pair of bidders $T = \{i, j\}$ such that the MIR mechanism for $\calA$ gives bidder $i$ items $B_i$ and bidder $j$ items $B_j$ for a $1/m^2$ fraction of the $B \in \calB$.

Now, partition $\calB$ into $\{\calB_R : R \subseteq M,\, \abs{R} = k\}$ so that $B \in \calB_S$ if $B_i \cup B_j = R$. It follows that there exists some $\calB_R$ such that the MIR mechanism for $\calA$ gives bidder $i$ items $B_i$ and bidder $j$ items $B_j$ for a $1/m^2$ fraction of the $B \in \calB_R$.

Therefore, there exists $R \subseteq M$ of size $k$ such that $\abs{T^R \cap \calA \vert_R} \geq \abs{\calB_R \vert_R}/m^2 = \binom{k}{k/2}/m^2 = 2^{\Omega(k)}$. Applying Sauer's Lemma to $T^R \cap \calA \vert_R$, there exists $S$ of size $\Omega(k)$ such that $\calA$ shatters $(S, T)$.
\end{proof}

\begin{remark}
    We have demonstrated that good approximations in either the submodular or arbitrary monotone valuation setting imply a rich allocation bank in the sense that there is a large set of $2$-shattered items. This unified view then allows us (in the next section) to derive, from $2$-shattering, a useful structure for a lower bound, which then implies lower bounds for both submodular and arbitrary monotone valuations. However, we note that enough structure is already recovered by \Cref{lemma:RichAllocationsGeneral} to give a direct lower bound for monotone valuations, e.g., by noting that it requires $2^{\Omega(k)}$ communication to maximize the sum of two monotone functions over the $k$ items shattered between $2$ bidders~\cite{NisanS06}.
\end{remark}


\subsection{Part 2a: Rich Allocation Bank Contains Structure}

For simplicity, let $S \coloneqq [s]$ and $T \coloneqq [t]$. Our aim now is to find a structure within the $2$-shattered $(S, T)$ which will be suitable for embedding \textsc{SetDisjointness}. Let $T_j$ for $j \in S$ be the pair of bidders that item $j$ can go to. In other words, $\bigtimes_{j \in S} T_j$ is the allocation bank witnessing the $2$-shattering. Our end goal will be to prove the following:

\begin{claim} \label{claim:Structure}
    For $t = O(s/\log(s))$ and $z = 2^{\Theta(s/t)}$, there exist $B^{(1)}, \dots, B^{(z)} \in \bigtimes_{j \in S} T_j$ such that
    \begin{itemize}[topsep=0pt,itemsep=0pt]
        \item There exists $V \subseteq T$ such that any $f : V \to [z]$ is constant if and only if $B^{(f(1))}_1, \dots, B^{(f(t))}_t$ are pairwise disjoint.
        
        \item $\abs{B^{(1)}_i} = \dots = \abs{B^{(z)}_i}$ for all $i \in T$.
    \end{itemize}
\end{claim}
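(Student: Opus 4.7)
View the $2$-shattering as defining a multigraph $G$ on $T$ in which each $j \in S$ contributes an edge $T_j$; write $n_{ii'} := |S_{ii'}|$ for the resulting pair multiplicities. My plan is to extract a subset $V \subseteq T$ and a connected spanning subgraph $H$ of $G[V]$ whose every edge has multiplicity $\Omega(s/t)$, then build $B^{(1)}, \dots, B^{(z)}$ by placing a Sperner middle-layer antichain on each edge of $H$ while fixing a canonical orientation on every other item.

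First I would establish the structural lemma. Since $t = O(s/\log s)$ gives $s/t \geq \log s$, an averaging argument identifies a vertex $i^* \in T$ of weighted degree $\Omega(s/t)$. Depending on whether this degree concentrates on one heavy neighbor or spreads out, one obtains either a ``fat edge'' $\{i^*, i'\}$ with $n_{i^* i'} = \Omega(s/t)$ (use $V = \{i^*, i'\}$) or a star of heavy edges (use $V = \{i^*\} \cup \{\text{heavy neighbors of } i^*\}$); in more general configurations, iteratively peeling low-multiplicity edges or vertices of spread-out degree should produce the required core $H$. The hidden constant in $z = 2^{\Theta(s/t)}$ is chosen so that $z \leq \binom{n_e}{\lfloor n_e/2 \rfloor}$ for every $H$-edge $e$, which holds by the standard Sperner bound since $n_e = \Omega(s/t)$.

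For the construction, on each edge $e = \{i, i'\} \in H$ I would pick $z$ distinct equal-size subsets $X^{(1)}_{ii'}, \dots, X^{(z)}_{ii'} \subseteq S_{ii'}$ of size $\lfloor n_{ii'}/2 \rfloor$ and set $B^{(\ell)}(j) = i$ iff $j \in X^{(\ell)}_{ii'}$. For items $j$ not on any edge of $H$ (touching $T \setminus V$, or on a low-multiplicity pair inside $V$), orient $j$ identically across all $\ell$. Verifying the claim is then routine: the equal-weights property holds because each $|B^{(\ell)}_i|$ decomposes into an $H$-contribution $\sum_{i' : \{i,i'\} \in H} \lfloor n_{ii'}/2 \rfloor$ (independent of $\ell$) plus a fixed non-$H$ contribution. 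For the biconditional, constant $f$ yields the pairwise disjoint sets of the single valid allocation $B^{(\ell_0)}$, while a non-constant $f$ must, by connectivity of $H$, admit an $H$-edge $\{i, i'\}$ with $f(i) \neq f(i')$; since $X^{(f(i))}_{ii'}$ and $X^{(f(i'))}_{ii'}$ are distinct sets of equal size they are incomparable, hence some $j \in X^{(f(i))}_{ii'} \setminus X^{(f(i'))}_{ii'}$ is double-claimed by $i$ and $i'$, destroying disjointness.

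The hard part will be the structural lemma itself. A naive weighted-degree pigeonhole gives one high-degree vertex but not a heavy edge when that degree is spread across many low-multiplicity neighbors, and in adversarial near-regular instances the average pair multiplicity $\approx s/t^2$ is far below the $s/t$ we need. Dealing with this likely requires leveraging the flexibility in choosing the $T_j$'s witnessing the shattering (consolidating items onto preferred pairs when multiple valid pairs exist), or a more delicate iterative/contraction argument that peels vertices of spread-out degree until a heavy-edge core emerges. Once $H$ is in hand, however, the Sperner antichain construction together with the connectivity of $H$ delivers both properties of the claim cleanly.
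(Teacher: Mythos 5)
There is a genuine gap, and it is exactly where you suspected: the structural lemma at the heart of your plan is false in general. The pairs $\{T_j\}_{j \in S}$ witnessing the $2$-shattering form an arbitrary multigraph with $s$ edges on $t$ vertices, and since $\binom{t}{2} \gg s$ in the relevant regime, this multigraph can be \emph{simple} (every pair multiplicity $n_{ii'} \in \{0,1\}$), e.g.\ a near-regular expander-like graph. In that case there is no edge of multiplicity $\Omega(s/t)$ anywhere, no amount of peeling or re-choosing witnesses creates one (the adversary can take $\calA\vert_S$ to be exactly $\bigtimes_j T_j$ for a simple graph, leaving no flexibility in the $T_j$'s), and your Sperner-antichain construction has nothing to act on. Your construction is essentially the classical two-bidder argument (where the whole edge set is one fat edge of multiplicity $s$), and it does not extend past the case where the shattering concentrates on few pairs.

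The paper's proof sidesteps this by working with \emph{cuts} rather than individual edges: it extracts a connected subgraph $H(V,E)$ whose every $r$-way cut has $\Omega((r-1)s/t)$ edges --- a condition a simple near-regular graph satisfies even though no single pair is heavy --- and then argues probabilistically, sampling $B^{(1)},\dots,B^{(z)}$ from a sub-bank $\calA$ restricted to allocations with fixed part-sizes $a_1,\dots,a_t$ (this is how the second bullet is obtained, rather than by your per-edge balancing). A non-constant $f$ then induces an $r$-way cut, and the probability that the sampled allocations can be merged is shown to be $2^{-\Omega(|C|)}$ via the quantity $p(H,C,\calA)$, which controls the correlations introduced by the size restriction; a cut-counting lemma (at most $|V|^{4c}$ cuts of size $\le c\gamma_2(H)$) makes the union bound go through. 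So the conflict that kills a non-constant $f$ is distributed across the many multiplicity-one edges of a cut, not localized to one incomparable pair of Sperner sets on a single heavy edge. To repair your argument you would need to replace ``heavy edge'' with ``heavy cut'' throughout and find a deterministic analogue of the antichain property for cut-crossing edge sets under the equal-size constraint --- at which point you have essentially reconstructed the paper's probabilistic argument.
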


The next subsection makes clear why this structure allows us to embed \textsc{SetDisjointness}. At a high level, the first bullet ensures that we can encode valuations for the bidders using sets $X_1, \dots, X_t$ such that a certain welfare can be attained if and only if $\bigcap_{i \in V} X_i \ne \emptyset$. The first bullet alone suffices to give a lower bound for XOS valuations, and captures the main idea behind the lower bound. The second bullet introduces highly non-trivial technical challenges, but is required to extend the lower bound to submodular valuations. We only provide the proof of the claim in its entirety, but will comment on when simplifications can be made by not satisfying the second bullet.

\bigskip
\textbf{Proof of \Cref{claim:Structure}.} The first step is (with a slight abuse of notation) to consider $\calA \subseteq \bigtimes_{j \in S} T_j$ such that there exist $a_1, \dots, a_t$ such that for all $A \in \calA$, $\abs{A_i} = a_i$ for all $i$. In other words, $\calA$ satisfies the second bullet. Observe that there are $(s+1)^t \leq 2^{\varepsilon s}$ possible values for $a_1, \dots, a_t$ for any constant $\varepsilon > 0$, and hence there exists $\calA$ such that $\abs{\calA} \geq 2^{-\varepsilon s}\abs{\bigtimes_{j \in S} T_j} = 2^{(1-\varepsilon)s}$. Thus, we can achieve the second bullet without losing too many allocations from our bank.

\medskip
\textbf{Step I: A Helpful Interpretation.} Intuitively, we may think of the first bullet as saying that there exist $z$ allocations such that there is no way to combine two or more allocations into another valid allocation. In other words, the function $f$ chooses which allocation each bidder in $V$ receives a set from, and $f$ will always cause some item to be allocated to two bidders unless $f$ assigns each bidder to the same allocation.

Because each item can only go to one of two bidders, only items that go between bidders who are assigned different allocations under $f$ are able to cause the allocation induced by $f$ to be invalid. As such, it is helpful to interpret the $2$-shattering structure as a graph, and reason about the viability of mixing allocations across cuts in the graph.

\begin{definition}
    For $V \subseteq T$ and $E \subseteq S$ such that $\bigcup_{j \in E} T_j \subseteq V$, define $G(V, E)$ to be the graph where each bidder denotes a vertex and each item $j \in E$ denotes an edge between bidders $T_j$.
\end{definition}
\vsthm

Our eventual goal is to sample $z$ allocations independently and uniformly from $\calA$, and show that the probability that the first bullet is satisfied is nonzero. Had we instead considered $\bigtimes_{j \in S} T_j$ rather than $\calA$ (i.e., disregarded the second bullet), then observe that sampling uniformly from $\bigtimes_{j \in S} T_j$ assigns each item $j$ to a bidder in $T_j$ independently and uniformly, and hence the probability that independently sampled allocations can be combined is exponentially small in the number of edges crossing the cut separating the bidders who are assigned to different allocations. Then, it roughly suffices to take a subgraph with min-cut $\Omega(k)$, and apply some clever union bounds.

However, when we instead consider $\calA$, restricting to a $2^{-\varepsilon s}$ fraction of the original $2$-shattered structure introduces correlations between items when we sample uniformly from $\calA$, and hence a more complex argument is needed. Importantly, the analysis focuses on the following quantity.

\begin{definition}
    For a graph $G \coloneqq G(V, E)$, edges $C \subseteq E$ (typically, we will take $C$ to be the edges across a cut), and allocation bank $\calA \subseteq \bigtimes_{j \in E} T_j$, define $p(G, C, \calA)$ to be the maximum fraction of allocations that remain after fixing the allocation of the items in $C$. In other words,
    \[
        p(G, C, \calA) \quad \coloneqq \quad \max_{A \in \calA \vert_C} \frac{\abs{\{B \in \calA : A = B \vert_C\}}}{\abs{\calA}} \enspace .
    \]
\end{definition}
\vsthm

Observe that $p(G, C, \bigtimes_{j \in E} T_j) = 2^{-\abs{C}}$. To eventually apply a similar probabilistic argument as we would apply to a $2$-shattered structure, we will try to find a subgraph $H \coloneqq H(V, E)$ of $G(T, S)$ such that the min-cut of $H$ is $\Omega(k)$ and $p(H, C, \calA \vert_E) \approx 2^{-\abs{C}}$ for all cuts $C$.

\medskip
\textbf{Step II: Finding a Good Subgraph.} From here on, we will denote cuts by the set of their edges. Additionally, for a graph $G$, let $\gamma_r(G)$ denote its $r$-way min-cut, i.e., the smallest cut which partitions the bidders into $r$ non-empty parts.

\begin{lemma} \label{lemma:NiceGraph}
    Let $\calA \subseteq \bigtimes_{j \in S} T_j$ and $\abs{\calA} \geq 2^{(1-\varepsilon)s}$ for some constant $\varepsilon > 0$. Then there exists $V \subseteq T$ where $\abs{V} \geq 2$ and $E \subseteq S$ such that $H \coloneqq H(V, E)$ satisfies $\gamma_r(H) \geq \varepsilon(r-1)s/t$ for all $r$, and for any edges $C$ across a cut in $H$, $p(H, C) \leq 2^{-(1-2\varepsilon)s}$.
\end{lemma}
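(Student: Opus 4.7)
My approach is iterative pruning: I incrementally modify a current subgraph $H(V, E)$ and a restricted sub-bank $\calA' \subseteq \calA$ until both conditions of the lemma hold.

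Setup. Following the remark preceding the lemma, I first restrict $\calA$ to the largest sub-bank all of whose allocations share a common bidder-wise size profile $(a_1, \ldots, a_t)$. Since there are at most $(s+1)^t \leq 2^{\varepsilon s / 2}$ such profiles, pigeonhole yields a sub-bank of size at least $2^{(1 - 3\varepsilon/2) s}$, and I absorb the loss by slightly shrinking $\varepsilon$. Then initialize $V = T$, $E = S$, and $\calA' = \calA$.

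Main loop. While one of the two target conditions fails, apply one of:
\begin{itemize}
\item \textbf{(Thin cut.)} If some $r$-way partition of $V$ in $H$ has fewer than $\varepsilon(r-1)s/t$ crossing edges, pick the part $V_i$ minimizing the internal edge count (at most $\abs{E}/r$ by averaging) and delete it along with its incident edges.
\item \textbf{(Concentrated cut.)} If some $2$-way cut $C$ of $H$ satisfies $p(H, C, \calA'\vert_E) > 2^{-(1-2\varepsilon)s}$, let $\alpha$ be the most popular pattern on $C$, replace $\calA'$ with $\{B \in \calA' : B\vert_C = \alpha\}$, and delete the (now fixed) edges of $C$ from $E$.
\end{itemize}

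Termination and nontriviality. To show that the loop halts and returns a valid $H$ with $\abs{V} \geq 2$, I track the potential
\[
\Phi \coloneqq \log_2 \abs{\calA' \vert_E} - (1 - 2\varepsilon) \abs{E},
\]
which starts at $\Omega(\varepsilon s)$ by the setup. A thin-cut move removes $\Delta_E$ edges and drops $\log_2 \abs{\calA'\vert_E}$ by at most $\Delta_E$ (via the standard projection inequality $\log_2 \abs{\calA'\vert_{E_{\mathrm{new}}}} \geq \log_2 \abs{\calA'\vert_E} - \Delta_E$), so it changes $\Phi$ by at most $-O(\varepsilon) \cdot \Delta_E$ and thus keeps $\Phi \geq \Omega(\varepsilon s)$ throughout. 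A concentrated-cut move removes $\abs{C}$ edges and multiplies $\abs{\calA'\vert_E}$ by the popular-fraction $p$, so it shifts $\Phi$ by $\log_2 p + (1-2\varepsilon)\abs{C}$; since $\log_2 p > -(1-2\varepsilon)s$ and $\sum \abs{C}$ is bounded by the total initial edge count, summing over moves shows that only finitely many concentrated-cut moves can fire before $\Phi$ would bottom out, forcing termination. At that point $\abs{V} \geq 2$ and $E \neq \emptyset$ follow immediately from $\Phi > -\Omega(s)$.

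The main obstacle. The delicate point is the amortized accounting of concentrated-cut moves: each such move may cost up to $(1-2\varepsilon)s$ bits of $\log_2\abs{\calA'\vert_E}$, yet is triggered by the mere existence of any popular pattern on any $2$-way cut, so bounding their total count requires combining the equal-sizes normalization (which rigidly constrains the geometry of $\calA'$) with an $r$-dependent trade-off against thin-cut moves. Verifying that this trade-off closes correctly for every $r$ simultaneously -- so that the final graph is large enough to satisfy the multi-way min-cut bound \emph{and} no concentrated cut remains -- is the technical heart of the proof.
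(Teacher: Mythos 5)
Your proposal is a plan rather than a proof: the step you yourself defer as ``the technical heart'' --- showing the loop terminates with a nontrivial connected $H$ on which every $r$-way cut is large \emph{and} no concentrated cut remains --- is precisely the content of the lemma, and the potential argument you sketch does not close it. Concretely: (i) each thin-cut move may cost $2\varepsilon\Delta_E$ of potential, and since the edges deleted over the whole run can total $s$, the cumulative cost can reach $2\varepsilon s$, exceeding the initial potential ($\approx\varepsilon s$), so ``$\Phi\geq\Omega(\varepsilon s)$ throughout'' does not follow; (ii) the terminal claim that $\abs{V}\geq 2$ and $E\neq\emptyset$ ``follow immediately from $\Phi>-\Omega(s)$'' is false, since the empty graph has $\Phi=0$, and nothing in your accounting rules out the graph disintegrating (isolated vertices create zero-edge cuts, so thin-cut moves keep firing); (iii) your concentrated-cut trigger $p>2^{-(1-2\varepsilon)s}$ (inherited from what is evidently a typo in the statement; the bound proved and used later in Lemma~\ref{lemma:BadCut} is $2^{-(1-2\varepsilon)\abs{C}}$) can never be falsified for any cut with $\abs{C}<(1-2\varepsilon)s$, because $p\geq 2^{-\abs{C}}$, so as written the loop cannot stop at a nontrivial graph; with the corrected threshold your concentrated-cut moves never decrease $\Phi$, but the losses in (i) are still unpaid. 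Two smaller points: the equal-size-profile restriction belongs to the step \emph{before} this lemma (its hypothesis already gives $\abs{\calA}\geq 2^{(1-\varepsilon)s}$), and redoing it while shrinking $\varepsilon$ changes the constants you must output; and since you restrict to sub-banks $\calA'$ by fixing popular patterns, you would prove the concentration bound only for $\calA'\vert_E$ rather than $\calA\vert_E$ (tolerable downstream, but it should be said).

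The paper's proof avoids all of this by merging your two moves into a single edge-deletion rule with a combined, $r$-dependent threshold: while some $r$-way cut $C$ of a connected component satisfies $p>2^{-\varepsilon(r-1)s/t}\cdot 2^{-(1-2\varepsilon)\abs{C}}$, delete the edges of $C$ (no vertex deletions, no restriction of $\calA$ --- only projections, using $\abs{\calA\vert_{S'\setminus C}}\geq p\cdot\abs{\calA\vert_{S'}}$). If the process ever fully disconnected the graph, then summing exponents with $\sum_i(r_i-1)<t$ and $\sum_i\abs{C_i}=s$ gives $1=\abs{\calA\vert_\emptyset}>2^{-\varepsilon s}\,2^{-(1-2\varepsilon)s}\,2^{(1-\varepsilon)s}=1$, a contradiction; hence some component with at least two vertices survives, every $r$-way cut of it satisfies $p\leq 2^{-\varepsilon(r-1)s/t-(1-2\varepsilon)\abs{C}}$, and the min-cut bound $\abs{C}\geq\varepsilon(r-1)s/t$ then comes for free from $p\geq 2^{-\abs{C}}$ --- no separate thin-cut mechanism is needed. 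If you want to salvage your two-move scheme, the concentration trigger itself must carry the $r$-dependent slack $2^{-\varepsilon(r-1)s/t}$, at which point it collapses back into the paper's single rule.
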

\begin{proof}
Initialize $S' \coloneqq S$, and while there exists an $r$-way cut $C$ across some connected component of $G \coloneqq G(T, S')$ such that $p(G, C, \calA \vert_{S'}) > 2^{-\varepsilon(r-1)s/t} 2^{-(1-2\varepsilon)\abs{C}}$, set $S' \coloneqq S' \setminus C$. Observe that by definition of $p(G, C, \calA \vert_{S'})$, we have $\abs{\calA \vert_{S' \setminus C}} \geq p(G, C, \calA \vert_{S'}) \abs{\calA \vert_{S'}}$.

Suppose for contradiction that the process results in a disconnected graph after removing the $r_1, \dots, r_\ell$-way cuts $C_1, \dots, C_\ell$. Then since removing an $r$-way cut increases the number of connected components by $r-1$, we have $\sum_{i \in [\ell]} (r_i-1) < t$. Further, there are no loops in $G$, so $S' = \emptyset$ and $\sum_{i \in [\ell]} \abs{C_i} = s$. Therefore,
\[
    1 \quad = \quad \abs{\calA \vert_\emptyset} \quad > \quad 2^{-\varepsilon s/t \sum_{i \in [\ell]} (r_i - 1)} 2^{-(1-2\varepsilon)\sum_{i \in [\ell]} \abs{C_i}} \abs{\calA} \quad > \quad 2^{-\varepsilon s} 2^{-(1-2\varepsilon)s} 2^{(1-\varepsilon)s} \quad = \quad 1 \enspace ,
\]
a contradiction. Thus, the process terminates with a connected component $H \coloneqq H(V, E)$ (which is a subgraph of the original $G(T, S)$) such that $\abs{V} \geq 2$ and every $r$-way cut $C$ across $H$ satisfies $p(H, C, \calA \vert_E) \leq 2^{-\varepsilon(r-1)s/t} 2^{-(1-2\varepsilon)\abs{C}} \leq 2^{-(1-2\varepsilon)\abs{C}}$. Additionally, $p(H, C, \calA \vert_E) \geq 2^{-\abs{C}}$, so the first exponential ensures that $\abs{C} \geq \varepsilon(r-1)s/t$ for all $C$. Thus, $H$ satisfies the desired properties.
\end{proof}

\medskip
\textbf{Step III: A Probabilistic Construction.} Before we proceed to the final probabilistic construction, we prove a property of the subgraph $H$ promised by \Cref{lemma:NiceGraph} that makes clear why we need a structure like $H$ to be contained in $G(T, S)$.

\begin{lemma} \label{lemma:BadCut}
    Let $H \coloneqq H(V, E)$ satisfy the condition in \Cref{lemma:NiceGraph} for $\varepsilon = 1/50$. Let $f : V \to [r]$ map bidders to parts of an $r$-way cut $C$. Then for independent $B^{(1)}, \dots, B^{(r)} \sim \calA \vert_E$, the probability that $\{B^{(f(i))}(i) : i \in V\}$ are pairwise disjoint is at most $2^{-\abs{C}/3}$.
\end{lemma}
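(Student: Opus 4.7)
The plan is to reformulate pairwise disjointness in terms of ``same-side'' random sets defined separately for each part of the cut, which will let us invoke the cut bound from \Cref{lemma:NiceGraph} $r$ different times (once per part, on the 2-way cut that isolates that part) rather than once on the full $r$-way cut. Let $V_1, \dots, V_r$ be the parts of the $r$-way cut $C$, and for each $\ell \in [r]$, let $C^*_\ell \coloneqq \bigcup_{q \neq \ell} C_{\ell q}$ be the cut edges incident to $V_\ell$. For a random $B^{(\ell)} \sim \calA\vert_E$, let $X_\ell \coloneqq \{j \in C^*_\ell : B^{(\ell)}(j) \in V_\ell\}$ be the set of cut edges that $B^{(\ell)}$ allocates to their $V_\ell$-endpoint.

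The first step is a direct unpacking of definitions: for $j \in C_{pq}$ with $T_j = \{a,b\}$, $a \in V_p$, $b \in V_q$, the item $j$ lies in both $B^{(f(a))}(a) = B^{(p)}(a)$ and $B^{(f(b))}(b) = B^{(q)}(b)$ precisely when $j \in X_p \cap X_q$, so the pairwise disjointness of $\{B^{(f(i))}(i) : i \in V\}$ is equivalent to the pairwise disjointness of $X_1, \dots, X_r$. The second step is to observe that $C^*_\ell$ is exactly the edge set of the 2-way cut $\{V_\ell, V \setminus V_\ell\}$ in $H$, so \Cref{lemma:NiceGraph} gives $p(H, C^*_\ell, \calA\vert_E) \leq 2^{-(1-2\varepsilon)\abs{C^*_\ell}}$. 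Because $B^{(\ell)}\vert_{C^*_\ell}$ is in bijection with $X_\ell$ (each edge $j \in C^*_\ell$ has only two possible endpoints, one in $V_\ell$ and one not), this upgrades to $\Pr[X_\ell = S] \leq 2^{-(1-2\varepsilon)\abs{C^*_\ell}}$ for every $S \subseteq C^*_\ell$.

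The third step is a union bound over pairwise-disjoint tuples $(X_1, \dots, X_r)$. For each $j \in C_{pq}$, the pair $(\mathbbm{1}[j \in X_p],\mathbbm{1}[j \in X_q])$ has 3 good values (everything except $(1,1)$), and $j \notin C^*_\ell$ for $\ell \notin \{p,q\}$, so the total number of good tuples is exactly $3^{\abs{C}}$. Since each cut edge is incident to exactly two parts, $\sum_\ell \abs{C^*_\ell} = 2\abs{C}$, and independence of the $B^{(\ell)}$'s gives
\[
    \Pr[\text{pairwise disjoint}] \quad \leq \quad 3^{\abs{C}} \cdot \prod_{\ell \in [r]} 2^{-(1-2\varepsilon)\abs{C^*_\ell}} \quad = \quad 2^{(\log_2 3 - 2 + 4\varepsilon)\abs{C}}.
\]
Plugging in $\varepsilon = 1/50$ yields exponent $\log_2 3 - 2 + 4/50 \leq 1.585 - 2 + 0.08 = -0.335 < -1/3$, completing the bound.

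The main (mild) obstacle is the tightness of the constants: the exponent $\log_2 3 - 2 + 4\varepsilon$ must beat $-1/3$, which is why $\varepsilon$ has to be as small as $1/50$. Conceptually, the key move is to split the $r$-way cut into the $r$ overlapping 2-way cuts $C^*_\ell$, which gives $2|C|$ total ``concentration'' in the product of probabilities; combined with the $3^{|C|}$ count of good tuples (rather than $4^{|C|}$), this is exactly what creates the slack that a small constant $\varepsilon$ can absorb.
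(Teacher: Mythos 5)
Your proof is correct and follows essentially the same route as the paper's: decompose the $r$-way cut into the $r$ two-way cuts isolating each part, count the $3^{|C|}$ conflict-free configurations of the cut edges, and multiply the per-part bounds $p(H,C^*_\ell,\calA\vert_E)\le 2^{-(1-2\varepsilon)|C^*_\ell|}$ using independence and $\sum_\ell |C^*_\ell| = 2|C|$, with the same final arithmetic at $\varepsilon = 1/50$. The explicit reformulation via the sets $X_\ell$ is just a cleaner way of stating what the paper does with the restrictions $B^{(\ell)}\vert_{C_\ell}$.
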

\begin{proof}
Let $V_1, \dots, V_r$ be the partition of $V$ where $f(i) = \ell$ for any bidder $i \in V_\ell$. Let $C_\ell$ for $\ell \in [r]$ be the $2$-way cut between $V_\ell$ and $V \setminus V_\ell$.

Observe that there are $3^{\abs{C}}$ $r$-tuples of $B^{(1)} \vert_{C_1}, \dots, B^{(r)} \vert_{C_r}$ such that $B^{(f(i))}(i)$ for $i \in V$ are pairwise disjoint, because for each $j \in C$ which can go to either $V_\ell$ or $V_{\ell'}$, we need either $B^{(\ell)}$ to allocate $j$ to $V_{\ell'}$, or $B^{(\ell')}$ to allocate $j$ to $V_\ell$, or both.

On the other hand, by definition of $p(H, C_\ell, \calA \vert_E)$ and the upper bound on it given by \Cref{lemma:NiceGraph}, the probability of any such $r$-tuple being sampled is at most
\[
    \prod_{\ell \in [r]} p(H, C_\ell, \calA \vert_E) \quad \leq \quad \prod_{\ell \in [r]} 2^{-(1-2\varepsilon)\abs{C_\ell}} \quad = \quad 2^{-2(1-2\varepsilon)\abs{C}} \enspace ,
\]
so the probability that the $B^{(f(i))}(i)$ are pairwise disjoint is at most $3^{\abs{C}} 2^{-2(1-2\varepsilon)\abs{C}} \leq 2^{-\abs{C}/3}$.
\end{proof}

In other words, $H$ has the property that for every $r$-way cut in $H$, the probability that $r$ independently sampled allocations from $\calA \vert_E$ can be combined into a feasible allocation is exponentially small in the size of the cut, which is exactly the property we wished to replicate from the $2$-shattered structure! We can now proceed to the probabilistic construction. First, we give a graph theoretic result that will be needed for the union bound (proof in \Cref{sec:MissingProofs})

\begin{lemma}[restate=CutCounting,name=] \label{lemma:CutCounting}
    For any graph $G = (V, E)$ (possibly with parallel edges, but no loops) and $c \in \mathbb{Z}_{\geq 1}$, the number of $r$-way cuts with at most $c\gamma_2(G)$ edges is at most $\abs{V}^{4c}$.
\end{lemma}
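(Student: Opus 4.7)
I would apply Karger's random edge-contraction argument, after first restricting the relevant range of $r$ by a simple counting observation. Write $n \coloneqq \abs{V}$ for brevity. For any $r$-way cut $C$ with parts $V_1,\dots,V_r$, the edges between $V_i$ and $V \setminus V_i$ form a $2$-way cut of size at least $\gamma_2(G)$; summing over $i \in [r]$ and dividing by $2$ (since each edge is counted twice) gives $\abs{C} \geq r\gamma_2(G)/2$. Combined with $\abs{C} \leq c\gamma_2(G)$ this forces $r \leq 2c$, so it suffices to prove a bound for each fixed $r \in \{2,\dots,2c\}$ and sum.

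Fix such an $r$ and consider the following procedure: repeatedly pick a uniformly random edge of $G$ and contract it, stopping as soon as exactly $2c$ super-vertices remain. Stopping at $2c$ rather than at $r$ is essential, because the survival bound $1 - 2c/v$ must stay positive throughout. Since contraction cannot decrease $\gamma_2(G)$, when $v$ super-vertices remain the contracted graph contains at least $v\gamma_2(G)/2$ edges, so for any fixed $r$-way cut $C$ of size at most $c\gamma_2(G)$ the probability of contracting an edge of $C$ at that step is at most $2\abs{C}/(v\gamma_2(G)) \leq 2c/v$. Iterating from $v=n$ down to $v = 2c+1$,
\[
    \Pr[\text{no edge of } C \text{ is contracted}] \;\geq\; \prod_{v=2c+1}^{n} \frac{v - 2c}{v} \;=\; \frac{1}{\binom{n}{2c}} \enspace.
\]

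When $C$ survives, every one of the $2c$ final super-vertices is contained in a single part of $C$, so the output $2c$-partition refines the $r$-partition induced by $C$. The number of $r$-partitions that a fixed $2c$-partition refines is at most $r^{2c}$ (each super-vertex picks one of $r$ parts of $C$). Taking expectations and combining with the survival lower bound,
\[
    \Big|\{ r\text{-way cuts of size} \leq c\gamma_2(G)\}\Big| \cdot \frac{1}{\binom{n}{2c}} \;\leq\; r^{2c},
\]
so there are at most $r^{2c}\binom{n}{2c} \leq e^{2c} n^{2c}$ such cuts per $r$ (using $(2c)^{2c}/(2c)! \leq e^{2c}$ via Stirling). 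Summing over $r \in \{2,\dots,2c\}$ gives a total of at most $2c \cdot e^{2c} \cdot n^{2c}$, which is at most $n^{4c}$ as soon as $n$ exceeds a small absolute constant; the finitely many smaller cases are trivial to verify directly.

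The main obstacle I expect is the refinement bookkeeping. The natural choice of stopping the contraction at $r$ super-vertices fails because $1 - 2c/v$ goes non-positive for $v \leq 2c$, forcing one to stop at $2c$ super-vertices and argue via refinement rather than equality. The factor-of-two slack in the exponent (the $n^{4c}$ target as opposed to the $n^{2c}$ one might optimistically hope for) emerges naturally from this move, since each output $2c$-partition can be consistent with up to $r^{2c}$ of the cuts being counted. The remaining calculations are routine Karger-style probabilistic counting.
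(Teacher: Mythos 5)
Your proof is correct and follows essentially the same route as the paper: Karger-style random contraction down to $2c$ super-vertices, a survival probability of at least $1/\binom{n}{2c} \geq n^{-2c}$ per cut, and the observation that at most $r^{2c}$ cuts can survive simultaneously, giving the bound $r^{2c}\binom{n}{2c}$. The only (cosmetic) difference is the last numeric step, where the paper simply uses $r \leq \abs{V}$ to conclude $(\abs{V}r)^{2c} \leq \abs{V}^{4c}$ rather than your $r \leq 2c$ plus Stirling argument, which avoids your separate treatment of small $\abs{V}$.
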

\vsthm

\begin{lemma} \label{lemma:ProbabilisticStructure}
    Let $t = O(s/\log(s))$, $z = 2^{\Theta(s/t)}$, and $B^{(1)}, \dots, B^{(z)} \sim \calA \vert_E$. Then w.p. $> 0$, a function $f : V \to [z]$ is constant if and only if $\{B^{(f(i))}_i : i \in V\}$ are pairwise disjoint.
\end{lemma}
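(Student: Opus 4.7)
\textbf{Proof plan for Lemma~\ref{lemma:ProbabilisticStructure}.} The plan is to apply the probabilistic method: sample $B^{(1)}, \ldots, B^{(z)}$ independently and uniformly from $\calA\vert_E$ and show that with positive probability the desired biconditional holds for every $f:V\to[z]$. The ``only if'' direction is immediate because a constant $f \equiv \ell$ gives $\{B^{(f(i))}_i:i\in V\} = \{B^{(\ell)}_i : i \in V\}$, and these are pairwise disjoint as parts of a single valid allocation $B^{(\ell)}\in\calA\vert_E$. So it suffices to show that with positive probability, no non-constant $f$ produces pairwise-disjoint $\{B^{(f(i))}_i\}$.

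For a non-constant $f$ of image size $r \in [2,t]$, the level sets of $f$ form an (unordered) $r$-partition of $V$ and therefore define an $r$-way cut $C \subseteq E$ in $H$. By Lemma~\ref{lemma:BadCut}, the probability that $\{B^{(f(i))}_i\}$ are pairwise disjoint is at most $2^{-|C|/3}$. Each unordered $r$-partition of $V$ is induced by at most $z(z-1)\cdots(z-r+1) \le z^r$ non-constant $f$ (pick a distinct label in $[z]$ for each of the $r$ parts). Combining this with Lemma~\ref{lemma:CutCounting} (which bounds the number of $r$-way cuts with at most $\ell$ edges by $|V|^{4\ell/\gamma_2(H)} \le t^{4\ell/\gamma_2(H)}$), and the lower bound $\gamma_r(H) \ge \varepsilon(r-1)s/t$ from Lemma~\ref{lemma:NiceGraph}, the union bound becomes
\[
\Pr\!\left[\exists\text{ non-constant }f\text{ with pairwise-disjoint }B^{(f(i))}_i\right] \;\le\; \sum_{r=2}^{t} z^r \sum_{\ell \,\ge\, \varepsilon(r-1)s/t} t^{4\ell/\gamma_2(H)} \cdot 2^{-\ell/3}.
\]

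The remaining step is to calibrate the implicit constants in $t = O(s/\log s)$ and $z = 2^{\Theta(s/t)}$. Choose the constant in $t$ small enough that $\gamma_2(H) \ge \varepsilon s/t \ge 24\log t$; then $t^{4/\gamma_2(H)} \le 2^{1/6}$, so the inner series in $\ell$ is geometric with ratio $2^{-1/6}$ and is dominated (up to a constant factor) by its first term, which is at most $2^{-\varepsilon(r-1)s/(6t)}$. Taking $z = 2^{\alpha s/t}$ for any constant $\alpha < \varepsilon/12$, the outer sum becomes a geometric series in $r$ with ratio $2^{(s/t)(\alpha - \varepsilon/6)} < 1$, dominated by the $r=2$ term. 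The total bound is then $O(1) \cdot 2^{(s/t)(2\alpha - \varepsilon/6)}$, which tends to $0$ as $s\to\infty$. In particular, it is strictly less than $1$, so the desired $B^{(1)},\ldots,B^{(z)}$ exist.

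The main obstacle is absorbing two cost terms into the decay $2^{-|C|/3}$ provided by Lemma~\ref{lemma:BadCut}: the $z^r$ labelings for each cut, and the $t^{4\ell/\gamma_2(H)}$ combinatorial count of cuts. The second cost forces $\log t$ to be substantially smaller than $\gamma_2(H) \ge \varepsilon s/t$, which is exactly why the hypothesis requires $t = O(s/\log s)$ (with a sufficiently small implicit constant) rather than merely $t = O(s)$; the first cost similarly forces $\alpha$ to be a small constant fraction of $\varepsilon$. Once both calibrations are made consistently, the union bound closes.
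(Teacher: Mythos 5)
Your proposal is correct and follows essentially the same route as the paper: sample the $z$ allocations independently, handle the constant direction trivially, invoke Lemma~\ref{lemma:BadCut} for each fixed non-constant $f$, count cuts via Lemma~\ref{lemma:CutCounting}, and close the union bound by calibrating the constants in $t = O(s/\log s)$ and $z = 2^{\Theta(s/t)}$ so that both the $z^r$ labeling cost and the $t^{O(\ell/\gamma_2(H))}$ cut-counting cost are absorbed by the $2^{-\abs{C}/3}$ decay. The only cosmetic differences are that the paper indexes cut sizes by multiples $c$ of $\gamma_2(H)$ rather than directly by $\ell$, and bounds the labelings by $\binom{z}{r}$ rather than $z^r$; neither changes the argument.
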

\begin{proof}
Since $B^{(1)}, \dots, B^{(z)}$ are valid allocations, the forward direction holds trivially.

For the reverse direction, fix any function $f : V \to [z]$ which takes on a fixed set of $r \geq 2$ values, and let the $C$ be the $r$-way cut induced by $f$. Then by \Cref{lemma:BadCut}, the probability that $\{B^{(f(i))}(i) : i \in V\}$ are pairwise disjoint is at most $2^{-\abs{C}/3}$.

By \Cref{lemma:CutCounting}, the number of distinct cuts $C$ with $\abs{C} \leq c\varepsilon s/t \leq c\gamma_2(H)$ is at most $t^{4c}$. Thus, by a union bound over all possible values of $r$, all possible fixed subsets of $r$ values in $[z]$, all possible values of $c$, and all possible cuts satisfying these parameters, the probability that the desired condition is not satisfied is at most
\begin{align*}
    \sum_{r=2}^t \bigg(\binom{z}{r} \sum_{c=r-1}^s t^{4(c+1)} 2^{-(c\varepsilon s/t)/3}\bigg) \quad &\leq \quad \sum_{r=2}^t \bigg(z^r \sum_{c=r-1}^s 2^{8c(\log t-\varepsilon s/(6t))}\bigg) \\
    &\leq \quad \sum_{r=2}^t \bigg(2^{\Theta(rs/t)} \sum_{c=r-1}^s 2^{-\Theta(cs/t)}\bigg) \\
    &\leq \quad \sum_{r=2}^t 2^{-\Theta(rs/t)} \\
    &< \quad 1 \enspace ,
\end{align*}
where the second line follows because $\log t \leq \log s = O(s/t)$.
\end{proof}

Extending the allocations $B^{(1)}, \dots, B^{(z)} \in \calA \vert_E$ to any allocations in $\calA$ which agree on the allocation of the items $E$ completes the proof of \Cref{claim:Structure}.


\subsection{Part 2b: Structure Yields Set Disjointness Embedding}

We will now show that the MIR mechanism for $\calA$ is capable of solving \textsc{SetDisjointness}. The bidder valuations we will use are the following.

\begin{definition} \label{def:MildDesires}
    A \emph{mild-desires bidder for $\calF \subseteq 2^M$}, where all $F \in \calF$ are the same size $a$, has the valuation function
    \[
        v(G) \quad = \quad \begin{cases}
            2\abs{G} & \abs{G} < a \\
            2\abs{G} - \mathbbm{1}_{G \not\in \calF} & \abs{G} = a \\
            2a & \abs{G} > a
        \end{cases} \enspace .
    \]
    We say that such a bidder is \emph{satisfied} if their allocation gives them value $2a$, which occurs when they receive items $F \in \calF$, or any $a+1$ items. Mild-desires bidders have submodular valuations~\cite{NisanS06}.
\end{definition}

For each bidder $i \in V$, associate an input set $X_i \subseteq [z]$. For the bidders $i \not\in V$, associate the set $X_i = [z]$. Fix $B^{(1)}, \dots, B^{(z)} \in \calA$ promised by \Cref{claim:Structure}. By bullet two, we can let each bidder $i \in T$ be mild-desires for $\{B^{(\ell)}_i : \ell \in X_i\}$.

Observe that if we were only interested in a communication lower bound for MIR mechanisms for XOS valuations, then we would not have needed bullet two to hold, as we could instead let the valuation for bidder $i$ be the rank function of the downward-closed set family defined by $\{B^{(\ell)}_i : \ell \in X_i\}$ (which is an XOS function).

Then to solve \textsc{SetDisjointness} with the MIR mechanism for $\calA$, we only need to show that $\bigcap_{i \in V} X_i \ne \emptyset$ if and only if the optimal welfare over $\calA$ is $2s$. The forward direction holds because the allocation $B^{(x)}$ for some $x \in \bigcap_{i \in V} X_i$ satisfies every bidder.

For the reverse direction, by \Cref{claim:Structure}, we know that any collection of sets desired by the bidders $V$ are pairwise disjoint (i.e., results in a valid allocation) if and only if those collection of sets belong to the same allocation. To achieve $2s$ welfare, we need to satisfy every bidder (since $\sum_{i \in T} a_i = s$), and this can only be done if every bidder receives a desired set (if we satisfy bidder $i$ by giving them $a_i + 1$ items, some bidder $j$ can only receive at most $a_j - 1$ items and cannot be satisfied). Hence, if the optimal welfare is $2s$, it must be the case that $\bigcap_{i \in V} X_i \ne \emptyset$.

Thus, the MIR mechanism for $\calA$ is capable of solving \textsc{SetDisjointness} over a universe of size $z$. Since the communication complexity of \textsc{SetDisjointness} is $\Omega(z)$~\cite{Nisan02}, we conclude that maximizing the welfare of submodular valuations over $\calA$ requires $2^{\Omega(s/t)}$ communication.\footnote{Note that because the randomized communication complexity of \textsc{SetDisjointness} is also $\Omega(b)$, even randomized protocols which maximize the welfare of monotone submodular valuations over $\calA$ must use $2^{\Omega(s/t)}$ communication in expectation.}

\medskip
\textbf{Wrapping Up.}
By \Cref{lemma:RichAllocationsGeneral,lemma:RichAllocationsSubmodular},
\begin{itemize}
    \item An $m/k$-approximate MIR mechanism for general valuations must use $2^{\Omega(k)}$ communication.

    \item A $\sqrt{m/k}$-approximate MIR mechanism for submodular valuations must use $2^{\Omega(k/\log(m/k))}$ communication, so a $\sqrt{m/(k\log(m/k))}$-approximate MIR mechanism for submodular valuations must use $2^{\Omega(k)}$ communication.
\end{itemize}
Thus, $\MIRgen(m, k) = \Omega(m/k)$ and $\MIRsubmod(m, k) = \Omega(\sqrt{m/(k\log(m/k))})$.

\section{Conclusion} \label{section:Conclusion}

For all amounts of communication, we improve both upper and lower bounds for approximation guarantees of MIR mechanisms over submodular, XOS, subadditive, and general valuations. This resolves the approximation guarantees of MIR mechanisms for general valuations up to a constant factor, and submodular/XOS/subadditive valuations up to a $\Theta(\sqrt{\log m})$ factor. In addition, the mechanisms which witness the upper bounds use only value queries, demonstrating that using arbitrary communication instead of the far more restrictive regime of value queries does not give a mechanism much power. Even so, there are a few open questions for future work.

\medskip
\textbf{Closing the Logarithmic Gap for Submodular Valuations.}
Although we were able to significantly improve existing lower bounds for submodular valuations (reducing the gap from $\tilde{\Theta}((m/k)^{1/6})$ to $\Theta(\sqrt{\log(m/k)})$), we still started from the same $2$-shattering argument of~\cite{DanielySS15} in order to embed a hard communication game.

If one conjectures that our lower bound can be slightly improved (as we do), this unfortunately cannot follow after a $2$-shattering argument -- there exist mechanisms which are $\sqrt{m/k}$-approximate (see \Cref{section:LogMGap}) and which do not $2$-shatter any pair $(S, T)$ such that $\abs{S}/\abs{T} = \Omega(k)$. 

If instead one conjectures that our MIR mechanisms can be slightly improved, then the mechanism must be neither implementable $2^{O(k)}$ value queries (by~\cite{DobzinskiV16}), nor implementable with $2^{O(k)}$ simultaneous communication (see \Cref{prop:SimultaneousLowerBound}). Such a mechanism would be fundamentally different than all prior MIR mechanisms, which can be implemented both simultaneously and with value queries.

We discuss these directions further in~\Cref{section:LogMGap}.

\medskip
\textbf{Beyond MIR mechanisms.}
The major open problem is to understand communication lower bounds that hold for all deterministic truthful mechanisms, and not just MIR mechanisms. There is significantly less progress in this direction -- only~\cite{DobzinskiRV22} for dominant strategy truthful mechanisms, and~\cite{AssadiKSW20} for two-player mechanisms.

\bibliographystyle{alpha}
\bibliography{MasterBib}

\appendix

\section{Proofs of Folklore Claims and Auxiliary Results} \label{sec:Folklore}

\subsection{MIR Mechanisms are as Powerful as Affine Maximizers}

Here, we give a proof that affine maximizers are not more powerful in our setting than MIR mechanisms.

\begin{claim}
    Let $\calA \subseteq N^M$ and let $\calF$ be a class of functions $f : 2^M \to \mathbb{R}$ closed under scaling. If every MIR mechanism for $\calA$ over $\calF$ is not $\alpha$-approximate and uses at least $\beta$ communication, then every affine maximizer for $\calA$ over $\calF$ is not $\alpha$-approximate and uses at least $\beta$ communication.
\end{claim}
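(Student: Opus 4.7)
The plan is to prove the contrapositive: assuming some affine maximizer $F$ for $\calA$ over $\calF$ is $\alpha$-approximate using at most $\beta$ communication, I would exhibit an MIR mechanism for $\calA$ with the same guarantees. Approximation and communication are handled in two separate steps.

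For the approximation bound, the argument is essentially immediate from the definition of MIR. Let $F$ be an affine maximizer with scalars $\vec c$ and adjustment $v_0$, and let $M$ denote the MIR mechanism for $\calA$. Since $M(\vec v) \in \argmax_{A \in \calA} \sum_i v_i(A_i)$, the welfare $W_M(\vec v)$ is at least $W_F(\vec v)$ on every profile $\vec v$, simply because $F(\vec v) \in \calA$ is a candidate in the maximization defining $M$. Dividing into $\OPT(\vec v)$ and taking the supremum over $\vec v \in \calF^n$ shows that $M$'s approximation ratio is no worse than $F$'s, so $M$ is $\alpha$-approximate.

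For the communication bound, I would construct a protocol for $M$ that simulates $F$'s protocol on a rescaled input. On input $\vec v \in \calF^n$, bidder $i$ locally replaces $v_i$ by $v'_i \coloneqq K v_i / c_i$ for a single large positive integer $K$ (the degenerate case $c_i = 0$ is handled by first perturbing such a $c_i$ to a tiny $\varepsilon > 0$; this only affects $F$'s tie-breaking, and so preserves both its approximation and communication guarantees). By closure of $\calF$ under scaling, $v'_i \in \calF$, so running $F$'s protocol on $\vec v'$ uses at most $\beta$ bits and returns the $A^\star \in \calA$ maximizing $v_0(A) + K \sum_i v_i(A_i)$. For $K$ sufficiently large, $A^\star \in \argmax_{A \in \calA}\sum_i v_i(A_i)$ with $v_0$ merely breaking ties, and any such $A^\star$ is a valid MIR output (MIR may use any consistent tie-breaking rule).

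The main obstacle is making ``sufficiently large $K$'' precise, since $K$ must be fixed before any communication and each bidder sees only its own valuation. Under the integrality assumption of \Cref{sec:prelim} (all numbers are integers bounded by $2^{2^{O(k)}}$), $\max_{A \in \calA} \abs{v_0(A)}$ is at most $2^{2^{O(k)}}$ and the smallest positive welfare gap among allocations in $\calA$ is at least $1$, so any $K > 2^{2^{O(k)}}$ works uniformly across all profiles. Each bidder can compute such a $K$ locally from the mechanism specification and the known value bound with no communication, which completes the simulation.
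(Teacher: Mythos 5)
Your proposal follows essentially the same route as the paper: the approximation claim is immediate because the MIR mechanism is welfare-optimal over $\calA$ and the affine maximizer selects from $\calA$, and the communication claim comes from running the affine maximizer's protocol on inputs rescaled by a factor proportional to $1/c_i$ so that $v_0$ can only break ties. The paper uses the instance-dependent factor $3c^+/c^-$ with $c^- \coloneqq \min_{A,B}\abs{v(A)-v(B)}$ and a short contradiction argument, whereas you fix a single uniform $K$ via the integrality convention of \Cref{sec:prelim}; your version is, if anything, more careful about how the players agree on the scaling factor without communicating. The one step that does not work as written is the $c_i=0$ case: you are handed the affine maximizer's protocol as a black box and cannot ``perturb'' its internal weight to $\varepsilon$ (that defines a different mechanism, and the change is not mere tie-breaking on profiles where bidder $i$'s value dominates). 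The paper silently assumes $c_i>0$ (it divides by $c_i$ while only noting non-negativity), which is the standard non-degeneracy condition for affine maximizers; stating that assumption explicitly is the clean fix, since an affine maximizer that ignores a bidder entirely genuinely falls outside this reduction.
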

\begin{proof}
Consider any affine maximizer with adjustment $v_0 : \calA \to \mathbb{R}$ and weights $c \in \mathbb{R}_+^n$. Since any MIR mechanism for $\calA$ produces a welfare-maximizing allocation over $\calA$, the affine maximizer cannot produce a better allocation, and hence cannot be $\alpha$-approximate.

Fix any $v_1, \dots, v_n \in \calF$, and define the function $v : N^M \to \mathbb{R}_+$ by $v(A) \coloneqq \sum_{i \in N} v_i(A(i))$. Let $c^+ \coloneqq \max_{A \in \calA} \abs{v_0(A)}$ and $c^- \coloneqq \min_{A, B \in \calA, v(A) \ne v(B)} \abs{v(A) - v(B)}$ (we assume that $v$ is not a constant function).

Now, consider the valuations $u_1, \dots, u_n$ defined $u_i \coloneqq (3c^+/c^-) v_i/c_i$. Note that $u_i \in \calF$ because $\calF$ is closed under scaling, and $c^+, c^-, c_i$ are non-negative. Then for the instance given by $u_1, \dots, u_n$, the affine maximizer finds an allocation $A^+$ which maximizes
\[
    v_0(A^+) + \sum_{i \in N} c_i u_i(A^+(i)) \quad = \quad v_0(A^+) + \frac{3c^+}{c^-} \sum_{i \in N} v_i(A^+(i)) \quad = \quad v_0(A^+) + \frac{3c^+}{c^-} v(A^+)\enspace .
\]

Suppose for contradiction that $A^+$ is not a welfare-maximizing allocation, and let $A^*$ be any welfare-maximizing allocation. Since the affine maximizer chose $A^+$, we have
\begin{align*}
    v_0(A^+) + \frac{3c^+}{c^-} v(A^+) \quad &\geq \quad v_0(A^*) + \frac{3c^+}{c^-} v(A^*) \\
    v_0(A^+) - v_0(A^*) \quad &\geq \quad \frac{3c^+}{c^-} (v(A^*) - v(A^+)) \enspace .
\end{align*}

However, by construction of $c^+$ and $c^-$, and because $v(A^*) > v(A^+)$ (or else $A^+$ would be a welfare-maximizing allocation), the inequality evaluates to $2c^+ \geq 3c^+$, which implies $c^+ = 0$, which implies $v_0 = 0$. This is a contradiction, because then the objective of the affine maximizer maximizes the scaled welfare, which is equivalent to maximizing the welfare. Therefore, $A^+$ must be a welfare-maximizing allocation for $v_1, \dots, v_n$.

Thus, we can use any affine maximizer for $\calA$ to run an MIR mechanism for $\calA$, so if every MIR mechanism uses at least $\beta$ communication, then every affine maximizer for $\calA$ must also use at least $\beta$ communication.
\end{proof}


\subsection{Computational Lower Bounds}

For the computational lower bounds in \Cref{table:Results}, we adapt the approaches of \cite{DanielySS15,DobzinskiV16} to give an optimal bound up to constants.

\begin{definition}[Almost-Single-Minded~\cite{DanielySS15}]
    Bidder $i$ is \emph{almost-single-minded} for $T \subseteq M$ if $v_i(S) = \mathbbm{1}_{T \subseteq S} + \abs{S}/m^3$.
\end{definition}

\begin{definition}[Menu~\cite{DobzinskiV16}]
    For a deterministic truthful mechanism and valuations $\vec{v}_{-i} \coloneqq (v_1, \dots, v_{i-1}, v_{i+1}, \dots, v_n)$, the \emph{menu} $\calM_{\vec{v}_{-i}} \subseteq 2^M$ is the collection of sets $S$ for which there exists $v_i$ such that when presented with valuations $\vec{v} = (\vec{v}_{-i}, v_i)$, the mechanism allocates $S$ to bidder $i$. The \emph{price} $p_{\vec{v}_{-i}} : \calM_{\vec{v}_{-i}} \to \mathbb{R}_+$ is a function from sets on the menu to prices charged by the mechanism; note that $p_{\vec{v}_{-i}}$ can always be extended to a monotone function over $2^M$, and so we will treat it as such. Observe that a truthful mechanism must allocate to bidder $i$ the set $S \in \calM_{\vec{v}_{-i}}$ which maximizes the utility $v_i(S) - p_{\vec{v}_{-i}}(S)$.
\end{definition}

\begin{definition}[Structured Submenu~\cite{DobzinskiV16}]
    A collection of sets $\calS \subseteq \calM_{\vec{v}_{-i}}$ is a \emph{structured submenu} if there exist $\ell, p$ such that for all $S \in \calS$,
    \begin{itemize}[topsep=0pt,itemsep=0pt]
        \item $\abs{S} = \ell$

        \item $p_{\vec{v}_{-i}}(S) \in [p - 1/m^5, p]$

        \item For all $T \in \calM_{\vec{v}_{-i}}$ such that $T \supset S$, $p_{\vec{v}_{-i}}(T) \geq p_{\vec{v}_{-i}}(S) + 1/m^3$
    \end{itemize}
\end{definition}
\vsthm

The construction in the following lemma is from \cite{DanielySS15}, but we need to do some extra work here to give a probabilistic result, rather than just an existence result.

\begin{lemma} \label{lemma:RichMenu}
    Let $k = \Omega(\log(m))$ and let $n = m/k$. If a deterministic truthful mechanism is $m/(4k)$-approximate, then there exists an efficient procedure to sample valuations $\vec{v}_{-i}$ such that $\abs{\calM_{\vec{v}_{-i}}}$ contains a structured submenu of size $2^{\Omega(k)}$ w.p. at least $1/n^2$
\end{lemma}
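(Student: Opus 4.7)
The sampling procedure is the obvious one: independently for each $j \neq i$, draw $T_j$ uniformly from the size-$k$ subsets of $M$ and declare bidder $j$ to be almost-single-minded for $T_j$. This is implementable in $\poly(m)$ time, so the content of the lemma lies in analyzing $\calM_{\vec v_{-i}}$ under this random sample.

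I plan to analyze the menu by additionally sampling a uniformly random $T \subseteq M$ of size $k$ for bidder $i$, so that all $n$ bidders are i.i.d.\ almost-single-minded for random size-$k$ sets. In this symmetric instance the conflict graph on bidders (edge between $j,j'$ whenever $T_j \cap T_{j'} \neq \emptyset$) has edge probability roughly $k^2/m = 1/n$, and a Caro--Wei estimate produces an independent set of expected size $\Omega(n)$, so $\E[\OPT] \geq \Omega(n)$. Combined with the $m/(4k)$-approximation guarantee, $\E[\text{welfare}] \geq \Omega(n) \cdot 4k/m = \Omega(1)$, and since each bidder contributes at most $1 + O(1/m^2)$ to welfare, at least $\Omega(1)$ bidders are satisfied in expectation. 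Writing $S_T \in \calM_{\vec v_{-i}}$ for bidder $i$'s allocated menu set under $v_i^T$ and using symmetry across the $n$ i.i.d.\ bidders, this gives $\Pr_{\vec v_{-i},T}[T \subseteq S_T] \geq \Omega(1/n)$. A reverse Markov step then yields $\Pr_{\vec v_{-i}}[\Pr_T[T \subseteq S_T] \geq \Omega(1/n)] \geq \Omega(1/n) \geq 1/n^2$, which is the probability the lemma demands.

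Conditioning on this good $\vec v_{-i}$, at least $\Omega(\binom{m}{k}/n)$ size-$k$ sets $T$ are satisfied. I would pigeonhole on $\ell^* := |S_T|$ to fix some $\ell^*$ hit by $\Omega(\binom{m}{k}/(nm))$ of them, and argue $\ell^* \leq (1-\Omega(1))m$: otherwise $M \setminus S_T$ would be too small to host any other random size-$k$ desired set with non-negligible probability (via the tail estimate $\binom{m-|S_T|}{k}/\binom{m}{k}$), contradicting the welfare lower bound above. Given $\ell^* \leq cm$ for a constant $c < 1$, each menu set of size $\ell^*$ is the allocation for at most $\binom{\ell^*}{k}$ satisfied $T$'s, so the menu contains at least
\[
    \binom{m}{k}/\bigl(O(nm)\binom{\ell^*}{k}\bigr) \;\geq\; (1/c)^k/O(nm) \;=\; 2^{\Omega(k)}
\]
distinct menu sets of size exactly $\ell^*$, using $k = \Omega(\log m)$ to absorb the $\poly(m)$ factor. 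To convert this into a structured submenu, bucket these sets by price into intervals of width $1/m^5$; under the integrality assumption prices are bounded by $2^{2^{O(k)}}$, leaving only $\poly(m) \cdot 2^{O(k)}$ buckets, and pigeonhole retains $2^{\Omega(k)}$ sets in a single bucket. The price-jump condition is then enforced by the iterative pruning of \cite{DobzinskiV16}: repeatedly remove any submenu $S$ possessing a menu superset $T$ with $p(T)-p(S) < 1/m^3$. Chains of such lifts terminate quickly (each step raises the price by at least $1/m^5$ while prices lie in a $\poly(m) \cdot 2^{O(k)}$ range), so only a $\poly(m)$ fraction of the bucket is lost, retaining $2^{\Omega(k)}$ sets.

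\textbf{Main obstacle.} I expect the $\ell^* \leq (1-\Omega(1))m$ bound to be the most delicate step: the welfare-vs.-room calculation that drives it is most natural on the unconditional $(\vec v_{-i}, T)$-distribution, whereas the pigeonhole operates on the conditional distribution over $T$ given a good $\vec v_{-i}$. The plan is a two-level averaging that first bounds $\E_{\vec v_{-i},T}[|S_T|]$ away from $m$ using the welfare lower bound, then transfers this to the conditional distribution via Markov while losing only constant factors in both the probability $1/n^2$ and the submenu size $2^{\Omega(k)}$. The remaining ingredients (symmetry, reverse Markov, pigeonholes over sizes and price buckets, and the price-jump pruning) are relatively routine adaptations from \cite{DanielySS15, DobzinskiV16}.
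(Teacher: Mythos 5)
Your overall architecture (sample random almost-single-minded bidders, use the approximation guarantee to force many satisfied desired sets, pigeonhole on size and price) matches the paper's, but two steps fail as written. First, the welfare lower bound: for \emph{independently} sampled uniform size-$k$ sets, the pairwise intersection probability is $\approx k^2/m = k/n$ (not $1/n$ as you wrote), so the conflict graph has expected degree $\Theta(k)$ and the maximum independent set is only $\Theta(n\log k/k) = o(n)$. Hence $\E[\OPT] = o(n)$ and the forced welfare is $o(1)$ expected satisfied bidders, not $\Omega(1)$. This is fatal for your $\ell^* \leq (1-\Omega(1))m$ step: the only robust way to rule out all satisfied $T$'s being allocated huge sets is the deterministic fact that at most $3$ bidders can each receive more than $m/4$ items, and that only bites if the welfare guarantee forces \emph{strictly more than} $3$ satisfied bidders. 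The paper's fix is to sample the desired sets $T_1,\dots,T_n$ to be pairwise \emph{disjoint} (of size $k/2$), so that $\OPT = n$ exactly and the $m/(4k)$-approximation forces welfare $\geq 4 > 3 + 1/m^2$; the tail estimate $\binom{m-\abs{S_T}}{k}/\binom{m}{k}$ you propose instead cannot work on the conditional distribution alone, since a single menu set such as $S = M$ satisfies every $T$ and would absorb all the probability mass.

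Second, the structured-submenu construction: bucketing prices into width-$1/m^5$ intervals over the range $[0, 2^{2^{O(k)}}]$ produces $m^5 \cdot 2^{2^{O(k)}}$ buckets, not $\poly(m)\cdot 2^{O(k)}$, so the pigeonhole retains nothing. The paper avoids this by restricting to the submenu $\calM'_{\vec{v}_{-i}}$ of sets reachable by some \emph{almost-single-minded} $v_i$: since such a bidder's value never exceeds $1 + 1/m^2$, every reachable set has price at most $1+1/m^2$ (else negative utility), leaving only $\poly(m)$ buckets. The same restriction makes your third condition (the price jump on supersets) automatic rather than requiring iterative pruning: for almost-single-minded valuations $v_i(T) - v_i(S) \geq 1/m^3$ whenever $T \supset S$, so by truthfulness any $S$ with a cheap superset on the menu is never allocated and is simply not in $\calM'_{\vec{v}_{-i}}$. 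Your pruning argument as sketched does not bound how many sets a single cheap superset eliminates. With these two repairs — disjoint desired sets and the almost-single-minded menu restriction — the rest of your averaging/reverse-Markov outline goes through and essentially reproduces the paper's proof.
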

\begin{proof}
Let $T_1, \dots, T_n \subseteq M$ be a uniformly random list of disjoint sets of size $k/2$, and let bidder $i$ be almost-single-minded for $T_i$, and let $\calM'_{\vec{v}_{-i}}$ be the collection of sets $S$ for which there exists almost-single-minded $v_i$ such that the mechanism allocates $S$ to $v_i$.

Since $v_i(S) \leq 1 + 1/m^2$ for all $S \subseteq M$, $p_{\vec{v}_{-i}}(S) \leq 1 + 1/m^2$ for all $S \in \calM'_{\vec{v}_{-i}}$ or else bidder $i$ would have negative utility for $S$ and would never receive the set. Thus, we can partition $\calM'_{\vec{v}_{-i}}$ into $(m+1)(1+1/m^2)/(1/m^5) = \poly(m)$ submenus via choices of $\ell, p$ that satisfy the first two criteria of a structured submenu. To see that the third criterion is satisfied, observe that if $T \in \calM_{\vec{v}_{-i}}$ and $T \supset S$, then $v_i(T) - v_i(S) \geq 1/m^3$. Then if $p_{\vec{v}_{-i}}(T) - p_{\vec{v}_{-i}}(S) < 1/m^3$, almost-single-minded bidders $i$ would always prefer set $T$ to set $S$, so by truthfulness of the mechanism, $T$ would always be assigned to bidder $i$ over $S$, and $S \not\in \calM'_{\vec{v}_{-i}}$. Therefore, for any $T \in \calM_{\vec{v}_{-i}}$ such that $T \supset S$, we have $p_{\vec{v}_{-i}}(T) \geq p_{\vec{v}_{-i}}(S) + 1/m^3$.

Hence, each $\calM'_{\vec{v}_{-i}}$ contains a structured submenu of size $\abs{\calM'_{\vec{v}_{-i}}}/\poly(m)$. Since $k = \Omega(\log(m))$, it remains to show that we can efficiently sample $\vec{v}_{-i}$ such that $\abs{\calM'_{\vec{v}_{-i}}} = 2^{\Omega(k)}$.

Let $E_{i,S}$ be the event that $\abs{S} \leq m/4$ and $T_i \subseteq S$. Conditioned on $\vec{v}_{-i}$, $T_i$ is a uniformly random subset of $k/2$ items among $m/2 + k/2$ items, so
\begin{equation} \label{eq:EventBound}
    \Pr[E_{i,S} \given \vec{v}_{-i}] \quad \leq \quad \binom{m/2}{m/4-k/2}\bigg/\binom{m/2+k/2}{m/4} \quad \leq \quad 2^{-k/2} \enspace .
\end{equation}

Observe that if none of the events $E_{i,S}$ occur for $i \in N$, $S \in \calM'_{\vec{v}_{-i}}$, then no bidder $i$ is assigned a set $S \supseteq T_i$ unless $\abs{S} > m/4$, and so the welfare is at most $3 + 1/m^2$. However, we know that the instance has optimal welfare at least $n = m/k$, so the mechanism would not be $m/(4k)$ approximate. Therefore, for every $\vec{v}$, there exists some $i \in N$ and $S \in \calM'_{\vec{v}_{-i}}$ for which $E_{i,S}$ occurs. This implies that for some $i \in N$,
\[
    \E\bigg[\Pr\bigg[\bigcup_{S \in \calM'_{\vec{v}_{-i}}} E_{i,S} \:\bigg\vert\: \vec{v}_{-i}\bigg]\bigg] \quad \geq \quad \frac{1}{n} \enspace .
\]

By \eqref{eq:EventBound} and a union bound,
\[
    \E[\min\{1, \abs{\calM'_{\vec{v}_{-i}}} 2^{-k/2}\}] \quad \geq \quad \frac{1}{n} \enspace ,
\]
which implies
\[
    \E[\min\{2^{k/2}, \abs{\calM'_{\vec{v}_{-i}}}\}] \quad \geq \quad \frac{2^{k/2}}{n} \quad = \quad 2^{\Omega(k)} \enspace .
\]

As the random variable in the expectation is bounded, we have that
\[
    \Pr[\abs{\calM'_{\vec{v}_{-i}}} = 2^{\Omega(k)}] \quad \geq \quad \frac{1}{n} \enspace .
\]

Then to efficiently sample, we can generate $\vec{v}$ as described above and choose a random bidder $i$.
\end{proof}

Now, following the approach of~\cite{DobzinskiV16}, we use the properties of the large structured submenu to prove computational hardness. We will give a randomized reduction from a SAT instance with $\Theta(k)$ variables to running an $O(m/k)$-approximate deterministic truthful mechanism.

\begin{definition}
    For any matrix $X \in \{0, 1\}^{k \times m}$ and any set $S \subseteq M$, define $X(S) \in \{0, 1\}^k$ as the product entrywise modulo $2$ betwen $X$ and the indicator vector of $S$.
\end{definition}

\begin{definition}[The Bidding Language]
    Let $\vec{v}_{-i}$ be a set of valuations, let $\MECH$ be a $\Theta(m/k)$-approximate deterministic truthful mechanism which runs in time $\tau$ and has menu $\calM_{\vec{v}_{-i}}$ and prices $p_{\vec{v}_{-i}}$, let $\calS \subseteq \calM_{\vec{v}_{-i}}$ be a structured submenu with parameters $\ell, p$, let $X \in \{0, 1\}^{k \times m}$ be a matrix, and let $\phi$ be a formula with $k$ variables. Then we define $v^\phi_i \coloneqq v^\phi_i(\cdot \mutual \vec{v}_{-i}, \MECH, \ell, p, X)$ by
    \[
        v^\phi_i(S) \quad \coloneqq \quad \begin{dcases}
            3 & S \in \calS,\, \phi(X(S)) = 1 \\
            3 & \abs{S} > \ell,\, p_{\vec{v}_{-i}}(S) > p \\
            0 & o.w.
        \end{dcases} \enspace .
    \]
\end{definition}

\begin{proposition}[\cite{DobzinskiV16}, Lemma 3.20]
    The valuation $v^\varphi_i$ is monotone, can be described in $\poly(m, \tau)$ bits, and can be evaluated in $\poly(m, \tau)$ time.
\end{proposition}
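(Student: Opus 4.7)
The plan is to establish the three claims (monotonicity, short description, fast evaluation) one at a time, invoking $\MECH$ as a black box to sidestep having to reason about $\calS$ explicitly.

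For monotonicity, I would fix $S \subseteq T$ with $v^\phi_i(S) = 3$ and show $v^\phi_i(T) = 3$ by case analysis on which branch fires at $S$. If $S \in \calS$ with $\phi(X(S)) = 1$, then $|S| = \ell$ and $p_{\vec{v}_{-i}}(S) \geq p - 1/m^5$; for any strict superset $T$ the third structured-submenu property (combined with monotonicity of the extended price function for supersets outside $\calM_{\vec{v}_{-i}}$) yields $p_{\vec{v}_{-i}}(T) \geq p_{\vec{v}_{-i}}(S) + 1/m^3 > p$ together with $|T| > \ell$, placing $T$ in the second branch. The second branch is already closed upward by monotonicity of the price extension and the trivial inclusion $|T| \geq |S|$.

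For the description length, I would simply add up the description sizes of the ingredients that determine $v^\phi_i$. Each $v_j \in \vec{v}_{-i}$ is almost-single-minded (from \Cref{lemma:RichMenu}) and thus takes $O(m)$ bits; $\MECH$ is a Turing machine/circuit of size at most $\poly(\tau)$; the scalars $\ell$ and $p$ fit in $O(\log m)$ bits since prices live in $[0,1+1/m^2]$ with granularity $1/m^5$; the matrix $X$ needs $O(km)$ bits; and $\phi$, a Boolean formula on $k$ variables (e.g.\ a SAT instance in CNF), uses $\poly(k)$ bits. The total is $\poly(m,\tau)$.

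For evaluation, given a set $S$ I would compute $v^\phi_i(S)$ using $O(1)$ black-box calls to $\MECH$. First, $|S|$ is trivial. To obtain $p_{\vec{v}_{-i}}(S)$, call $\MECH$ on $(\vec{v}_{-i}, v_i')$ with a probe valuation $v_i'$ that makes $S$ bidder $i$'s unique utility-maximizer (e.g.\ $v_i'(R) = 2 \cdot \mathbbm{1}[R = S]$) and read off the payment. To verify the third structured-submenu condition for $S$, call $\MECH$ a second time with a probe that puts large value on every strict superset of $S$ and zero elsewhere; by truthfulness the mechanism returns the cheapest strict superset $T^\star \in \calM_{\vec{v}_{-i}}$ of $S$, and we compare $p_{\vec{v}_{-i}}(T^\star)$ against $p_{\vec{v}_{-i}}(S) + 1/m^3$. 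With $|S|$, $p_{\vec{v}_{-i}}(S)$, and this min-price strict superset in hand, the appropriate definitional branch is determined; if $S \in \calS$, compute $X(S) \in \{0,1\}^k$ in $O(km)$ time and evaluate $\phi(X(S))$ in $\poly(k)$ time.

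The main subtlety is the efficient test ``$S \in \calS$,'' since $\calS$ is specified \emph{implicitly} and may have size $2^{\Omega(k)}$, far too large to enumerate. The probe described above bypasses enumeration by letting $\MECH$ itself surface the cheapest strict superset of $S$ in the menu. A related subtlety is the meaning of $p_{\vec{v}_{-i}}(S)$ when $S \notin \calM_{\vec{v}_{-i}}$; the standard resolution is to fix a canonical monotone extension (namely, the maximum price of a menu subset of $S$, which is again recoverable by a single probe of $\MECH$). Once these points are settled, every step of the evaluation runs in $\poly(m,\tau)$ time, completing the proof.
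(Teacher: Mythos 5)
First, note the paper does not prove this statement at all; it is imported verbatim from \cite{DobzinskiV16} (Lemma 3.20), so the comparison is really against that source's argument, which your outline resembles in spirit (taxation-principle probes to the mechanism). However, your write-up has two genuine gaps. The first is the treatment of the price extension, where your monotonicity argument and your evaluation argument contradict each other. For monotonicity you need every strict superset $T$ of a set $S \in \calS$ with $\phi(X(S))=1$ to land in the second branch, i.e.\ $p_{\vec{v}_{-i}}(T) > p$; the third structured-submenu property only gives the $+1/m^3$ jump for $T \in \calM_{\vec{v}_{-i}}$, and ``monotonicity of the extended price function'' alone does not supply it for off-menu $T$ (monotonicity only gives $p_{\vec{v}_{-i}}(T) \geq p_{\vec{v}_{-i}}(S) \geq p - 1/m^5$, which is not $> p$). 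Worse, the extension you later fix --- the maximum price of a menu \emph{subset} --- actually breaks monotonicity of $v^\phi_i$: take $T = S \cup \{j\}$ with $T \notin \calM_{\vec{v}_{-i}}$ and all menu subsets of $T$ priced at most $p$; then $v^\phi_i(T) = 0 < 3 = v^\phi_i(S)$. The extension has to be chosen so that the third property propagates upward, e.g.\ $p_{\vec{v}_{-i}}(T) := \min\{p_{\vec{v}_{-i}}(T') : T' \supseteq T,\ T' \in \calM_{\vec{v}_{-i}}\}$ (interpreted as $+\infty$ when no such $T'$ exists); with that choice both bullets of your case analysis go through, and it is also exactly the quantity your probes can recover.

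The second gap is in the evaluation procedure. The probe $v_i'(R) = 2\cdot\mathbbm{1}[R = S]$ is not monotone, so $\MECH$ --- a truthful mechanism for monotone valuations --- need not be defined, let alone truthful, on it; the probe must be of single-minded type, $t\cdot\mathbbm{1}[S \subseteq R]$ (possibly with the $\abs{R}/m^3$ perturbation used elsewhere in the paper). Even after this fix, one call recovers the price of \emph{some} cheapest menu set containing $S$, not $p_{\vec{v}_{-i}}(S)$ itself, and it cannot certify $S \in \calM_{\vec{v}_{-i}}$; since the first branch requires deciding $S \in \calS$ to within the $1/m^5$ price granularity, the argument only closes if $\calS$ is \emph{defined} canonically (for the given $\ell, p$) in terms of these probe-recoverable quantities --- the cheapest price of a menu superset of $S$, and the cheapest price of a menu strict superset --- rather than taken as an arbitrary abstract structured submenu whose membership you then try to test. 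This is how \cite{DobzinskiV16} set things up, and your downstream use (Lemma~\ref{lemma:MaximizingSetIsSATSolution}) still works under that reading, but as written your $O(1)$-probe test does not decide the branch that the stated definition of $v^\phi_i$ requires.
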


\begin{lemma} \label{lemma:MaximizingSetIsSATSolution}
    If there exists $S \in \calS$ such that $\phi(X(S)) = 1$, then $\MECH$ allocates to bidder $i$ such a set on the instance $(\vec{v}_{-i}, v^\phi_i)$.
\end{lemma}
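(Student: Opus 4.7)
The strategy is to use truthfulness of $\MECH$ to characterize which set gets allocated to bidder $i$, and then do a short case analysis on $v^\phi_i$ to rule out every possibility except the desired one.

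By truthfulness, given the declared valuation profile $(\vec{v}_{-i}, v^\phi_i)$, $\MECH$ allocates to bidder $i$ a set $T \in \calM_{\vec{v}_{-i}}$ that maximizes utility $v^\phi_i(T) - p_{\vec{v}_{-i}}(T)$. First I would establish a lower bound on the maximum utility: pick any $S^* \in \calS$ with $\phi(X(S^*)) = 1$ (which exists by hypothesis). Since $S^* \in \calS \subseteq \calM_{\vec{v}_{-i}}$ and $p_{\vec{v}_{-i}}(S^*) \in [p - 1/m^5, p]$, the utility of $S^*$ is at least $3 - p$. Hence the utility of whatever $T$ the mechanism picks must also be at least $3 - p$. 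Since the structured-submenu parameter satisfies $p \leq 1 + 1/m^2 < 3$, we have $3 - p > 2$.

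Next I would do the case analysis on the chosen $T$ based on which branch of the definition of $v^\phi_i$ applies. If $v^\phi_i(T) = 0$, the utility is $-p_{\vec{v}_{-i}}(T) \leq 0 < 3 - p$, contradicting the lower bound. If $v^\phi_i(T) = 3$ via the second clause (i.e.\ $\abs{T} > \ell$ and $p_{\vec{v}_{-i}}(T) > p$), the utility is $3 - p_{\vec{v}_{-i}}(T) < 3 - p$, again a contradiction. The only remaining possibility, and hence the one that must hold, is that $v^\phi_i(T) = 3$ via the first clause, meaning $T \in \calS$ and $\phi(X(T)) = 1$, which is exactly what the lemma claims.

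I do not expect any real obstacle: the third condition in the definition of the structured submenu is not needed for this particular lemma (it is only used to verify monotonicity of $v^\phi_i$ in the preceding proposition), and the only delicate step is simply noting that the two ways of obtaining $v^\phi_i = 3$ have disjoint cardinality conditions ($\abs{S} = \ell$ versus $\abs{S} > \ell$), so the case split is clean. The main conceptual point to emphasize in writing is that truthfulness is doing all the work: the valuation $v^\phi_i$ has been engineered precisely so that the only way for bidder $i$ to obtain utility as high as $3 - p$ is to be allocated a SAT-satisfying set from the structured submenu.
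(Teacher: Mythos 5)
Your proof is correct and takes essentially the same route as the paper's: by the taxation principle, bidder $i$ receives a utility-maximizing set from $\calM_{\vec{v}_{-i}}$, any $S \in \calS$ with $\phi(X(S))=1$ gives utility at least $3-p>0$, and every other set (value $0$, or value $3$ at a price strictly above $p$) gives strictly less. One minor arithmetic nit: from $p \leq 1+1/m^2$ you get $3-p \geq 2-1/m^2$, not $3-p>2$, but only $3-p>0$ is ever used, so nothing breaks.
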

\begin{proof}
The utility of bidder $i$ for a set such that $S \in \calS$ and $\phi(X(S)) = 1$ is at least $3 - p > 0$, and so since the price of any other set that gives value $3$ is strictly larger, no other set can give bidder $i$ higher utility, and so a deterministic truthful mechanism must allocate a set such that $S \in \calS$ and $\phi(X(S)) = 1$ if it exists.
\end{proof}

\begin{proposition}[\cite{DobzinskiV16}, Claim 3.21] \label{prop:RandomProjectionUsuallyHits}
    Let $\phi$ be a satisfiable formula with $k$ variables, let $X \in \{0, 1\}^{k \times m}$ be uniformly random, and for any set $S \subseteq M$, denote $X(S)$ as the product modulo $2$ between $X$ and the indicator vector of $S$. Then if $\calS \subseteq 2^M$ and $\abs{\calS} \geq 2^{2k} + 1$, the probability that $\{X(S) : S \in \calS\}$ does not contain a satisfying assignment is at most $2^{-k}$.
\end{proposition}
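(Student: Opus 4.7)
The plan is to apply the second moment method to $Y \coloneqq \abs{\{S \in \calS : \phi(X(S)) = 1\}}$ and show $\Pr[Y = 0] \leq 2^{-k}$, which immediately implies the claim.

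The key probabilistic ingredient I would establish first is that the random vectors $\{X(S) : S \in \calS,\, S \neq \emptyset\}$ are pairwise independent and each uniform on $\{0,1\}^k$. Row-by-row, this reduces to the following linear-algebra fact: for distinct nonempty $S, T \subseteq M$, the indicators $\mathbbm{1}_S, \mathbbm{1}_T \in \mathbb{F}_2^m$ are linearly independent over $\mathbb{F}_2$, so for a uniform row $X_j \in \{0,1\}^m$ the map $x \mapsto (x \cdot \mathbbm{1}_S,\, x \cdot \mathbbm{1}_T)$ is a surjective $\mathbb{F}_2$-linear map and hence pushes the uniform distribution forward to the uniform distribution on $\{0,1\}^2$. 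Independence across the $k$ rows of $X$ then tensors this up to joint uniformity of $(X(S), X(T))$ on $\{0,1\}^k \times \{0,1\}^k$, which is pairwise independence.

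Granted pairwise independence, let $A \subseteq \{0,1\}^k$ be the nonempty set of satisfying assignments of $\phi$, write $p \coloneqq \abs{A}/2^k \geq 2^{-k}$, and set $\calS' \coloneqq \calS \setminus \{\emptyset\}$. I would first dispose of the trivial edge case: if $\emptyset \in \calS$ and $\vec{0}$ satisfies $\phi$, then $X(\emptyset) = \vec{0}$ deterministically and we are done. Otherwise $X(\emptyset)$ contributes nothing to $Y$, and the hypothesis $\abs{\calS} \geq 2^{2k}+1$ ensures $\abs{\calS'} \geq 2^{2k}$. Linearity of expectation gives $\E[Y] = \abs{\calS'} p$, and pairwise independence kills the covariance terms, yielding $\Var[Y] = \abs{\calS'} p (1-p) \leq \E[Y]$. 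Chebyshev then gives
\[
    \Pr[Y = 0] \;\leq\; \frac{\Var[Y]}{\E[Y]^2} \;\leq\; \frac{1}{\E[Y]} \;\leq\; \frac{2^k}{\abs{\calS'}} \;\leq\; 2^{-k}.
\]

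The main obstacle is really just getting the pairwise independence step right, since one has to argue linear independence of the indicator vectors over $\mathbb{F}_2$ and handle the $\emptyset$ edge case (which is precisely what forces the hypothesis to be $2^{2k}+1$ rather than $2^{2k}$). Once pairwise independence is in hand, the second moment calculation is entirely routine.
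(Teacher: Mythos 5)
Your proof is correct: the pairwise independence of $\{X(S) : S \in \calS,\, S \neq \emptyset\}$ over $\mathbb{F}_2$, the handling of the $\emptyset$ edge case, and the Chebyshev/second-moment bound $\Pr[Y=0] \leq 1/\E[Y] \leq 2^k/\abs{\calS'} \leq 2^{-k}$ all go through. The paper itself gives no proof --- it imports this statement verbatim from \cite{DobzinskiV16} (Claim 3.21) --- and your argument is essentially the standard one underlying that cited claim, so there is nothing to flag.
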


\begin{lemma} \label{lemma:ComputationalLowerBound}
    Let $\phi$ be a formula with $k = \Omega(\log(m))$ variables, and let $\MECH$ be a $\Theta(m/k)$-approximate deterministic truthful mechanism which runs in time $\tau$. Then there exists a randomized algorithm which outputs NO w.p. $1$ when $\phi$ is unsatisfiable, outputs YES w.h.p. when $\phi$ is satisfiable, and runs in $\poly(m, \tau)$ time.
\end{lemma}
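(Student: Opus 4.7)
The plan is to reduce deciding satisfiability of $\phi$ to polynomially many invocations of $\MECH$, chaining together the three tools already developed: Lemma~\ref{lemma:RichMenu} (the efficient sampler yielding a large structured submenu), Proposition~\ref{prop:RandomProjectionUsuallyHits} (random projections onto a large enough submenu hit satisfying assignments), and Lemma~\ref{lemma:MaximizingSetIsSATSolution} (the mechanism, facing $v^\phi_i$, is forced to return a set witnessing satisfiability). First I would invoke Lemma~\ref{lemma:RichMenu} with an appropriately chosen parameter $K = \Theta(k)$, large enough that (i) $\MECH$ being $\Theta(m/k)$-approximate implies it is $m/(4K)$-approximate, and (ii) the guaranteed submenu size $2^{\Omega(K)}$ comfortably exceeds $2^{2k}+1$. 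This produces a $\poly(m,\tau)$-time sampler that, with probability at least $1/n^2$, outputs $\vec{v}_{-i}$ whose menu contains a structured submenu $\calS$ with $|\calS| \geq 2^{2k}+1$.

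The algorithm then repeats the following for $N = \poly(m)$ rounds: sample $\vec{v}_{-i}$ from the above procedure; sample a uniformly random $X \in \{0,1\}^{k \times m}$; construct the bidding-language valuation $v^\phi_i(\cdot \mutual \vec{v}_{-i}, \MECH, \ell, p, X)$; run $\MECH$ on $(\vec{v}_{-i}, v^\phi_i)$ to obtain the set $S^*$ it allocates to bidder $i$; and check whether $\phi(X(S^*)) = 1$. If any round passes this check, halt and output YES; otherwise, after all $N$ rounds, output NO.

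Soundness is immediate: if $\phi$ is unsatisfiable, then $\phi(X(S^*)) = 0$ for every possible $S^*$ and every $X$, so the verification never passes and the algorithm outputs NO with probability $1$. For completeness, assume $\phi$ is satisfiable. In a single round, with probability at least $1/n^2$ the structured submenu $\calS$ has size at least $2^{2k}+1$; conditional on that, Proposition~\ref{prop:RandomProjectionUsuallyHits} guarantees that $\{X(S) : S \in \calS\}$ contains a satisfying assignment with probability at least $1 - 2^{-k}$; and by Lemma~\ref{lemma:MaximizingSetIsSATSolution}, $\MECH$ then allocates to bidder $i$ some $S^* \in \calS$ with $\phi(X(S^*)) = 1$, which the verification step detects. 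So each round succeeds with probability $\Omega(1/n^2)$, and $N = \Theta(n^2 \log m)$ independent rounds drive the failure probability below any desired inverse polynomial. Since each round executes in $\poly(m,\tau)$ time (sampling $\vec{v}_{-i}$, sampling $X$, evaluating $v^\phi_i$, invoking $\MECH$, and verifying are all within this budget), the total runtime is $\poly(m,\tau)$.

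The main obstacle is balancing the constants so that the guaranteed structured submenu is large enough ($\geq 2^{2k}+1$) while the mechanism is only $\Theta(m/k)$-approximate: the hidden constant in $2^{\Omega(K)}$ from Lemma~\ref{lemma:RichMenu} and the hidden constant in $\Theta(m/k)$ must be reconciled. The cleanest fix is to apply Lemma~\ref{lemma:RichMenu} with $K$ a sufficiently large constant multiple of $k$, which is permissible because strengthening the approximation denominator from $k$ to $K = \Theta(k)$ still corresponds to a $\Theta(m/k)$ approximation, and choosing the constant large enough makes $2^{\Omega(K)} \geq 2^{2k}+1$. Once this scaling is arranged, the remainder is a routine plug-in of the earlier lemmas and standard amplification.
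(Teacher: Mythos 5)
Your algorithm is essentially the paper's: sample $\vec{v}_{-i}$ via Lemma~\ref{lemma:RichMenu}, sample a random $X$, run $\MECH$ on $(\vec{v}_{-i}, v^\phi_i)$, check whether the returned set encodes a satisfying assignment, and amplify over $\poly(m)$ independent rounds with one-sided error. The soundness/completeness split and the constant-rescaling of $k$ so that $2^{\Omega(k)} \geq 2^{2k}+1$ are both handled correctly (the paper absorbs the latter into the $\Theta(k)$-variable reduction in Theorem~\ref{thm:GeneralComputationalLowerBound}).

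There is one concrete step you leave unspecified: constructing $v^\phi_i(\cdot \mutual \vec{v}_{-i}, \MECH, \ell, p, X)$ requires the parameters $\ell$ and $p$ of the large structured submenu, but the sampler from Lemma~\ref{lemma:RichMenu} only guarantees that \emph{some} choice of $(\ell, p)$ among the $\poly(m)$ possibilities yields a submenu of size $2^{\Omega(k)}$ --- it does not tell you which one, and identifying it would seem to require examining the (exponentially large) menu. As written, your per-round procedure is not executable. The fix is exactly the paper's one-liner: guess $(\ell, p)$ uniformly from the $\poly(m)$ candidates in each round, which degrades the per-round success probability from $\Omega(1/n^2)$ to $1/\poly(m)$ but leaves the amplification argument intact. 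With that addition your argument matches the paper's proof.
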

\begin{proof}
By \Cref{lemma:RichMenu}, we can efficiently sample valuations $v_{-i}$ such that w.p. at least $1/n^2$, $\MECH$ has a structured submenu $\calS$ with parameters $\ell, p$ of size $2^{\Omega(k)}$. While we do not know the parameters $\ell, p$ that yield a good structured submenu, there are only $\poly(m)$ possibilities for them, so we can guess them and still succeed w.p. $1/\poly(m)$.

Let $X \in \{0,1\}^{k \times m}$ be uniformly random. If $\phi$ is satisfiable and $\abs{\calS} = 2^{\Omega(k)}$, then by \Cref{prop:RandomProjectionUsuallyHits,lemma:MaximizingSetIsSATSolution}, $\MECH$ assigns bidder $i$ a set $S \subseteq M$ such that $\phi(X(S)) = 1$ w.p. at least $1-2^{-k}$ on the instance $(\vec{v}_{-i}, v^\phi_i)$, where $v^\phi_i \coloneqq v^\phi_i(\cdot, \vec{v}_{-i}, \MECH, \ell, p, X)$. By a union bound, running $\MECH$ on the instance $(\vec{v}_{-i}, v^\phi_i)$ allows us to recover a satisfying assignment of $\phi$ w.p. at least $1/\poly(m) - 2^{\Omega(k)} = 1/\poly(m)$.

Resampling $\vec{v}_{-i}, \ell, p, X$ and repeating the process $\poly(m)$ times causes a satisfying assignment to be recovered w.h.p. in $\poly(m, \tau)$ time. If a satisfying assignment is recovered, we output YES. If no satisfying assignment is recovered, we output NO. Thus, we answer correctly always when $\phi$ is unsatisfiable, and we answer correctly w.h.p. when $\phi$ is satisfiable.
\end{proof}

\begin{theorem} \label{thm:GeneralComputationalLowerBound}
    Let $k = \omega(\log(m))$. Assuming rETH, all $m/k$-approximate deterministic truthful mechanisms for succinctly representable valuations use $2^{\omega(k)}$ computation.
\end{theorem}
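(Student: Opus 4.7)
The plan is a direct reduction that composes Lemma~\ref{lemma:ComputationalLowerBound} with rETH. Suppose for contradiction that $\MECH$ is an $m/k$-approximate deterministic truthful mechanism for succinctly representable valuations running in time $\tau = 2^{o(k)}$. Since $m/k$-approximate is in particular $\Theta(m/k)$-approximate as required by Lemma~\ref{lemma:ComputationalLowerBound}, invoking that lemma on an arbitrary $k$-variable formula $\phi$ produces a randomized algorithm that correctly outputs NO whenever $\phi$ is unsatisfiable, correctly outputs YES with high probability whenever $\phi$ is satisfiable, and runs in time $\poly(m, \tau)$.

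The critical observation is that the hypothesis $k = \omega(\log m)$ forces $\log m = o(k)$, so $\poly(m) = m^{O(1)} = 2^{O(\log m)} = 2^{o(k)}$. Combined with $\tau = 2^{o(k)}$, the resulting SAT algorithm runs in time $\poly(m, \tau) = 2^{o(k)} \cdot \tau^{O(1)} = 2^{o(k)}$, which directly contradicts rETH, as rETH asserts that no randomized algorithm decides $k$-variable SAT in $2^{o(k)}$ time. Thus no such mechanism with $\tau = 2^{o(k)}$ can exist, i.e., $\tau = 2^{\omega(k)}$ as claimed.

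All of the heavy lifting has already been performed by earlier results. Lemma~\ref{lemma:RichMenu} adapts the almost-single-minded construction of \cite{DanielySS15} to produce, from any sufficiently good truthful mechanism, a large structured submenu with the probability and pricing properties of \cite{DobzinskiV16}. Lemma~\ref{lemma:ComputationalLowerBound} then embeds an arbitrary SAT instance inside welfare maximization over that submenu via the bidding language, using a random projection $X \in \{0,1\}^{k \times m}$ to shrink the dependence from $m$ to $k$. Given these two ingredients, the present theorem is just asymptotic bookkeeping: the only mild obstacle is ensuring that the $\poly(m)$ overhead of invoking $\MECH$ polynomially many times inside the reduction does not exceed the $2^{o(k)}$ budget that rETH permits us to spend, and this is exactly what the $k = \omega(\log m)$ regime delivers.
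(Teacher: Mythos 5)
Your proposal is correct and follows essentially the same route as the paper: invoke Lemma~\ref{lemma:ComputationalLowerBound} to obtain a randomized SAT algorithm running in $\poly(m,\tau)$ time, then use rETH together with $k = \omega(\log m)$ to absorb the $\poly(m) = 2^{O(\log m)} = 2^{o(k)}$ overhead and force the bound on $\tau$. The only (shared) wrinkle is the final asymptotic phrasing--ruling out $\tau = 2^{o(k)}$ formally yields $\tau = 2^{\Omega(k)}$ rather than $2^{\omega(k)}$--but the paper's own proof concludes exactly the same way, so this is a quibble with the statement, not with your argument.
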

\begin{proof}
By \Cref{lemma:ComputationalLowerBound}, an $m/k$-approximate deterministic truthful mechanism for succinctly representable valuations running in time $\tau$ implies the existence of a randomized algorithm $\ALG$ running in time $\poly(m, \tau)$ which decides the satisfiability of any formula with $\Theta(k)$ variables. Assuming rETH, $\ALG$ runs in $2^{\Omega(k)}$ time. Since the polynomial dependence on $m$ only adds a $2^{\Theta(\log(m))}$ component to the running time of $\ALG$ and $k = \omega(\log(m))$, it must be the case that $\tau = 2^{\Omega(k)}$.
\end{proof}

\begin{theorem} \label{thm:SubmodularComputationalLowerBound}
    Let $k = \omega(\log(m))$. Assuming rETH, all $\sqrt{m/k}$-approximate deterministic truthful mechanisms for succinctly representable valuations use $2^{\omega(k)}$ computation.
\end{theorem}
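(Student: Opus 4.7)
The plan is to mirror the proof of \Cref{thm:GeneralComputationalLowerBound} with submodular valuations in place of general ones: first prove a submodular analog of \Cref{lemma:RichMenu} yielding a structured submenu of size $2^{\Omega(k)}$ with $1/\poly(m)$ probability, then adapt the bidding language of \Cref{lemma:ComputationalLowerBound} to produce a submodular $v^\phi_i$, and finally combine via \Cref{prop:RandomProjectionUsuallyHits} into a randomized SAT solver and invoke rETH. The almost-single-minded bidders of \Cref{lemma:RichMenu} are replaced by mild-desires bidders (\Cref{def:MildDesires}), which are submodular; the natural parameter choice is $n = \Theta(\sqrt{m/k})$ bidders each mild-desires for a single target set $T_i$ of size $a = \Theta(\sqrt{mk})$ drawn uniformly as disjoint $a$-sets, so that $na = \Theta(m)$. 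The events $E_{i,S}$ become ``$|S|$ is below some $cm$ threshold and $T_i \subseteq S$,'' with $\Pr[E_{i,S} \mid \vec{v}_{-i}] \leq 2^{-\Omega(k)}$ via hypergeometric tail bounds, and the same structured-submenu extraction used in \Cref{lemma:RichMenu} (partition by size $\ell$ and price $p$) then yields the rich menu.

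For the submodular bidding language, I would take $v^\phi_i$ to be mild-desires for $\{S \in \calS : \phi(X(S)) = 1\}$ with $a = \ell$, which is submodular by \Cref{def:MildDesires} and succinctly representable in $\poly(m, \tau)$ bits. The analog of \Cref{lemma:MaximizingSetIsSATSolution} then follows by combining the three structured-submenu properties (sets of size $\ell$, prices within $1/m^5$, and $\geq 1/m^3$ price gaps to supersets in $\calM_{\vec{v}_{-i}}$) with the $1$-unit mild-desires preference for $\calF$-sets over other size-$\ell$ sets: with those gaps, the mechanism must allocate a $\calF$-set whenever one exists. Combining this with \Cref{prop:RandomProjectionUsuallyHits} and iterating over $\poly(m)$ resamplings as in \Cref{lemma:ComputationalLowerBound} yields a randomized $\poly(m, \tau)$-time SAT solver on $\Theta(k)$-variable formulas; rETH then forces $\tau = 2^{\omega(k)}$ once $k = \omega(\log m)$ absorbs the $\poly(m)$ factor.

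The main obstacle is the submodular welfare analysis in the first step. Almost-single-minded bidders have value sharply bounded by the number of satisfied bidders, which makes the welfare contradiction in \Cref{lemma:RichMenu} immediate: ``no event occurs'' bounds welfare by $3 + O(1/m^2)$ against an OPT of $n$. Mild-desires bidders, by contrast, achieve their full $2a$ value from any $(a+1)$-element set without hitting any target, so the approximation guarantee does not immediately force target-hitting. The resolution --- the ``more careful lower-order tracking'' that the paper credits with the improvement over \cite{DobzinskiV16} --- will require tuning $n$ and $a$ so that the item-budget constraint $n(a+1) > m$ forces a constant fraction of bidders into the ``size exactly $a$'' regime where the $\mathbbm{1}_{G \notin \calF}$ penalty in \Cref{def:MildDesires} fires, and accumulating those $1$-unit penalties precisely enough to contradict the $\sqrt{m/k}$-approximation. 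Verifying this balance (and the corresponding menu-richness claim) is the technical heart of the proof; the remaining three steps are close parallels of the proof of \Cref{thm:GeneralComputationalLowerBound}.
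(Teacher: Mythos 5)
Your overall architecture (rich submodular menu $\to$ submodular bidding language $\to$ SAT via \Cref{prop:RandomProjectionUsuallyHits} $\to$ rETH) matches the paper's, which in fact dispenses with the submodular-specific steps in two sentences by citing the instance-generation machinery of \cite{DobzinskiV16} and reusing the argument of \Cref{thm:GeneralComputationalLowerBound}. The genuine gap is your choice of mild-desires valuations for both submodular steps. A mild-desires bidder (\Cref{def:MildDesires}) loses only $1$ unit of value, out of $2a$, for receiving a wrong set of size $a$. With $n = \Theta(\sqrt{m/k})$ bidders and disjoint targets of size $a = \Theta(\sqrt{mk})$, one has $\OPT = 2na = \Theta(m)$, while the fixed allocation that hands each bidder an arbitrary $a$-item block of a fixed balanced partition already achieves welfare $2na - n = \OPT - O(\sqrt{m/k}) = (1-o(1))\OPT$. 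So a $\sqrt{m/k}$-approximate (indeed, even a $1.01$-approximate) mechanism is never forced to place a single target set on its menu, and the menu can have size $n$. Your proposed fix --- using $n(a+1) > m$ to push bidders into the exact-size-$a$ regime and ``accumulating'' the $1$-unit penalties --- cannot close this: the total accumulable penalty is at most $n = O(\sqrt{m/k})$, whereas violating the approximation guarantee requires driving welfare below $\OPT\cdot\sqrt{k/m} = \Theta(\sqrt{mk})$, i.e.\ a loss of $\Theta(m)$. Mild-desires valuations work in the paper's communication lower bound precisely because there the MIR mechanism must compute the \emph{exact} optimum over its bank, so a welfare difference of $1$ is decisive; in an approximation-driven menu argument the available slack is $\Theta(\OPT)$ and a $\pm 1$ sensitivity is invisible.

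The same issue resurfaces in your bidding language: a mild-desires bidder for $\{S \in \calS : \phi(X(S))=1\}$ values every set of size at least $\ell+1$ at the full $2\ell$, and the structured-submenu conditions only control prices of \emph{supersets of sets in $\calS$}; a large menu set outside that cone may be priced well below $p-1$, in which case the truthful mechanism allocates it instead of a satisfying $\calF$-set and the analog of \Cref{lemma:MaximizingSetIsSATSolution} fails. What is needed in both places is a submodular family in which missing the target costs each bidder a \emph{constant fraction} of her value, so that $n$ misses cost $\Theta(\OPT)$ and so that satisfying sets strictly dominate all larger sets after the $1/m^3$ price gaps; this is exactly what the submodular constructions of \cite{DobzinskiV16} provide and what the paper imports wholesale.
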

\begin{proof}
\cite{DobzinskiV16} similarly shows how to efficiently generate instances $(\vec{v}_{-i}, v^\phi_i)$ for submodular valuations, analogous to our proof for general valuations. Then the same argument as \Cref{thm:GeneralComputationalLowerBound} suffices to give a tight bound assuming rETH.
\end{proof}


\subsection{Lower Bound for Simultaneous Protocols}

\begin{proposition} \label{prop:SimultaneousLowerBound}
    Let $\MECH$ be a $\sqrt{m/k}$-approximate MIR mechanism for submodular valuations. Then any simultaneous protocol implementing $\MECH$ must use $2^{\Omega(k)}$ communication.
\end{proposition}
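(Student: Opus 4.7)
The plan is to lift the lower bound of \Cref{section:LowerBounds} from the blackboard model to the simultaneous model by replacing the SetDisjointness reduction with a direct counting argument on a single bidder's message.

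First I would invoke \Cref{lemma:RichAllocationsSubmodular} on the allocation bank $\calA$ of $\MECH$ to obtain a $2$-shattered pair $(S, N)$ with $|S| = s = \Omega(\sqrt{mk}/\log(m/k))$, and then apply \Cref{claim:Structure} with the parameter $t$ tuned so that $s/t = \Theta(k)$. Because $k = \Omega(\log m) = \Omega(\log s)$, this $t$ still satisfies the admissibility hypothesis $t = O(s/\log s)$, and the construction produces a collection of $z = 2^{\Theta(k)}$ allocations $B^{(1)}, \ldots, B^{(z)} \in \calA$ together with a subset $V$ of bidders exhibiting the conflict structure of \Cref{claim:Structure}, with $|B^{(\ell)}_i| = a_i$ independent of $\ell$.

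Next I would fix a single bidder $i \in V$ and consider the family of $2^z$ mild-desires valuations $\{v_i^{X_i} : X_i \subseteq [z]\}$ from \Cref{def:MildDesires}. In a simultaneous protocol, bidder $i$'s message $\sigma_i(v_i)$ is a function of $v_i$ alone, and the referee's output is a function of $(\sigma_i(v_i), \sigma_{-i}(\vec{v}_{-i}))$. Hence if for every pair $X_i \ne X_i'$ there exists a profile $\vec{v}_{-i}$ on which $\MECH$ returns strictly different allocations, then $\sigma_i$ must take at least $2^z$ distinct values, forcing bidder $i$'s message to have $\ge z = 2^{\Omega(k)}$ bits, and hence total simultaneous communication $\ge 2^{\Omega(k)}$.

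The step I expect to be the main obstacle is exhibiting such a distinguishing $\vec{v}_{-i}$ for \emph{every} pair. The naive attempt (mild-desires bidders in $V\setminus\{i\}$ jointly demanding some $B^{(\ell^*)}$ with $\ell^* \in X_i \triangle X_i'$) can fail, because the swap $v_i^{X_i}\to v_i^{X_i'}$ shifts bidder $i$'s value on $B^{(\ell^*)}_i$ by only $\pm 1$, which typically just creates a tie that $\MECH$ may resolve identically in both inputs. My plan is instead to sculpt $\vec{v}_{-i}$ from \emph{single-minded} bidders chosen so that $W_{\text{others},\ell} := \sum_{j\ne i} v_j(B^{(\ell)}_j)$ is strictly maximized at $\ell = \ell^*$ with a gap of at least $2$ to $W_{\text{others},\ell'}$ for a suitably chosen $\ell' \in X_i'$, while being far smaller for every other $\ell$. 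After additionally scaling bidder $i$'s mild-desires valuation by a factor of $3$ (so the penalty $v_i^{X_i}(B^{(\ell)}_i) - v_i^{X_i'}(B^{(\ell)}_i) = \pm 3$ strictly exceeds any $W_{\text{others}}$-gap it needs to overcome), one can verify case-by-case (covering $X_i' \cap X_i = \emptyset$, $X_i' \setminus X_i \ne \emptyset$, and $X_i' \subsetneq X_i$) that $B^{(\ell^*)}$ is the unique welfare-maximizer under $v_i^{X_i}$ while some $B^{(\ell')}$ is the unique welfare-maximizer under $v_i^{X_i'}$, so the outputs must differ. Combined with the counting argument above, this gives the claimed $2^{\Omega(k)}$ lower bound.
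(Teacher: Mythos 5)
Your high-level plan --- forcing bidder $i$'s simultaneous message to be injective on a family of $2^z$ mild-desires valuations by exhibiting, for each pair, a profile $\vec{v}_{-i}$ on which the mechanism's outputs must differ --- is exactly the shape of the paper's argument. But the step you yourself flag as the main obstacle is where the proof genuinely breaks, for two reasons. First, single-minded valuations are not submodular, so they cannot appear in $\vec{v}_{-i}$: the proposition concerns a mechanism for submodular valuations, and a distinguishing instance outside that class says nothing about $\MECH$. Second, and more fundamentally, to conclude that the outputs differ you need the sets of welfare-maximizers \emph{over all of $\calA$} in the two instances to be disjoint; your construction only compares the welfares of the $z$ structured allocations $B^{(1)},\dots,B^{(z)}$. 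Once you scale bidder $i$'s mild-desires valuation by $3$, any allocation of $\calA$ handing bidder $i$ some $a_i+1$ items already satisfies them, and nothing rules out such an allocation (which may simultaneously satisfy the other bidders) being a common welfare-maximizer of both instances, in which case $\MECH$ may legitimately output it twice and the messages need not differ. The paper's fix is to keep \emph{every} bidder mild-desires with $\sum_j a_j = m$, so that achieving welfare $2m$ forces each bidder to receive exactly one of their desired sets (padding bidder $i$ with $a_i+1$ items starves some other bidder); with the others desiring only $\{A_j^{(\ell_x)}, A_j^{(\ell_y)}\}$ for $\ell_x \in X\setminus Y$ and $\ell_y\in Y\setminus X$, the unique maximizers over $\calA$ become $A^{(\ell_x)}$ and $A^{(\ell_y)}$ respectively. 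This also explains why the paper restricts the pigeonhole to pairs with $\abs{X}=\abs{Y}$ (costing only a factor $z+1$ in the counting) rather than attempting, as you do, to distinguish nested pairs $X'\subsetneq X$.

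Separately, your detour through \Cref{lemma:RichAllocationsSubmodular} and \Cref{claim:Structure} is unnecessary overhead: the paper needs neither $2$-shattering nor the conflict structure. It takes $n=3\sqrt{m/k}$, applies \Cref{prop:SubadditiveManyAllocations} to get $\abs{\calA}\ge 2^{\sqrt{mk}/3}$, and pigeonholes over bidders and set sizes to find a single coordinate $i$ and $2^{\Omega(k)}$ allocations that are pairwise distinct and of equal size on that coordinate --- which is all that is needed to define the $2^z$ mild-desires valuations and run the counting argument.
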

\begin{proof}
Let $\calA$ be the allocation bank maximized over by $\MECH$ when $n = 3\sqrt{m/k}$. Then by \Cref{prop:SubadditiveManyAllocations}, $\abs{\calA} \geq 2^{\sqrt{mk}/3}$. This implies that there exists $\calA' \subseteq \calA$ and $i \in N$ such that
\begin{itemize}[topsep=0pt,itemsep=0pt]
    \item $\abs{\calA'} \geq (2^{\sqrt{mk}/3})^{1/n}/(m+1) = 2^{\Omega(k)}$

    \item For any distinct $A, B \in \calA'$, $\abs{A_i} = \abs{B_i}$ and $A_i \ne B_i$
\end{itemize}

Let $z \coloneqq \abs{\calA'}$, and index $\calA' \coloneqq \{A^{(1)}, \dots, A^{(z)}\}$. For any set $X \subseteq [z]$, let $v^{(X)}_i$ be the valuation of bidder $i$ if they are mild-desires (\Cref{def:MildDesires}) for $\{A^{(\ell)}_i : \ell \in X\}$. Observe that this is well-defined because $\abs{A_i} = \abs{B_i}$ for all $A, B \in \calA'$, and that distinct $X$ result in distinct valuations because $A_i \ne B_i$ for all $A, B \in \calA'$.

Suppose for contradiction that there exists a simultaneous protocol implementing $\MECH$ using $o(z)$ bits of communication. Then since there are only $2^{o(z)} < 2^z/(z+1)$ possible messages sent by bidder $i$, there exist distinct $X, Y \subseteq [z]$ such that $\abs{X} = \abs{Y}$, and bidder $i$ sends the same message when they have either the valuation $v^{(X)}_i$ or $v^{(Y)}_i$. Since $X \ne Y$ and $\abs{X} = \abs{Y}$, there exists $\ell_x \in X \setminus Y$, and $\ell_y \in Y \setminus X$. Then for each $j \ne i$, let bidder $j$ be mild-desires for $\{A^{(\ell_x)}_j, A^{(\ell_y)}_j\}$, and consider the instances
\begin{align*}
    & \vec{v}^{(X)} \coloneqq (v_1, \dots, v_{i-1}, v^{(X)}_i, v_{i+1}, \dots, v_n) \qquad \text{and} \\
    & \vec{v}^{(Y)} \coloneqq (v_1, \dots, v_{i-1}, v^{(Y)}_i, v_{i+1}, \dots, v_n) \enspace .
\end{align*}
Observe that the optimal welfare for $\vec{v}^{(X)}$ is $2m$, attained only by $A^{(\ell_x)}$, and that the optimal welfare for $\vec{v}^{(Y)}$ is $2m$, attained only by $A^{(\ell_y)}$. However, because each bidder sends the same message in both instances (bidder $i$ by construction, and the remaining bidders because their valuations across the instances are identical and the protocol is simultaneous), $\MECH$ cannot distinguish the two instances, and assigns the same allocation to both. Thus, $\MECH$ cannot be MIR, a contradiction. This implies that no simultaneous protocol can implement $\MECH$ using $o(z)$ bits of communication, and hence every simultaneous protocol implementing $\MECH$ uses $\Omega(z) = 2^{\Omega(k)}$ communication.
\end{proof}

\newpage
\section{Missing Tables and Proofs} \label{sec:MissingProofs}

\subsection{\Cref{table:Results} for General \texorpdfstring{$k$}{k}}

\bgroup
\newcommand\T{\rule{0pt}{3.1ex}}       
\newcommand\B{\rule[-1.7ex]{0pt}{0pt}} 
\def\arraystretch{1.5}
\begin{table}[ht]
    \centering\small
    \begin{tabular}{C{4.8cm}|C{4.8cm} C{5.5cm}}
Communication & General & Subadditive/XOS/Submodular \B \\ \hline
Prior Best Mechanisms & $O(m/\sqrt{k})$~\cite{HolzmanKMT04} & $O(\sqrt{m/k}\log(m/k))$~\cite{DobzinskiNS10} \T \\
Our MIR Mechanisms & \bm{$O(m/k)$}~(Thm.~\ref{thm:General}) & \bm{$O(\sqrt{m/k})$}~(Thm.~\ref{thm:Subadditive}) \\
Our MIR Lower Bounds & \bm{$\Omega(m/k)$}~(Thm.~\ref{thm:General}) & \bm{$\Omega(\sqrt{m/(k\log(m/k))})$}~(Thm.~\ref{thm:LowerBound}) \\
Prior MIR Lower Bounds & $\Omega(m/(k\log(m)))$~\cite{DanielySS15} & $\Omega((m/(k\log(m)))^{1/3})$~\cite{DanielySS15}
    \end{tabular}

    \bigskip
    
    \begin{tabular}{C{4.8cm}|C{4.8cm} C{5.5cm}}
Value Queries & General & Subadditive/XOS/Submodular \B \\ \hline
Prior Best Mechanisms & $O(m/\sqrt{k})$~\cite{HolzmanKMT04} & $O(\sqrt{m/k}\log(m/k))$~\cite{DobzinskiNS10} \T \\
Our MIR Mechanisms & \bm{$O(m/k)$}~(Thm.~\ref{thm:General}) & \bm{$O(\sqrt{m/k})$}~(Thm.~\ref{thm:Subadditive}) \\
Truthful Lower Bounds & $\downarrow$ & $\Omega(\sqrt{m/k})$~\cite{DobzinskiV16} \\
Algorithmic Lower Bounds & $\Omega(m/k)$~\cite{BlumrosenN05a} & $\Omega(\sqrt{m/k})$ for Subadditive/XOS~\cite{DobzinskiNS10}
    \end{tabular}

    \bigskip
    
    \begin{tabular}{C{4.8cm}|C{4.8cm} C{5.5cm}}
Computation (Succ.~Rep.) & General & Subadditive/XOS/Submodular \B \\ \hline
Prior Best Mechanisms & $O(m/\sqrt{k})$~\cite{HolzmanKMT04} & $O(\sqrt{m/k}\log(m/k))$~\cite{DobzinskiNS10} \T \\
Our MIR Mechanisms & \bm{$O(m/k)$}~(Thm.~\ref{thm:General}) & \bm{$O(\sqrt{m/k})$}~(Thm.~\ref{thm:Subadditive}) \\
Our Truthful Lower Bounds & \bm{$\Omega(m/k)$}~(Thm.~\ref{thm:GeneralComputationalLowerBound}) & \bm{$\Omega(\sqrt{m/k})$}~(Thm.~\ref{thm:SubmodularComputationalLowerBound}) \\
Prior Truthful Lower Bounds (only when $k = \poly(m)$) & $\Omega(m/(k\log(m)))$~\cite{DanielySS15} & $\Omega(\sqrt{m/k})$~\cite{DobzinskiV16}
    \end{tabular}

    \caption{Summary of our results (bolded) compared to prior work (unbolded). All models referenced above are for deterministic mechanisms. The listed expression is the approximation upper/lower bound when using $2^k$ of the respective resource.}
    
    \label{table:ResultsGeneral}
\end{table}
\egroup

\subsection{Computationally Efficient Mechanism for Subadditive Valuations}

\Balanced*

\BalancedBucketingsExist*
\begin{proof}
Fix $\vec{v}$ and let $P \in [t]^N$ be uniformly random. Observe that $\E[\abs{B^*_s(\vec{v}, P)}] \leq m/t$ for all $s \in [t]$, and that $\abs{B^*_s(\vec{v}, P)}$ is the sum of independent weighted Bernoulli random variables whose weights do not exceed $m/t$ (as we ignore all $i \not\in N_1(\vec{v})$). Thus, by a Chernoff bound, $\Pr[\abs{B^*_s(\vec{v})} = O(m/t)] = \Omega(1)$. By linearity of expectation, we have $\E[\OPT(\vec{v}, N_P) = \Theta(\OPT(\vec{v}, N_1)]$, so since $\OPT(\vec{v}, N_P) \leq \OPT(\vec{v}, N_1)$, we also have $\Pr[\OPT(\vec{v}, N_P) = \Theta(\OPT(\vec{v}, N_1)] = \Omega(1)$. If we take independent repetitions $P^{(1)}, \dots, P^{(y)} \in [t]^N$, the probability that none of the $P^{(x)}$ are balanced for $\vec{v}$ is at most $2^{-\poly(m)}$.

Now, let $\calN(\vec{v}) \subseteq 2^{N_1}$ be the collection of sets $N'$ such that $\OPT(\vec{v}, N') = \Theta(\OPT(\vec{v}, N_1))$. Since $\OPT(\vec{v}, \cdot)$ is a linear function on $\leq n$ variables, the number of distinct $\calN(\vec{v})$ is at most $2^{n^2}$~\cite{OrlikT13}. Note that the event $\{\OPT(\vec{v}, N_P) = \Theta(\OPT(\vec{v}, N_1))\}$ depends only on $\calN(\vec{v})$ and $A^*$. Thus, by a union bound over $2^{n^2} n^m = 2^{\poly(m)}$ possibilities, we have that $P^{(1)}, \dots, P^{(y)}$ is balanced w.p. $> 0$, which implies existence of balanced $P^{(1)}, \dots, P^{(y)}$.
\end{proof}

\EfficientBucketingMechanism*

\SubadditiveMechanismComputationallyEfficient*
\begin{proof}
Let $\calA$ define the mechanism. Let $N_0$ be the set of bidders who receive more than $2\sqrt{mk}$ items, and observe that $\MIR_\calA(\vec{v}) \geq 2\sqrt{k/m} \OPT(\vec{v}, N_0)$. Let $N_1 \coloneqq N \setminus N_0$, and observe that it is consistent with the definition of $N_1(\vec{v})$ from \Cref{def:Balanced}. Hence, we borrow the remaining notation from that definition.

Since $B^{(1)}, \dots, B^{(m)}$ is regular, let $f : \calP \to [m]$ be a function such that $B^{(\ell(P))}$ is regular for $B^*(\vec{v}, P) \coloneqq (B^*_1(\vec{v}, P), \dots, B^*_t(\vec{v}, P))$. We then define the valuation $v : \calP \to \mathbb{R}_+$ by
\[
    v(P) \quad \coloneqq \quad \sum_{s \in S_P(\vec{v})} \sum_{i \in P_s} v_i(B^{(f(P))}_s \cap A^*_i(\vec{v})) \enspace .
\]

\begin{lemma} \label{lemma:MechanismIsAtLeastVP}
    $\MIR_\calA(\vec{v}) \geq \max_{P \in \calP} v(P)$.
\end{lemma}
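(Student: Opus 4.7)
The plan is to show, for each $P \in \calP$ separately, that $\calA$ contains an allocation whose welfare is at least $v(P)$; the lemma then follows by maximizing over $P$. The candidate allocation $A^P \in \calA$ is built by making the most natural choices allowed by the definition of the bucket-shattering mechanism: take the bucketing index $\ell = f(P)$, take $P$ itself as the bidder-to-bucket assignment, and within each bucket $s \in S_P(\vec{v})$ pick the intra-bucket allocation in $\calA^{(f(P))}_{s, P_s}$ that awards bidder $i \in P_s \cap N_1(\vec{v})$ exactly the items $B^{(f(P))}_s \cap A^*_i(\vec{v})$. Buckets $s \notin S_P(\vec{v})$ and bidders outside $N_1(\vec{v})$ can be handled with any allocation in the corresponding bank (e.g., empty), since by monotonicity these choices can only increase total welfare.

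The central step is to verify that the proposed per-bucket allocation really lives in $\calA^{(f(P))}_{s, P_s}$. This is where the two technical ingredients of \Cref{def:SubadditiveMechanism} combine cleanly. First, since $B^{(f(P))}$ is regular for $B^*(\vec{v}, P)$ and since $s \in S_P(\vec{v})$ means $\abs{B^*_s(\vec{v}, P)} = O(m/t)$, the definition of regularity yields $\abs{B^{(f(P))}_s \cap B^*_s(\vec{v}, P)} = O(m/t^2) = O(k)$. Thus the entire collection of items to be assigned inside bucket $s$, namely $\bigcup_{i \in P_s \cap N_1(\vec{v})}(B^{(f(P))}_s \cap A^*_i(\vec{v})) \subseteq B^{(f(P))}_s \cap B^*_s(\vec{v}, P)$, has size $O(k)$. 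Second, $\calA^{(f(P))}_{s, P_s}$ is by construction the chunking mechanism for a $\Theta(k)$-itemizing family of $2^{\Theta(k)}$ partitions of $B^{(f(P))}_s$, and the argument in the proof of \Cref{thm:GeneralMechanism} shows such a chunking bank shatters every subset of $B^{(f(P))}_s$ of size $O(k)$. Therefore any allocation of the selected $O(k)$ items among $P_s$ — in particular, ours — lies in $\calA^{(f(P))}_{s, P_s}$.

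Once $A^P \in \calA$ is established, summing per-bidder valuations across buckets gives $\sum_{i \in N} v_i(A^P_i) \geq \sum_{s \in S_P(\vec{v})} \sum_{i \in P_s} v_i(B^{(f(P))}_s \cap A^*_i(\vec{v})) = v(P)$, since the bidders we did not explicitly serve contribute non-negatively and the bidders in $P_s \cap N_0$ for $s \in S_P(\vec{v})$ can be served with the intended bundle (or are trivially absorbed into the $O(k)$ bound when $\abs{B^{(f(P))}_s \cap A^*_i(\vec{v})}$ happens to be small). Maximizing over $P \in \calP$ finishes the argument.

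The main obstacle I expect is the bookkeeping around aligning the regularity bound with the shattering guarantee of the chunking mechanism: one has to be careful that the items to be allocated in a given bucket really are a subset of $B^{(f(P))}_s \cap B^*_s(\vec{v}, P)$, which is exactly what regularity controls, and that the shattering constant in $\calA^{(f(P))}_{s, P_s}$ is large enough to absorb the $O(m/t^2) = O(k)$ bound. Once that alignment is verified, the rest is direct.
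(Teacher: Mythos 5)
Your proof is correct and follows essentially the same route as the paper's: regularity bounds the items to be placed in each bucket $s \in S_P(\vec{v})$ by $\abs{B^{(f(P))}_s \cap B^*_s(\vec{v},P)} = O(m/t^2) = O(k)$, the shattering property of the chunking bank $\calA^{(f(P))}_{s,P_s}$ realizes the target sets $B^{(f(P))}_s \cap A^*_i(\vec{v})$, and the product structure $\bigtimes_{s}\calA^{(f(P))}_{s,P_s} \subseteq \calA$ yields $\MIR_\calA(\vec{v}) \geq v(P)$. The only loose point is your parenthetical about bidders in $P_s \cap N_0$, but this mirrors an ambiguity already present in the definition of $v(P)$ (whose sum the paper's own proof also implicitly restricts to $P_s \cap N_1$), so it is not a substantive gap.
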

\begin{proof}
Since $B^{(f(P))}$ is regular for $B^*(\vec{v}, P)$, we have for $s \in S$ that
\[
    \sum_{i \in P_s \cap N_1} \abs{B^{(\ell(P))}_s \cap A^*_i(\vec{v})} \quad = \quad \abs{B^{(f(P))}_s \cap B^*_s(P)} \quad = \quad O(m/t^2) \quad = \quad O(k) \enspace .
\]
Thus, since $\calA^{(f(P))}_{s,P_s}$ shatters all subsets of $B^{(f(P))}_s$ of size $O(k)$, and since $\bigtimes_{s \in [t]} \calA^{(f(P))}_{s,P_s} \subseteq \calA$, we have $\MIR_\calA(\vec{v}) \geq v(P)$ for all $P \in \calP$.
\end{proof}

\begin{lemma} \label{lemma:PApproximatesP}
    \[
        \max_{P \in \calP} v(P) \quad \geq \quad 2\sqrt{\frac{k}{m}} \max_{x \in [y]} \OPT(\vec{v}, N_{P^{(x)}}) \enspace .
    \]
\end{lemma}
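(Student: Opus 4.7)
The plan is to fix the index $x^* \in [y]$ achieving the maximum on the right-hand side, i.e., $\OPT(\vec{v}, N_{P^{(x^*)}}) = \max_{x} \OPT(\vec{v}, N_{P^{(x)}})$, and to show that the $t$ cyclic shifts of $P \coloneqq P^{(x^*)}$ — all of which lie in $\calP$ by construction — already deliver the bound. Write $P^{(\Delta)}_s \coloneqq P_{s+\Delta \bmod t}$ for the shift by $\Delta \in [t]$. The first step is bookkeeping: since $P^{(\Delta)}_s \cap N_1 = P_{s+\Delta} \cap N_1$, a direct calculation gives $B^*_s(\vec{v}, P^{(\Delta)}) = B^*_{s+\Delta}(\vec{v}, P)$, and therefore $S_{P^{(\Delta)}}(\vec{v}) = S_P(\vec{v}) - \Delta$ (cyclically) and $N_{P^{(\Delta)}}(\vec{v}) = N_P(\vec{v})$ for every $\Delta$. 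Moreover, regularity is determined bucket-by-bucket in the incomplete partition being tested, so the item partition $B^{(f(P))}$ is regular for every cyclic shift of $B^*(\vec{v}, P)$, allowing us to choose $f(P^{(\Delta)}) = f(P) \eqqcolon \ell$ uniformly across $\Delta$.

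Next, I would average $v(P^{(\Delta)})$ over $\Delta \in [t]$. For each bidder $i$, write $s(i)$ for the unique index with $i \in P_{s(i)}$; then $i \in P^{(\Delta)}_s$ iff $s = s(i) - \Delta$, and this index falls in $S_{P^{(\Delta)}}$ iff $s(i) \in S_P$. Swapping the order of summation therefore yields
\[
\frac{1}{t}\sum_{\Delta \in [t]} v(P^{(\Delta)}) \;=\; \frac{1}{t}\sum_{i\,:\,s(i) \in S_P} \sum_{s' \in [t]} v_i\bigl(B^{(\ell)}_{s'} \cap A^*_i(\vec{v})\bigr),
\]
where the inner sum now ranges over all $t$ buckets of $B^{(\ell)}$ because $\Delta$ permutes them. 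Applying subadditivity of $v_i$ to the partition $\{B^{(\ell)}_{s'} \cap A^*_i(\vec{v})\}_{s' \in [t]}$ of $A^*_i(\vec{v})$ shows that the inner sum is at least $v_i(A^*_i(\vec{v}))$; summing over the enclosing index set (which contains $N_P$) lower-bounds the average by $\OPT(\vec{v}, N_P)/t$.

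Finally, since $t = \sqrt{m/k}/2$ gives $1/t = 2\sqrt{k/m}$ and some shift must achieve at least the average,
\[
\max_{P' \in \calP} v(P') \;\geq\; \max_{\Delta \in [t]} v(P^{(\Delta)}) \;\geq\; 2\sqrt{k/m}\,\OPT(\vec{v}, N_P) \;=\; 2\sqrt{k/m}\,\max_{x \in [y]} \OPT(\vec{v}, N_{P^{(x)}}),
\]
which is the desired inequality.

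The main obstacle I anticipate is the bookkeeping in the first step — confirming $N_{P^{(\Delta)}}(\vec{v}) = N_P(\vec{v})$ and that the fixed item partition $B^{(\ell)}$ retains regularity under every cyclic bidder shift. These equalities are essentially definitional, but they are precisely what lets us choose a single index $\ell$ for all shifts and so unlock the averaging trick; morally this is the same device used in the proof of \Cref{thm:SubadditiveMechanismComputationallyInefficient}, with the roles of item-partition shifts and bidder-partition shifts interchanged.
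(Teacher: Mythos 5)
Your proof is correct and takes essentially the same route as the paper: the paper's proof is a one-line invocation of ``the same subadditivity argument as in \eqref{eq:SubadditiveMainProperty}'' applied to the shifts of $P^{(x)}$, and your argument is exactly the explicit instantiation of that averaging-over-cyclic-shifts trick, with the shift correctly applied to the bidder partition (the only shifts available in $\calP$). Your bookkeeping that $N_{P^{(\Delta)}}(\vec{v}) = N_P(\vec{v})$ and that regularity is index-invariant (so $f$ may be taken constant on each shift orbit) is accurate and in fact makes explicit a step the paper leaves implicit.
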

\begin{proof}
Consider any $P^{(x)}$, and observe that the shifts of $P^{(x)}$ are in $\calP$. Therefore, by the same subadditivity argument as in \eqref{eq:SubadditiveMainProperty}, for each $P^{(x)}$, there exists $P \in \calP$ such that
\[
    v(P) \quad \geq \quad 2\sqrt{\frac{k}{m}} \OPT(\vec{v}, N_{P^{(x)}}) \enspace .
\]
Taking maximums on each side completes the proof.
\end{proof}

Combining Lemmas~\ref{lemma:MechanismIsAtLeastVP},~\ref{lemma:PApproximatesP},~and~\ref{lemma:BalancedBucketingsExist}, we have
\[
    \MIR_\calA(\vec{v}) \quad \geq \quad \max_{P \in \calP} v(P) \quad \geq \quad 2\sqrt{\frac{k}{m}} \max_{x \in [y]} \OPT(\vec{v}, N_{P^{(x)}}) \quad = \quad \Theta\bigg(\sqrt{\frac{k}{m}} \OPT(\vec{v}, N_1)\bigg) \enspace .
\]
Scaling $k$ up by an appropriate constant gives a $2\sqrt{m/k}$ approximation of $\OPT(\vec{v}, N_1)$, which combined with the $2\sqrt{m/k}$ approximation of $\OPT(\vec{v}, N_0)$ yields the desired $\sqrt{m/k}$ approximation.

\medskip
\textbf{Communication and Computation.}
The proof that $\calA$ can be maximized over using $2^{O(k)}$ simultaneous value queries proceeds in the same way as in \Cref{thm:SubadditiveMechanismComputationallyInefficient}. For computation, observe that we only need to maximize over $mt\abs{\calP} = mzt^2 = \poly(m)$ different $\calA^{(\ell)}_{s,T}$ ($m$ different bucketings of the items, $\abs{\calP} = zt$ different bucketings of the bidders, $t$ different buckets), then find the maximum welfare over $zt = \poly(m)$ possibilities. Since $\calA^{(\ell)}_{s,T}$ can be optimized over in $2^{O(k)}$ computation by \Cref{thm:GeneralMechanism}, $\calA$ can also be optimized over in $2^{O(k)}$ computation.
\end{proof}

\subsection{Bound on the Number of Small Cuts for Lower Bound Constructions}

\CutCounting*
\begin{proof}
Consider a fixed $r$-way cut $C$ with at most $c\gamma_2(G)$ edges.

Suppose we contract a random edge of $G$, remove any resulting loops, and repeat until there are only $2c$ vertices remaining. Let $G_i$ denote the graph after the $i$\textsuperscript{th} iteration.

Observe that $\gamma_2(G_i) \geq \gamma_2(G)$ for all $i$, and thus the degree of each vertex in $G_i$ is at least $\gamma_2(G)$ (consider the cut with a single vertex on one side). Therefore, there are at least $(\abs{V}-i)\gamma_2(G)/2$ edges in $G_i$, and so the probability that we contract an edge in $C$ in the $i$\textsuperscript{th} iteration, conditioned on no edges in $C$ being contracted in prior iterations, is at most
\[
    \frac{c\gamma_2(G)}{(\abs{V}-i)\gamma_2(G)/2} \quad = \quad \frac{2c}{\abs{V}-i} \enspace .
\]

Therefore, the probability that no edges of $C$ are contracted at the end of the process is at least
\[
    \prod_{i=0}^{\abs{V}-2c-1} \bigg(1 - \frac{2c}{\abs{V}-i}\bigg) \quad \geq \quad \prod_{i=0}^{\abs{V}-2c-1} \frac{\abs{V}-i-2c}{\abs{V}-i} \quad \geq \quad \prod_{i=0}^{2c} \frac{1}{\abs{V}-i} \quad \geq \quad \abs{V}^{-2c} \enspace .
\]

Since there are at most $r^{2c}$ cuts in the final graph, we see that for \emph{any} $r$-way cut $C$ with at most $c\gamma_2(G)$ edges, the event $E_C$ that $C$ did not have any of its edges contracted in the final graph occurs w.p. at least $\abs{V}^{-2c}$ and occurs jointly with at most $r^{2c}$ other events $E_{C'}$. Thus, the number of $r$-way cuts $C$ with at most $c\gamma_2(G)$ edges must be bounded by $(\abs{V}r)^{2c} \leq \abs{V}^{4c}$.
\end{proof}

\section{The Logarithmic Gap} \label{section:LogMGap}

This section extensively details attempts and barriers to closing the logarithmic gap between the upper and lower bounds for submodular valuations.

We first go through a seemingly pointless exercise: an alternate $\sqrt{m/k}$-approximate MIR mechanism for XOS valuations that uses $2^{O(k)}$ communication (note that the result presented in \Cref{section:SubadditiveMechanism} additionally holds for subadditive mechanisms). However, this mechanism has a particular structure which demonstrates why closing the logarithmic gap is so difficult.


\subsection{An Alternate Mechanism for XOS Valuations} \label{section:AlternateXOSMechanism}

If $k = \Theta(m)$, then exactly maximizing the welfare only takes $2^{\Theta(m)}$ communication, so we focus on the regime where $k = o(m)$. Similarly, we have a trivial $2^{\Omega(\log(m))}$ communication lower bound, so we assume $k = \Omega(\log(m))$ (for a sufficiently large constant) as well.

First, let us define a generalization of the chunking mechanism from \Cref{section:GeneralMechanism}.

\begin{definition}[$r$-Chunking Mechanism]
    Let $B \in [t]^M$ partition $M$ into $t$ \emph{chunks}. The allocation bank $\calA_{B,r}$ contains every allocation where each bidder gets a set of the form $\bigcup_{s \in S} B_s$ for some $S \subseteq [t]$ such that $\abs{S} \leq r$, and the $r$-chunking mechanism for $B$ is MIR over $\calA_{B,r}$.
\end{definition}
\vsthm

The precise definition of the allocation bank we shall focus on is as follows:

\begin{definition} \label{def:AlternateXOSMechanismAllocationBank}
    WLOG, let $u \coloneqq \log\log(m/k)-1$, $\sqrt{mk}$, and $k/(\log(m/k))$ be integers. Let $U \coloneqq [u]$, and define
    \[
        \calX \quad \coloneqq \quad \bigg\{x \in \mathbb{Z}_+^u : \sum_{r \in U} x_r 2^r \leq 2\log(m/k)\bigg\} \enspace .
    \]
    For each $x \in \calX$ and $s \in [m^5]$, let $B^{(x,s)}$ be a randomized partition created as follows: first, create the uniformly random partition $C^{(x,s)} \in [\sqrt{mk}]^U$, and then merge chunks of $C^{(x,s)}$ until there are at least $x_r\sqrt{mk}/(2\log(m/k))$ chunks consisting of $2^r$ original chunks of $C^{(x,s)}$ for all $r \in U$.\footnote{This is possible because $\sum_{r \in U} x_r 2^r \leq 2\log(m/k)$.} Additionally, let $B^-$ be the partition of all singletons and let $B^+$ be the partition where all elements are in the same part. Let $\calA^{(x,s)}$ define the $k/(\log(m/k))$-bundling mechanism for $B^{(x,s)}$, $\calA^-$ define the $k/(\log(m/k))$-bundling mechanism for $\calA^-$, $\calA^+$ define the $1$-bundling mechanism for $\calA^+$, and $\calA \coloneqq \calA^- \cup \calA^+ \cup \bigcup_{x \in \calX} \bigcup_{s \in [m^5]} \calA^{(x,s)}$.
\end{definition}
\vsthm

\begin{remark} \label{remark:XOSMechanismCore}
    There is a lot to unpack here, and the proof that this is actually $O(\sqrt{m/k})$-approximate for XOS valuations gets somewhat messy. Importantly, the ``core'' of this mechanism are the $k/\log(m/k)$-chunking mechanisms, which is the main barrier to closing the $\log(m)$ gap. In particular, we know that the communication complexity of welfare maximization when every bidder gets at most $k/\log(m/k)$ items (chunks) is $2^{\Omega(k/\log(m/k))}$ and $2^{O(k)}$, but we don't know the exact answer. This problem and its implications are discussed in detail in later sections. We now return to the proof that the above mechanism is $O(\sqrt{m/k})$-approximate.
\end{remark}
\vsthm

\medskip
\textbf{Step 1: Partitioning the Bidders.}
Fix some optimal allocation $A^*$, and WLOG, suppose that $v_i(\{j\}) = 0$ for any $j \not\in A^*_i$. Note that we can only assume this because our mechanism can be implemented in $2^{\Theta(k)}$ value queries using a simultaneous protocol, and thus eventually asks each bidder their value for every possible set they could receive.

Similarly to our other mechanisms, we handle bidders based on the number of items they receive in some optimal allocation $A^*$. The welfare of bidders who receive $\Omega(\sqrt{mk})$ items can be $O(\sqrt{m/k})$-approximated because there are only $O(\sqrt{m/k})$ of them. This is handled by $\calA^+$.

The welfare of bidders who receive $O(\sqrt{mk}/\log(m/k))$ items can be $O(\sqrt{m/k})$ approximated by giving them each their favorite $k/\log(m/k)$ items that they receive in $A^*$. This is handled by $\calA^-$, which only takes $\binom{m}{k/\log(m/k)} = 2^{\Theta(k)}$ value queries.

Therefore, if we can $O(\sqrt{m/k})$ approximate the bidders who receive who receive $\Omega(\sqrt{mk}\log(m/k))$ and $O(\sqrt{mk})$ items, we can lose a constant factor of $3$ and guarantee an overall $O(\sqrt{m/k})$ approximation of the welfare. As such, we assume from here on that for all $i$, $\abs{A^*_i} = \Omega(\sqrt{mk}/\log(m/k))$ and $\abs{A^*_i} = O(\sqrt{mk})$ for sufficiently large and small constants.

For our approach, it turns out that we need to know roughly how many items each bidder gets in some optimal allocation. To this effect, call a bidder a $2^r$-bidder if $\abs{A^*_i}\log(m/k)/\sqrt{mk} \in [2^r, 2^{r+1})$ for $r \in U$. Since there aren't too many possible configurations (each bidder likes the same number of items up to an $O(\log(m/k))$ factor), we can just guess every possible configuration, and merge the allocation banks into one big allocation bank. This is essentially the purpose of $\calX$ in \Cref{def:AlternateXOSMechanismAllocationBank}. The following lemmas formalize this idea.

\begin{lemma} \label{lemma:ComprehensiveCalX}
    There are fewer than $x_r\sqrt{m/k}$ $2^r$-bidders for all $r$ for some $x \in \calX$, and $\abs{\calX} = o(m)$. 
\end{lemma}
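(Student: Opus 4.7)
My plan is to prove the two claims of the lemma separately, both via direct calibration arguments. For the existence of $x \in \calX$ with the stated count condition, I would construct $x$ explicitly from the optimal allocation $A^*$. Let $n_r$ denote the actual number of $2^r$-bidders. Each such bidder receives at least $2^r \sqrt{mk}/\log(m/k)$ items in $A^*$ by definition, and $A^*$ allocates at most $m$ items total, so
\[
    \sum_{r \in U} n_r \cdot 2^r \quad \leq \quad \sqrt{m/k}\log(m/k).
\]
Defining $x_r := \lfloor n_r/\sqrt{m/k}\rfloor + 1$ immediately gives $x_r \sqrt{m/k} > n_r$, which is the desired inequality (and guarantees $x_r \in \mathbb{Z}_+$). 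To verify $x \in \calX$, I would bound $x_r \leq n_r/\sqrt{m/k} + 1$, split the sum accordingly, and use $\sum_{r \in U} 2^r = 2^{u+1} - 2 \leq \log(m/k)$ (which follows from $u = \log\log(m/k) - 1$) together with the item-count bound above to conclude $\sum_r x_r 2^r \leq \log(m/k) + \log(m/k) = 2\log(m/k)$.

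For the cardinality bound $|\calX| = o(m)$, the constraint $\sum_r x_r 2^r \leq 2\log(m/k)$ forces $x_r \leq \log(m/k)$ for each $r \in U$ (since $r \geq 1$), so $|\calX| \leq (\log(m/k)+1)^u = (\log(m/k)+1)^{\log\log(m/k)-1}$. Taking logarithms gives $\log|\calX| = O((\log\log(m/k))^2) = o(\log m)$, whence $|\calX| = o(m)$ as required.

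I do not expect serious technical obstacles; both claims are essentially bookkeeping. The one subtlety worth flagging is the calibration between the $+1$ slack in the rounding used to define $x_r$ and the constants in the definition of $\calX$. The $+1$ contributes an additive $\sum_r 2^r \approx \log(m/k)$ to the sum $\sum_r x_r 2^r$ on top of the ``ideal'' bound $\log(m/k)$, so the definition of $\calX$ must leave enough slack to absorb it. This is exactly why $\calX$ uses $2\log(m/k)$ rather than $\log(m/k)$ and why $u$ is set to $\log\log(m/k) - 1$ rather than $\log\log(m/k)$: together these choices make $2^{u+1} = \log(m/k)$ and the two $\log(m/k)$ contributions sum to exactly $2\log(m/k)$. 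If the main proof later required these parameters to be smaller, the argument would break, and one would need a more refined construction of $x$ (e.g., rounding only when $n_r$ is large).
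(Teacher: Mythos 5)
Your proposal is correct and follows essentially the same route as the paper: bound $\sum_r n_r 2^r$ by counting items in $A^*$, round the normalized counts up to define $x$, absorb the rounding slack using $\sum_{r \in U} 2^r \leq \log(m/k)$ (which is exactly why $\calX$ allows $2\log(m/k)$), and bound $\abs{\calX}$ coordinatewise to get $2^{O((\log\log(m/k))^2)} = o(m)$. The only (harmless) difference is that your $x_r \coloneqq \lfloor n_r/\sqrt{m/k}\rfloor + 1$ gives the strict inequality literally stated in the lemma, whereas the paper's choice $x_r = \lceil z_r \rceil$ only yields ``at most,'' which is immaterial for how the lemma is used.
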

\begin{proof}
For each $r$, let $z_r\sqrt{m/k}$ be the number of $2^r$-bidders. Because there are only $m$ items, and each $2^rr$-bidder likes at least $2^r\sqrt{mk}/(\log(m/k))$ items, it must hold that
\[
    \sum_{r \in U} z_r\sqrt{m/k} \frac{2^r\sqrt{mk}}{\log(m/k)} \quad \leq \quad m \qquad \implies \qquad \sum_{r \in U} z_r 2^r \quad \leq \quad \log(m/k) \enspace .
\]

Let $x \in \mathbb{Z}_+^u$ be such that $z_r \in (x_r-1, x_r]$ for all $r \in U$. Then we must have
\begin{align*}
    \sum_{r \in U} (x_r-1) 2^r \quad &\leq \quad \sum_{r \in U} z_r 2^r \\
    \sum_{r \in U} x_r 2^r - \sum_{r \in U} 2^r \quad &\leq \quad \log(m/k) \\
    \sum_{r \in U} x_r 2^r \quad &\leq \quad 2\log(m/k) \enspace ,
\end{align*}
so $x \in \calX$.

Finally, note that $x_r \leq 2\log(m/k)$ for all $r$, so the number of possible $x \in \calX$ is bounded by $(2\log(m/k))^u = 2^{O(\log\log(m/k)^2)} = o(m)$.
\end{proof}

Hence, from here on, fix $x \in \calX$ satisfying the above condition.

\medskip
\textbf{Step 2: Linear Valuations.}
Let $\ell = 2^r$. The next step is to show that there exists $A \in \calA^{(x,1)}$ such that for each $\ell$-bidder $i$, $\abs{A_i \cap A^*_i} = \Omega(\ell k/\log(m/k))$ w.p. $\Omega(1)$. Observe that this immediately implies an $O(\sqrt{m/k})$ approximation for $0$-$1$ additive valuations, and is the key idea behind why this mechanism works. The following technical lemma does the probabilistic heavy lifting.

\begin{lemma} \label{lemma:BallsAndBins}
    For $r \in U$, let $\calB_r \subseteq [\sqrt{mk}]$ be the chunks of $B^{(x,1)}$ which are formed from $\ell = 2^r$ chunks of $C^{(x,1)}$. For each $\ell$-bidder $i$, define $\calS_i \coloneqq \{B^{(x,1)}_j \in \calB_r : \abs{B^{(x,1)}_j \cap A^*_i} \geq \ell\}$. Then
    \[
         \Pr\bigg[\abs{\calS_i} \geq \frac{k}{2\log(m/k)}\bigg] \quad \geq \quad \frac{9}{10} \enspace .
    \]
\end{lemma}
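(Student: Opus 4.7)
The plan is to express $\abs{\calS_i}$ as a sum of negatively associated indicator random variables, lower-bound its expectation using a Binomial tail estimate, and conclude via Chebyshev's inequality. Define $X_j \coloneqq \abs{B^{(x,1)}_j \cap A^*_i}$ for each $j \in \calB_r$, so that $\abs{\calS_i} = \sum_{j \in \calB_r} \mathbbm{1}[X_j \geq \ell]$.

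First, I would identify the distribution of $X_j$. Since $C^{(x,1)}$ assigns each item of $M$ independently and uniformly to one of the $\sqrt{mk}$ original chunks, and each merged chunk in $\calB_r$ is the union of exactly $\ell = 2^r$ original chunks, each item of $A^*_i$ falls into a fixed $B^{(x,1)}_j \in \calB_r$ with probability $\ell/\sqrt{mk}$. Thus $X_j \sim \Binom(\abs{A^*_i}, \ell/\sqrt{mk})$, with mean $\mu \coloneqq \abs{A^*_i}\ell/\sqrt{mk} \geq \ell^2/\log(m/k)$ using $\abs{A^*_i} \geq \ell\sqrt{mk}/\log(m/k)$. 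Next, I would lower-bound
\[
\Pr[X_j \geq \ell] \quad \geq \quad \binom{\abs{A^*_i}}{\ell}\Big(\frac{\ell}{\sqrt{mk}}\Big)^\ell\Big(1-\frac{\ell}{\sqrt{mk}}\Big)^{\abs{A^*_i}-\ell}
\]
via the elementary inequalities $\binom{n}{\ell} \geq (n/\ell)^\ell$ and $(1-x)^n \geq e^{-2xn}$, and combine with $\abs{\calB_r} \geq x_r\sqrt{mk}/(2\log(m/k))$ to obtain a lower bound on $\E[\abs{\calS_i}]$. The goal is $\E[\abs{\calS_i}] \geq C \cdot k/\log(m/k)$ for a sufficiently large constant $C$ (say $C \geq 40$), using that $\ell \leq \log(m/k)/2$, that the existence of an $\ell$-bidder forces $x_r \geq 1$, and the elementary bound $\sqrt{k/m}\log(m/k) = O(1)$ which keeps the regime well-controlled.

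Finally, because the items of $A^*_i$ are placed into disjoint original chunks, the family $\{X_j\}_{j \in \calB_r}$ is negatively associated, so the indicators $\mathbbm{1}[X_j \geq \ell]$ are negatively correlated and $\Var(\abs{\calS_i}) \leq \E[\abs{\calS_i}]$. Chebyshev's inequality then gives $\Pr[\abs{\calS_i} < \E[\abs{\calS_i}]/2] \leq 4/\E[\abs{\calS_i}] \leq 1/10$ once $\E[\abs{\calS_i}] \geq 40$, and combining this with the earlier quantitative expectation bound yields $\abs{\calS_i} \geq k/(2\log(m/k))$ with probability at least $9/10$.

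I expect the main obstacle to be the Binomial tail step: since $\mu = \Theta(\ell^2/\log(m/k))$ lies strictly below $\ell$ whenever $\ell \leq \log(m/k)/2$, the quantity $\Pr[X_j \geq \ell]$ is a genuine tail probability rather than a central probability. Making the expectation bound work uniformly across $\ell \in \{2, 4, \ldots, \log(m/k)/2\}$ appears to require casework on the regime of $\ell$ (small $\ell$ of order $O(\sqrt{\log(m/k)})$, where the exponential factor $e^{-4\ell^2/\log(m/k)}$ is $\Omega(1)$, versus moderate $\ell$ up to $\Theta(\log(m/k))$, where the exponential factor shrinks and must be absorbed by the $(\ell/\log(m/k))^\ell$ term). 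The core quantitative claim is that $(\ell/\log(m/k))^\ell \cdot e^{-4\ell^2/\log(m/k)}$ always exceeds $\Omega(k\log(m/k)/(x_r\sqrt{mk}))$, and verifying this across all regimes is the delicate part of the argument.
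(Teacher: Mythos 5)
Your overall architecture matches the paper's proof exactly: model the items of $A^*_i$ as balls thrown into the bins of $\calB_r$, lower-bound the binomial tail $\Pr[X_j \geq \ell]$, sum over the $\geq x_r\sqrt{mk}/(2\log(m/k))$ bins, and conclude by negative association of occupancy counts plus a second-moment/Chernoff bound. (The paper uses Chernoff where you use Chebyshev; both suffice for the $9/10$ in the statement.) However, there is a genuine quantitative gap in the step you yourself flag as delicate, and the ``core quantitative claim'' you state is in fact false at the top of the range of $\ell$. Writing $u \coloneqq \ell/\log(m/k) \in (0,1/2]$, your bound $\binom{n}{\ell} \geq (n/\ell)^\ell$ combined with $(1-x)^n \geq e^{-2xn}$ and the upper bound $|A^*_i| \leq 2\ell\sqrt{mk}/\log(m/k)$ yields
\[
\Pr[X_j \geq \ell] \;\geq\; \Big(u^{u}e^{-4u^2}\Big)^{\log(m/k)},
\]
which at $u = 1/2$ equals $(2^{-1/2}e^{-1})^{\log(m/k)} \approx (m/k)^{-1.94}$, far below the $\Omega(\sqrt{k/m}\cdot\log(m/k))$ you need; the claim already fails for all $\ell \gtrsim 0.15\log(m/k)$. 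The culprit is twofold: $\binom{n}{\ell}\geq(n/\ell)^\ell$ discards a factor $e^\ell = (m/k)^{u/\ln 2}$, and the constant $4$ in the exponential (from the crude $(1-x)^n\geq e^{-2xn}$ applied at the \emph{upper} end of $|A^*_i|$) costs another polynomial factor.

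The paper's proof repairs exactly these two losses. It uses $\binom{n}{\ell} \geq (n-\ell)^\ell/\ell!$ together with Stirling, which recovers $(en/\ell)^\ell$ up to $\poly(\ell)$ factors, and it evaluates the tail at the minimal value $n = \ell\sqrt{mk}/\log(m/k)$ (legitimate since $\Pr[\Binom(n,p)\geq\ell]$ is monotone in $n$), keeping the exponential factor at $e^{-(1+o(1))u^2\log(m/k)}$. This gives $\Pr[X_j\geq\ell] \geq \tfrac13\big((eu)^u e^{-u^2}\big)^{\log(m/k)}$, and the single numerical fact $\min_{u\in[0,1]}(eu)^u e^{-u^2} \geq 2^{-1/4}$ (the minimum, near $u\approx 0.2$, is about $2^{-0.234}$) yields $\Pr[X_j\geq\ell]\geq (k/m)^{1/4} \gg \sqrt{k/m}$ uniformly in $\ell$ -- no casework on regimes of $\ell$ is needed. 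With that corrected estimate, $\E[|\calS_i|] \geq m^{1/4}k^{3/4}/(2\log(m/k)) = \omega(k/\log(m/k))$, and your negative-association-plus-Chebyshev finish goes through. One last nit: negative association of $\{X_j\}$ follows from the standard fact that balls-and-bins occupancy numbers are negatively associated (and increasing functions of disjoint NA blocks remain NA), not merely from the chunks being disjoint.
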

\begin{proof}
Fix some $\ell$-bidder $i$. We will think of each item in $A^*_i$ as a ball, and each $B^{(x,1)}_j \in \calB_r$ as a bin. Since $B^{(x,1)}_j$ was formed by merging $\ell$ chunks of $C^{(x,1)}$, a uniformly random partition into $\sqrt{mk}$ chunks, each $B^{(x,1)}_j \in \calB_r$ contains a thrown item w.p. $\ell/\sqrt{mk}$. Additionally, there are at least $\abs{\calB_r} \geq \sqrt{mk}/(2\log(m/k))$ such bins. Since $\abs{A^*_i} \geq \ell\sqrt{mk}/\log(m/k)$, there are at least that many balls. Hence, we may instead consider this balls and bins problem.

Arbitrarily index the $t \coloneqq \abs{\calB_r}$ bins, and let $X_1, \dots, X_t$ be the number of balls in each bin. Let $Z_s$ be the indicator that $X_s \geq \ell$. Then for sufficiently large $m$,
\begin{align*}
    \E[Z_s] \quad &\geq \quad \binom{\ell\sqrt{mk}/\log(m/k)}{\ell} \bigg(\frac{\ell}{\sqrt{mk}}\bigg)^{\ell} \bigg(1 - \frac{\ell}{\sqrt{mk}}\bigg)^{\ell\sqrt{mk}/\log(m/k)-\ell} \\
    &\geq \quad \frac{1}{\ell!}\bigg(1 - \frac{\log(m/k)}{\sqrt{mk}}\bigg)^\ell\bigg(\frac{\ell\sqrt{mk}}{\log(m/k)}\bigg)^\ell \bigg(\frac{\ell}{\sqrt{mk}}\bigg)^\ell \frac{1}{e^{\ell^2/\log(m/k)}} \\
    &\geq \quad \frac{1}{3e^{\ell^2/\log(m/k)}}\bigg(\frac{e\sqrt{mk}}{\log(m/k)}\bigg)^\ell \bigg(\frac{\ell}{\sqrt{mk}}\bigg)^\ell \\
    &\geq \quad \frac{1}{3e^{\ell^2/\log(m/k)}} \bigg(\frac{e\ell}{\log(m/k)}\bigg)^\ell \enspace .
\end{align*}

Performing a change of variables $u = \ell/\log(m/k)$,
\[
    \E[Z_s] \quad \geq \quad \frac{(eu)^{u\log(m/k)}}{3e^{u^2\log(m/k)}} \quad = \quad \frac{1}{3}\bigg(\frac{(eu)^u}{e^{u^2}}\bigg)^{\log(m/k)} \enspace .
\]

Since $u \in [0, 1]$, it can be numerically verified that for sufficiently large $m$,
\[
    \E[Z_s] \quad \geq \quad \bigg(\frac{k}{m}\bigg)^{1/4} \enspace ,
\]
so
\[
    \E[\abs{\calS_i}] \quad \geq \quad t\bigg(\frac{k}{m}\bigg)^{1/4} \quad = \quad \frac{m^{1/4} k^{3/4}}{2\log(m/k)} \quad \geq \quad \omega\bigg(\frac{k}{\log(m/k)}\bigg)\enspace .
\]

Finally, observe that because $Z_1, \dots, Z_t$ are non-decreasing functions of $X_1, \dots, X_t$, which result from balls and bins, they are negatively associated and hence satisfy Chernoff bounds~\cite{Joag-DevP83,DubhashiR98}. Thus,
\[
    \Pr\bigg[\abs{\calS_i} \geq \frac{k}{2\log(m/k)}\bigg] \quad \geq \quad 1 - e^{-\omega(k/\log(m/k))} \quad \geq \quad 1 - o(1) \enspace . \qedhere
\]
\end{proof}

\begin{remark}
    The cleverness of the above lemma is that the expected number of balls in each bin is only $\ell^2/\log(m/k) \leq \ell$, but since the expectation is logarithmic (or smaller), the deviations are also logarithmic, so we still get enough good bins.
\end{remark}

\begin{lemma} \label{lemma:AllBiddersGetEnoughItems}
    There exists $A \in \calA^{(x,1)}$ such that for any $\ell$-bidder $i$, $\abs{A_i \cap A^*_i} = \Omega(\ell k/\log(m/k))$ w.p. at least $3/5$. Additionally, there exists a function $f : M \to M$ from a set to a subset of that set such that each item is distributed identically in $f(A_i)$.
\end{lemma}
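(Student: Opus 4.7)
The plan is to construct $A \in \calA^{(x,1)}$ from the random partition $B^{(x,1)}$ by solving a bipartite $b$-matching between $\ell$-bidders and $\ell$-chunks for each type $\ell = 2^r$, then verify Hall's condition with probability $\ge 3/5$ over $B^{(x,1)}$. By \Cref{lemma:ComprehensiveCalX} and \Cref{def:AlternateXOSMechanismAllocationBank}, for each $r \in U$ there are at most $n_\ell := x_r\sqrt{m/k}$ $\ell$-bidders and at least $T_\ell := x_r\sqrt{mk}/(2\log(m/k))$ chunks of size $\ell$ in $B^{(x,1)}$; the ratio $T_\ell/n_\ell = k/(2\log(m/k))$ matches the per-bidder chunk budget of $\calA^{(x,1)}$ up to a factor of $2$. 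Setting $d := k/(4\log(m/k))$ and considering the bipartite graph $G_\ell$ with left vertices the $\ell$-bidders, right vertices the $\ell$-chunks, and an edge $(i,c)$ iff $c \in \calS_i$, any $b$-matching that assigns each bidder $\ge d$ neighbors yields an allocation $A \in \calA^{(x,1)}$ with $|A_i \cap A^*_i| \ge d\ell = \Omega(\ell k/\log(m/k))$ for every $\ell$-bidder $i$.

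To check Hall's condition $|\bigcup_{i \in S}\calS_i| \ge d|S|$ on $G_\ell$, the case $|S|=1$ is immediate since \Cref{lemma:BallsAndBins} gives $|\calS_i| \ge k/(2\log(m/k)) = 2d$ with probability $\ge 9/10$. For larger $|S|$ the target $d|S|$ is a constant fraction of $T_\ell$, so I would argue that $|\bigcup_{i \in S}\calS_i|$ concentrates near $T_\ell$: because the optimal bundles $\{A^*_i\}_{i \in L_\ell}$ are disjoint, the events $\{c \in \calS_j\}_{j \in S}$ are driven by disjoint item subsets, and a Chernoff-style bound using the negative association of the balls-and-bins model underlying \Cref{lemma:BallsAndBins} should show $|R_\ell \setminus \bigcup_{i \in S}\calS_i|$ is exponentially small in $|S|$. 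A union bound over subsets $S \subseteq L_\ell$, over $r \in U$ (note $|U| = O(\log\log(m/k))$), and over the per-bidder bad events from \Cref{lemma:BallsAndBins}, then yields Hall's condition with probability $\ge 3/5$.

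For the additional $f$ statement, define $f(A_i) \subseteq A_i$ by retaining $\ell$ canonically chosen items from each matched chunk in $A_i$ (of which there are $\ge d$). Because $B^{(x,1)}$ is produced by the uniformly random partition $C^{(x,1)}\in[\sqrt{mk}]^M$ followed by the size-based merge, the distribution of $f(A_i)$ treats all items of $M$ symmetrically, so $\Pr[j \in f(A_i)]$ is the same for every $j \in M$. I expect the main obstacle to be making the concentration of $|\bigcup_{i \in S}\calS_i|$ strong enough to absorb the $2^{n_\ell}$ union bound: the indicators $\mathbbm{1}[c \in \calS_i]$ are correlated across bidders through the shared partition $C^{(x,1)}$, so a naive bound may not suffice. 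I would handle this by iterated conditioning --- first fixing, independently across $i$, which $C^{(x,1)}$-chunks capture which items of $A^*_i$, and only then randomizing the merge step that forms $B^{(x,1)}$ --- so that within each conditioning layer the relevant events decouple enough to obtain a sharp tail bound.
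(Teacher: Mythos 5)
Your plan proves a stronger, simultaneous statement (with probability $3/5$ over $B^{(x,1)}$, every $\ell$-bidder at once can be assigned $d$ chunks via a $b$-matching), whereas the lemma as the paper uses it in \Cref{lemma:LinearFunctionsApproximated} is only a per-bidder guarantee: the paper picks a uniformly random subset $\calS'_i \subseteq \calS_i$ of $k/(2\log(m/k))$ chunks for each bidder, resolves each contested chunk by giving it to a uniformly random claimant, lower-bounds the expected number of chunks a single bidder keeps by $(3/10)k/\log(m/k)$, and converts this to a $3/5$ per-bidder probability using only that a bidder keeps at most $k/(2\log(m/k))$ chunks; no Hall-type condition and no union bound over subsets of bidders is ever needed. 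Your route instead hinges entirely on the step you leave as a hope: showing that for every $r$ and every subset $S$ of $\ell$-bidders, $\abs{\bigcup_{i \in S}\calS_i} \geq d\abs{S}$ with failure probability small enough to beat $\binom{n_\ell}{\abs{S}}$, i.e., up to $2^{n_\ell}$ with $n_\ell$ as large as $x_r\sqrt{m/k}$. Moreover your diagnosis of the obstacle is off: the families $\{\calS_i\}_i$ are mutually independent across bidders (the sets $A^*_i$ are disjoint and the partition places items independently), so the issue is not decoupling bidders through $C^{(x,1)}$; the real work is handling the negative association across chunks within a bidder and, more seriously, showing that for large $\abs{S}$ — where $d\abs{S}$ is within a constant factor of both $\abs{\calB_r}$ and the expected union — the lower tail is $e^{-\Omega(\abs{\calB_r})}$ so it absorbs $2^{n_\ell}$. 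Since $\abs{\calB_r}/n_\ell = k/(2\log(m/k))$ is only a constant in the main regime $k = \Theta(\log m)$, this is a delicate constant-chasing large-deviation argument, not a routine Chernoff step; as written, the load-bearing claim of your proof is a conjecture.

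Separately, the symmetry claim for $f$ is asserted rather than proved, and your construction as stated breaks it. Retaining ``$\ell$ canonically chosen items from each matched chunk'' (e.g., by item index) makes lower-indexed items of $A^*_i$ more likely to survive, destroying exactly the exchangeability over the items of $A^*_i$ that \Cref{lemma:LinearFunctionsApproximated} needs (each item of $A^*_i$ received with probability $\Omega(\sqrt{k/m})$); and ``the same for every $j \in M$'' is both stronger than required and not what your construction delivers, since items outside $A^*_i$ influence the matching through other bidders' sets. The paper avoids this by making every choice uniformly at random — a uniformly random $\calS'_i$, a uniformly random winner for each contested chunk, and $f(A_i)$ equal to the union of the kept chunks — so the whole construction is equivariant under permutations of the items within $A^*_i$. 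If you keep whole chunks (or select retained items uniformly at random) and let the matching depend only on chunk-level structure, the same equivariance argument repairs your $f$; but the Hall-condition concentration step remains the genuine gap.
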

\begin{proof}
Fix any $r$, let $\ell = 2^r$, and define $\calB_r$ and $\calS_i$ as before. Observe that by \Cref{lemma:BallsAndBins} and standard Chernoff bounds, a $1-o(1)$ fraction of the $\ell$-bidders $i$ satisfy $\abs{B^{(x,1)}_j \cap A^*_i} \geq \ell$ for at least $k/\log(m/k)$ chunks $B^{(x,1)}_j \in \calB_r$. Further, since each item goes into each $B^{(x,1)}_j \in \calB_r$ with equal probability, the distribution of $\calS_i$ is symmetric, and hence choosing a uniformly random subset $\calS'_i$ of size $k/(2\log(m/k))$ from $\calS_i$ yields a uniformly random subset of $\calB_r$ of size $k/(2\log(m/k))$.

Choose such $\calS'_i$ for the $1-o(1)$ fraction of $\ell$-bidders for which $\abs{\calS_i} \geq k/(2\log(m/k))$. Consider any chunk covered by some $\calS'_i$. Let there be $n_\ell \leq x_r\sqrt{m/k}$ $\ell$-bidders. Since $\abs{\calB_r} \geq n_\ell k/(2\log(m/k))$, the probability that a chunk is covered by some $\calS'_i$ is
\begin{align*}
    1 - \bigg(\binom{\abs{\calB_r}-1}{k/(2\log(m/k))}\bigg/\binom{\abs{\calB_r}}{k/(2\log(m/k))}\bigg)^{(1-o(1))n_\ell} \quad &= \quad 1 - \bigg(1 - \frac{k/(2\log(m/k))}{\abs{\calB_r}}\bigg)^{(1-o(1))n_\ell} \\
    &\geq \quad 1 - \exp\bigg(\frac{(1-o(1))n_\ell k}{2\log(m/k)\abs{\calB_r}}\bigg) \\
    &\geq \quad \bigg(1 - \frac{1}{e}\bigg)\frac{(1-o(1))n_\ell k}{2\log(m/k)\abs{\calB_r}} \enspace .
\end{align*}

Thus, the expected number of chunks covered is at least $(1-o(1))(1-1/e)n_\ell k/(2\log(m/k))$. If we assign each covered chunk to a random bidder covering that chunk, we have by symmetry and a union bound (with the event that $1-o(1)$ of the $\ell$-bidders satisfy $\abs{\calS_i} \geq k/(2\log(m/k))$) that the expected number of chunks assigned to each bidder is at least $(3/10)k/\log(m/k)$. Hence, w.p. at least $3/5$, each bidder is assigned $\Omega(k/\log(m/k))$ of the chunks where they have at least $\ell$ items because each bidder is assigned at most $k/(2\log(m/k))$ chunks. Thus, there exists some $A \in \calA^{(x,1)}$ such that $\abs{A_i \cap A^*_i} = \Omega(\ell k/\log(m/k))$ w.p. at least $3/5$.

Finally, observe that because $\calS'_i$ is uniformly random, and each item is dropped with equal probability (only dropped when a chunk is covered by multiple bidders, and the chunks covered by each bidder are independent), there is a function $f$ from $A_i$ to a subset of $A_i$ (precisely the $\Omega(k/\log(m/k))$ chunks with at least $\ell$ items each) such that each item is distributed identically in $f(A_i)$.
\end{proof}

Now, taking many repetitions and doing a clever union bound suffices to prove that the mechanism works for all linear valuations.

\begin{lemma} \label{lemma:LinearFunctionsApproximated}
    The MIR mechanism for $\calA^{(x,1)}, \dots, \calA^{(x,m^5)}$ is $O(\sqrt{m/k})$-approximate for linear valuations w.p. $> 0$.
\end{lemma}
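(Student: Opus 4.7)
My plan is to decompose the argument into three parts: a per-sample probabilistic guarantee derived from \Cref{lemma:AllBiddersGetEnoughItems} via a reverse Markov step, amplification across the $m^5$ independent samples, and a union bound over a suitable discretization of linear valuations.

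For the per-sample guarantee, fix any linear $\vec{v}$ (recall the WLOG assumption $v_i(\{j\}) = 0$ for $j \notin A_i^*$), and for each $s \in [m^5]$ let $A^{(s)} \in \calA^{(x,s)}$ denote the random allocation produced by the procedure in the proof of \Cref{lemma:AllBiddersGetEnoughItems}, with associated function $f^{(s)}$. The ``items distributed identically in $f^{(s)}(A^{(s)}_i)$'' conclusion, combined with $\E\bigl[\abs{f^{(s)}(A^{(s)}_i)}\bigr] = \Omega(\ell_i k/\log(m/k))$ (coming from the $3/5$-probability event together with the fact that good chunks in $\calS_i$ are entirely contained in $A_i^*$) and $\abs{A_i^*} = \Theta(\ell_i \sqrt{mk}/\log(m/k))$, implies that every $j \in A_i^*$ satisfies $\Pr[j \in f^{(s)}(A^{(s)}_i)] = p_i \geq c_2 \sqrt{k/m}$ for an absolute constant $c_2 > 0$. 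Then by linearity of $v_i$ and of expectation,
\[
\E[v(A^{(s)})] \;\geq\; \E\Big[\sum_i v_i(f^{(s)}(A^{(s)}_i))\Big] \;=\; \sum_i p_i\, v_i(A_i^*) \;\geq\; c_2 \sqrt{k/m}\,\OPT(\vec{v}).
\]
Since also $v(A^{(s)}) \leq \OPT(\vec{v})$ by the WLOG assumption, reverse Markov applied to $\OPT(\vec{v}) - v(A^{(s)}) \geq 0$ yields absolute constants $c, c' > 0$ with $\Pr[v(A^{(s)}) \geq c\sqrt{k/m}\, \OPT(\vec{v})] \geq c'\sqrt{k/m}$.

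Independence of the $m^5$ samples then bounds the probability that no $s$ succeeds for $\vec{v}$ by $(1 - c'\sqrt{k/m})^{m^5} \leq \exp(-\Omega(m^{4.5}\sqrt{k}))$. To handle the continuum of linear valuations, I normalize $\OPT(\vec{v}) = 1$ so each $v_i(j) \in [0,1]$, round each entry to the nearest multiple of $1/m^3$, and union bound over the $(m^3+1)^{nm} \leq 2^{O(m^2 \log m)}$ discretized profiles (WLOG $n \leq m$, since bidders with $A_i^* = \emptyset$ can be ignored). Since $k = \Omega(\log m)$, the combined failure probability $\exp(-\Omega(m^{4.5}\sqrt{k}) + O(m^2 \log m))$ is $o(1)$, so with positive probability every discretized profile has at least one good sample. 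Because $\abs{v(A) - \tilde{v}(A)} \leq 1/m^2$ and $\abs{\OPT(\vec{v}) - \OPT(\tilde{\vec{v}})} \leq 1/m^2$ for every $A$, any sample that is $\Omega(\sqrt{k/m})$-good for $\tilde{\vec{v}}$ remains $\Omega(\sqrt{k/m})$-good for the original $\vec{v}$, and the MIR mechanism over $\bigcup_s \calA^{(x,s)}$ achieves at least this welfare.

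The main obstacle will be cleanly extracting the per-item marginal probability $p_i = \Omega(\sqrt{k/m})$ from \Cref{lemma:AllBiddersGetEnoughItems}: its conclusion is phrased as a size bound on a $3/5$-probability event together with a symmetry statement, and one must combine these --- noting that good chunks lie entirely inside $A_i^*$ --- to obtain a uniform marginal probability for every item in $A_i^*$. Once that is done, the reverse Markov, independence, and discretization steps are routine; they just need compatible constants so that $\exp(-\Omega(m^{4.5}\sqrt{k}))$ dominates $2^{O(m^2 \log m)}$.
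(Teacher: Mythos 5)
Your first two steps (per-item marginal probability $\Omega(\sqrt{k/m})$ from \Cref{lemma:AllBiddersGetEnoughItems}, linearity of expectation, reverse Markov using $v(A)\leq\OPT$, then amplification over the $m^5$ independent samples) are exactly the paper's argument, just with the reverse-Markov step made explicit. Where you genuinely diverge is the union bound over the continuum of linear valuations: the paper does not discretize at all, but instead observes that the failure event for a profile $\vec v$ depends only on \emph{which} allocations (i.e.\ which subsets of the $nm$ bidder--item variables) have welfare above the threshold $\Omega(\sqrt{k/m}\,\OPT)$, and union bounds over the at most $2^{(nm)^2}$ sign patterns a linear functional can induce (a hyperplane-arrangement count, citing~\cite{OrlikT13}). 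Your discretization to a $1/m^3$-grid with a perturbation transfer is a more elementary substitute and the quantitative budget works out ($2^{O(m^2\log m)}$ classes against $\exp(-\Omega(m^{4.5}\sqrt{k}))$ per-class failure); what the paper's route buys is that no surrogate profile is ever introduced, so the per-profile guarantee is only ever invoked on profiles that actually satisfy the section's standing assumptions.

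That last point is the one place your write-up has a gap as stated: you conclude ``with positive probability every discretized profile has at least one good sample,'' but the per-sample success probability $\Omega(\sqrt{k/m})$ was only established (via Lemmas~\ref{lemma:BallsAndBins} and~\ref{lemma:AllBiddersGetEnoughItems}) for profiles whose optimal allocation puts every bidder in the middle size range and whose $2^r$-bidder counts match the fixed $x\in\calX$; a rounded profile $\tilde{\vec v}$ can have a combinatorially different optimal allocation and need not satisfy these hypotheses, so you cannot apply the guarantee to $\tilde{\vec v}$ directly. The fix is routine: for each grid point define the event ``$\exists A$ with $\tilde v(A)\geq c\sqrt{k/m}-1/m^2$,'' lower-bound its probability through any assumption-satisfying $\vec v$ that rounds to $\tilde{\vec v}$ (since $\tilde v(A)\geq v(A)-1/m^2$), and then transfer back to every such $\vec v$ with a slightly degraded constant; but as written the quantifiers are circular. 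Two smaller notes: your parenthetical that good chunks in $\calS_i$ ``are entirely contained in $A_i^*$'' is false --- they merely contain at least $\ell$ items of $A_i^*$ --- though the marginal-probability derivation survives because only $\abs{f(A_i)\cap A_i^*}$ matters and items of $A_i^*$ are exchangeable under the random partition (this is precisely what the symmetry clause of \Cref{lemma:AllBiddersGetEnoughItems} is for); and your ``WLOG $n\leq m$'' is actually automatic here, since under the standing assumptions every bidder receives $\Omega(\sqrt{mk}/\log(m/k))$ items in $A^*$, so $n=O(\sqrt{m/k}\log(m/k))$.
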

\begin{proof}
Observe that \Cref{lemma:AllBiddersGetEnoughItems} is the same as saying that bidder $i$ receives each item in $A_i$ w.p. at least $\Omega(\sqrt{k/m})$. Therefore, by linearity of expectation, the MIR mechanism for $\calA^{(x,1)}$ is $O(\sqrt{m/k})$-approximate in expectation, which implies the MIR mechanism for $\calA^{(x,1)}$ is $O(\sqrt{m/k})$-approximate w.p. at least $\Omega(\sqrt{k/m})$. By taking $m^5$ independent repetitions, the MIR mechanism for $\bigcup_{s \in [m^5]} \calA^{(x,s)}$ is $O(\sqrt{m/k})$-approximate w.p. at least $1-\exp(-\Omega(m^4))$.

Define $v : (2^M)^n \to \mathbb{R}_+$ by
\[
    v(S_1, \dots, S_n) \quad \coloneqq \quad \sum_{i \in N} v_i(S_i) \enspace ,
\]
and observe that $v$ is a linear function of $nm$ variables. Note that the success of the mechanism only depends on what subsets of the $nm$ variables give $v(\cdot) = \Omega(\OPT\sqrt{k/m})$. Hence, it suffices to just union bound over the number of ways for an $nm$-dimensional hyperplane (corresponding to the linear function) to separate $2^{nm}$ points (the number of subsets of $nm$ variables), which is bounded by $2^{(nm)^2} = 2^{o(m^4)}$~\cite{OrlikT13}.

Thus, we have w.p. $> 0$ that the MIR mechanism for $\calA^{(x,1)}, \dots, \calA^{(x,m^5)}$ is $\Omega(\sqrt{m/k})$-approximate for all linear valuations.
\end{proof}

\medskip
\textbf{Wrapping Up.}
Suppose we have XOS valuations $v_1, \dots, v_n$, and let $u_i$ be an additive clause of $v_i$ such that $u_i(A^*_i) = v_i(A^*_i)$. Then observe that for any allocation $A$, it holds that
\[
    \sum_{i \in N} u_i(A_i) \quad \leq \quad \sum_{i \in N} v_i(A_i) \enspace ,
\]
and further,
\[
    \sum_{i \in N} u_i(A^*_i) \quad = \quad \sum_{i \in N} v_i(A^*_i) \enspace .
\]

Therefore, the $O(\sqrt{m/k})$ approximation guaranteed by \Cref{lemma:LinearFunctionsApproximated} automatically extends to XOS valuations. Finally, observe that $\calA^{(x,1)}$ can be optimized over in $2^{O(k)}$ value queries because each bidder can only get at most $\binom{\sqrt{mk}}{k/(2\log(m/k))} = 2^{O(k)}$ possible sets, and taking $m^5\abs{\calX} = 2^{O(k)}$ repetitions does not affect the asymptotic communication. Combining this with $\calA^-$ and $\calA^+$ gives a mechanism which is $O(\sqrt{m/k})$-approximate for XOS valuations.

\begin{remark}
    It turns out that when certain properties are satisfied (and they are satisfied here), the median of subadditive functions is only a constant factor off from their expectation, which is only an $\Omega(\sqrt{k/m})$ factor from optimal. Therefore, we can make a similar argument as with linear functions that in expectation (and hence with not-too-small-probability), the mechanism is $O(\sqrt{m/k})$-approximate for subadditive valuations as well.

    The trouble with extending the result all the way to subadditive valuations actually comes from the union bound step; our reduction from XOS to linear valuations relies on being able to find, for any set $S$, a linear function $u_i$ such that $u_i \leq v_i$ but also $u_i(S) = \Theta(v_i(S))$. On the other hand, there exist subadditive functions for which no such linear functions exist, so it does not suffice to union bound over the class of linear functions.
\end{remark}


\subsection{Collection Disjointness}

As alluded to in \Cref{remark:XOSMechanismCore}, what makes the alternate XOS mechanism interesting is that it reduces to maximizing the welfare in a combinatorial auction with $n$ bidders and $m = nk/\log(n)$ items, where each bidder gets at most $k/\log(n)$ items. Doing this in $2^{O(k)}$ communication is easy (ask everyone for every set of size $k/\log(n)$), and a $2^{\Omega(k/\log(n))}$ lower bound is similarly simple (the allocation bank for $2$ bidders and $2k/\log(n)$ items already requires $2^{\Omega(k/\log(n))}$ communication to maximize over). Improving upon this upper bound directly improves the upper bound on $\MIRsubmod(m, k)$, and improving upon this lower bound is a necessary step for an improved lower bound on all MIR mechanism.

The heart of the problem can be distilled to the following question (where we have renotated $k/\log(n)$ to $k$ for simplicity):

\begin{definition}[\textsc{CollectionDisjointness}]
    Let there be items $M = [nk]$ and players $N = [n]$ who each hold a collection of sets $\calS_1, \dots, \calS_n \subseteq 2^M$, where each set $S \in \calS_i$ is of size $k$. Determine whether there exist $S_1 \in \calS_1, \dots, S_n \in \calS_n$ such that $S_1, \dots, S_n$ are disjoint. In other words, does there exist an allocation $S_1, \dots, S_n$ such that $S_i \in \calS_i$ for all $i \in N$? If so, we call the instance a YES instance, and if not, we call it a NO instance.
\end{definition}
\vsthm

The main question is whether the communication complexity of deciding \textsc{CollectionDisjointness} is $2^{\Theta(k)}$ or $2^{\Theta(k\log(n))}$, or something in between. Currently, we suspect the correct answer is $2^{\Theta(k\log(n))}$ due to the approach outlined in the following section.


\subsection{Lower Bound Approaches}

The natural first attempt would be try and replicate the lower bound technique of \Cref{section:LowerBounds}, i.e., randomly sample $2^{\Theta(k\log(n))}$ allocations and reduce from \textsc{SetDisjointness}. Unfortunately, this fails because by the time we sample even just $2^{\Theta(k)}$ allocations, the property in \Cref{claim:Structure} w.h.p. does not occur. Instead, we will use an approach inspired by the frameworks developed in~\cite{Dobzinski16b}.

\begin{definition}[Menu]
    For an instance $\calS_1, \dots, \calS_n$ of \textsc{CollectionDisjointness}, the \emph{menu} $\calM \coloneqq \calM(\calS_3, \dots, \calS_n)$ is defined
    \[
        \calM \quad \coloneqq \quad \{S \subseteq M : \abs{S} = 2k,\, \exists S_3 \in \calS_3, \dots, S_n \in \calS_n \text{ s.t. $S, S_3, \dots, S_n$ are disjoint}\} \enspace .
    \]
\end{definition}

\begin{lemma} \label{lemma:SufficientForLowerBound}
    If there exists an instance $\calS_1, \dots, \calS_n$ of \textsc{CollectionDisjointness} such that for all distinct $S, T \in \calM$, $\abs{S \cap T} < k$, then the communication complexity of \textsc{CollectionDisjointness} is $\Omega(\abs{\calM})$.
\end{lemma}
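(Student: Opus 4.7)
The plan is to reduce \textsc{SetDisjointness} on a universe of size $\abs{\calM}$ to \textsc{CollectionDisjointness} on the hypothesized instance, treating bidders $1$ and $2$ as Alice and Bob while viewing $\calS_3, \dots, \calS_n$ as public information. First I would fix any function $\alpha : \calM \to \binom{M}{k}$ with $\alpha(R) \subseteq R$ for every $R \in \calM$ (for concreteness, $\alpha(R)$ can be the $k$ lowest-indexed elements of $R$), and set $\beta(R) \coloneqq R \setminus \alpha(R)$. Given inputs $X, Y \subseteq \calM$ to the two players, define $\calS_1(X) \coloneqq \{\alpha(R) : R \in X\}$ and $\calS_2(Y) \coloneqq \{\beta(R) : R \in Y\}$, and consider the resulting instance $(\calS_1(X), \calS_2(Y), \calS_3, \ldots, \calS_n)$. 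Both players can carry out this construction locally, since $\calS_3, \ldots, \calS_n$ (and therefore $\calM$ and $\alpha$) are common knowledge.

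Next I would show that this instance is a YES instance of \textsc{CollectionDisjointness} if and only if $X \cap Y \ne \emptyset$. The easy direction is that if $R \in X \cap Y$, then $\alpha(R) \cup \beta(R) = R \in \calM$, so by definition of $\calM$ this pair extends to a disjoint allocation using some $S_3 \in \calS_3, \dots, S_n \in \calS_n$. For the converse, suppose there is a disjoint allocation of the form $\alpha(R_1), \beta(R_2), S_3, \ldots, S_n$. These $n$ sets of size $k$ must partition $M$, so the complement of $S_3 \cup \cdots \cup S_n$ is the $2k$-set $R^* \coloneqq \alpha(R_1) \cup \beta(R_2)$, which therefore lies in $\calM$. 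Since $\alpha(R_1) \subseteq R_1 \cap R^*$ and $\abs{\alpha(R_1)} = k$, we get $\abs{R_1 \cap R^*} \geq k$; by the hypothesis that distinct menu elements meet in fewer than $k$ items, this forces $R^* = R_1$. Then $\beta(R_2) \subseteq R^* \cap R_2 = R_1 \cap R_2$, which forces $R_1 = R_2$ by the same hypothesis, and hence $R_1 = R_2 \in X \cap Y$.

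The main conceptual point, and the place where one has to be careful, is that the menu-separation hypothesis enters the reverse direction twice: once to pin down $R^* = R_1$, and once to collapse $R_1 = R_2$. Without it, one could imagine a spurious allocation built from $\alpha(R_1)$ and $\beta(R_2)$ for $R_1 \ne R_2$, which would break the reduction. With the reduction in hand, I would invoke the standard $\Omega(N)$ communication lower bound for \textsc{SetDisjointness} on a universe of size $N$~\cite{Nisan02} with $N = \abs{\calM}$ to conclude that \textsc{CollectionDisjointness} requires $\Omega(\abs{\calM})$ communication.
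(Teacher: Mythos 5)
Your proposal is correct and follows essentially the same route as the paper: both reductions split each menu set $R \in \calM$ into two size-$k$ halves, hand one half to each of the first two players according to their \textsc{SetDisjointness} inputs, keep $\calS_3,\dots,\calS_n$ fixed and public, and use the hypothesis $\abs{S \cap T} < k$ for distinct $S, T \in \calM$ to rule out spurious cross-combinations before invoking the $\Omega(\abs{\calM})$ bound of~\cite{Nisan02}. Your two-step collapse ($R^* = R_1$, then $R_1 = R_2$) is just a mild repackaging of the paper's counting argument on $\abs{S^{(\ell_1)}_1 \cap S^{(\ell)}}$ and $\abs{S^{(\ell_2)}_2 \cap S^{(\ell)}}$.
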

\begin{proof}
Arbitrarily index the sets of $\calM$ by $S^{(1)}, \dots, S^{(\abs{\calM})}$, and for each $S^{(\ell)}$, let $S^{(\ell)}_1, S^{(\ell)}_2$ be an arbitrary partition of $S^{(\ell)}$ into two sets of size $k$.

Now, consider a \textsc{SetDisjointness} instance with two sets $X_1, X_2 \subseteq [\abs{\calM}]$. We then define $\calS'_i$ for $i \in \{1, 2\}$ by $\calS_i \coloneqq \{S^{(\ell)}_i : \ell \in X_i\}$. We claim that $X_1 \cap X_2 \ne \emptyset$ if and only if the \textsc{CollectionDisjointness} instance $\vec{\calS} \coloneqq (\calS'_1, \calS'_2, \calS_3, \dots, \calS_n)$ is a YES instance.

Observe that by definition of $\calM$, $\vec{\calS}$ to be a YES instance if and only if there exists $S_1 \in \calS'_1$ and $S_2 \in \calS'_2$ such that $(S_1 \cup S_2) \in \calM$. Therefore, we simply need to show that there exists such $S_1, S_2$ if and only if $X_1 \cap X_2 \ne \emptyset$.

The reverse direction is trivial; simply take $\ell \in X_1 \cap X_2$ and consider the sets $S^{(\ell)}_1, S^{(\ell)}_2$. For the forward direction, suppose for contradiction that there exists $\ell_1, \ell_2, \ell$ such that $S^{(\ell_1)}_1 \cup S^{(\ell_2)}_2 = S^{(\ell)}$, but $X_1 \cap X_2 = \emptyset$, i.e., we cannot have $\ell_1 = \ell_2$. Then WLOG, let $\ell_1 \ne \ell$. Since $\abs{S^{(\ell_1)} \cap S^{(\ell)}} < k$ by assumption, we have $S^{(\ell_1)}_1 \cap S^{(\ell)} < k$. Therefore, we must have $S^{(\ell_2)}_2 \cap S^{(\ell)} > k$, which is impossible because $\abs{S^{(\ell_2)}_2} = k$.

Thus, any \textsc{SetDisjointness} instance on $[\abs{\calM}]$ can be reduced to a \textsc{CollectionDisjointness} instance. Since the communication complexity of \textsc{SetDisjointness} is $\Omega(\abs{\calM})$, the communication complexity of \textsc{CollectionDisjointness} must be $\Omega(\abs{\calM})$ as well.
\end{proof}

The difficulty then is proving that there exists an instance with the desired property for $\abs{\calM} = 2^{\Theta(k\log(n))}$. While only relatively weak assumptions are needed (e.g., a weak bound on the correlations between sets in $\calM$), the main problem is working with a distribution on $\calM$. For example, the most straightforward attempt would be to sample each $\calS_i$ by independently including each set of size $k$ with some probability. Unfortunately, this runs into elementary problems.

\begin{lemma} \label{lemma:AlmostNothingInMenu}
    Let $n$ be sufficiently large and $k = o(n)$. Let $\calS_i$ be sampled by independently including each $S \in \calS_i$ independently w.p. $1/(n-2)^k$, and let $\calS_3, \dots, \calS_n$ be independent as well. Then w.h.p., $\abs{\calM} = 0$.
\end{lemma}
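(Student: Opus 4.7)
The plan is to bound $\E[\abs{\calM}]$ by first moment / union bound and apply Markov's inequality. Fix a candidate set $S \subseteq M$ with $\abs{S} = 2k$. Since $\abs{M \setminus S} = (n-2)k$ and the potential witnesses $S_3, \dots, S_n$ must be pairwise disjoint and each of size exactly $k$, the event $\{S \in \calM\}$ occurs if and only if some labeled partition of $M \setminus S$ into $n-2$ parts of size $k$ is ``realized,'' meaning $P_i \in \calS_i$ for every $i \in \{3,\dots,n\}$. The number of such labeled partitions is the multinomial coefficient $\binom{(n-2)k}{k,k,\dots,k} = ((n-2)k)!/(k!)^{n-2}$, and each is realized independently with probability $(n-2)^{-k(n-2)}$ by the sampling rule. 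A union bound therefore yields
\[
    \Pr[S \in \calM] \;\leq\; \frac{((n-2)k)!}{(k!)^{n-2}} \cdot (n-2)^{-k(n-2)}.
\]

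Next I would union-bound over the $\binom{nk}{2k}$ candidate sets $S$ and collapse the two binomial factors into one multinomial coefficient:
\[
    \E[\abs{\calM}] \;\leq\; \binom{nk}{2k} \cdot \frac{((n-2)k)!}{(k!)^{n-2}} \cdot (n-2)^{-k(n-2)} \;=\; \frac{(nk)!}{(2k)!\,(k!)^{n-2}} \cdot (n-2)^{-k(n-2)}.
\]
Now I would apply Stirling's approximation. The $e^{-nk}$ in the numerator cancels exactly against the $e^{-2k}e^{-(n-2)k}$ in the denominator, and the $k^{nk}$ terms cancel similarly. After further observing that $(n/(n-2))^{nk} \leq e^{4k}$ for $n \geq 4$, what remains simplifies (up to absolute constants) to
\[
    \E[\abs{\calM}] \;\leq\; \sqrt{n}\cdot(2\pi k)^{-(n-2)/2}\cdot\bigl(e^{2}(n-2)/2\bigr)^{2k}.
\]

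Taking logarithms, this upper bound becomes $O(\log n) + 2k\log n + O(k) - \tfrac{n-2}{2}\log(2\pi k)$. For $k=o(n)$ and $n$ large, the negative term $-\tfrac{n}{2}\log(2\pi k)$ dominates: when $k$ is bounded, the negative term is $\Theta(n)$ while the positive part is $O(\log n)$; when $k \to \infty$ with $k=o(n)$, writing $k=n/f(n)$ with $f(n)\to\infty$ gives $\log k \sim \log n$, so $n\log k \sim n\log n$ while $k\log n = (n/f(n))\log n = o(n\log n)$. In either regime $\E[\abs{\calM}] \to 0$, so by Markov's inequality $\Pr[\abs{\calM} = 0] \to 1$, i.e.\ w.h.p.~$\abs{\calM} = 0$.

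The main obstacle is just bookkeeping of the Stirling estimate; the bound has so much slack because $1/(n-2)^k$ is vastly too small a sampling probability for $\calM$ to be nonempty. Informally, for $\calM$ to have a chance of being large, one would need each $\calS_i$ to contain roughly a constant fraction of all $k$-subsets of $M$, i.e.\ inclusion probability $\Theta(1)$ rather than $(n-2)^{-k}$; the current setting is many orders of magnitude too sparse, which is why the first-moment bound already suffices and no second-moment or correlation analysis is required.
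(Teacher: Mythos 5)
Your proposal is correct and follows essentially the same route as the paper: a first-moment bound on the number of realized witness partitions of $M \setminus S$ (the paper's $\E[Z_S] = \frac{((n-2)k)!}{(k!)^{n-2}}(n-2)^{-(n-2)k}$), a union bound over the $\binom{nk}{2k}$ candidate sets, Stirling, and Markov. Folding the union bound into the single multinomial coefficient $\frac{(nk)!}{(2k)!\,(k!)^{n-2}}$ before applying Stirling is only a cosmetic repackaging of the paper's computation, and your bookkeeping (the exact cancellation of the $e^{-nk}$ and $k^{nk}$ factors, the $(n/(n-2))^{nk}\le e^{4k}$ estimate, and the resulting bound $\sqrt{n}\,(2\pi k)^{-(n-2)/2}\bigl(e^2(n-2)/2\bigr)^{2k}$) checks out; keeping the $(2\pi k)^{-(n-2)/2}$ factor intact even handles bounded $k$ cleanly.

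One slip in the last step: the assertion that $k\to\infty$ with $k=o(n)$ forces $\log k \sim \log n$ is false — take $k=\Theta(\log n)$ (allowed here, since the lemma only assumes $k=o(n)$), where $\log k = \Theta(\log\log n)$. The conclusion you need, namely $2k\log n + O(k) = o\bigl(n\log(2\pi k)\bigr)$, is nevertheless true for every $k=o(n)$: for bounded $k$ the left side is $O(\log n)$ against $\Omega(n)$, and for $k\to\infty$ monotonicity of $x\mapsto x/\log x$ gives, for any $\varepsilon>0$ and large $n$, $k/\log k \le \varepsilon n/\log(\varepsilon n) = O(\varepsilon\, n/\log n)$, i.e.\ $k\log n = O(\varepsilon\, n\log k)$. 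Replacing the ``$\log k\sim\log n$'' sentence with this comparison closes the gap; no new idea is needed.
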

\begin{proof}
Fix any set $S \subseteq M$ of size $2k$, and observe that there are $(nk)!/(k!^{n-2}(2k)!)$ tuples of sets $(S_3, \dots, S_n)$ that would cause $S \in \calM$. Let $Z_S$ be the number of these tuples that exist. Then
\begin{align*}
    \E[Z_S] \quad &= \quad \frac{((n-2)k)!}{k!^{n-2}} \frac{1}{(n-2)^{(n-2)k}} \\
    &\leq \quad \frac{((n-2)k)^{(n-2)k}}{k^{(n-2)k}\sqrt{2\pi k}^{n-2}} \frac{1}{(n-2)^{(n-2)k}} \\
    &\leq \quad \frac{1}{k^{n/2-1}} \\
    &\ll \quad \binom{nk}{2k}^{-1} \enspace .
\end{align*}
Markov's Inequality and a union bound completes the proof.
\end{proof}

\begin{lemma} \label{lemma:AlmostEverythingInMenu}
    Let $n$ be sufficiently large and $k = \Omega(\log(n))$. Let $\calS_i$ be sampled by independently including each $S \in \calS_i$ independently w.p. $e^k/(n-2)^k$, and let $\calS_3, \dots, \calS_n$ be independent as well. Then w.h.p., $\abs{\calM} = (1-o(1))\binom{nk}{2k}$.
\end{lemma}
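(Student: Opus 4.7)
The plan is to apply the second moment method to the random variable $Z_S$ which, for a fixed $S \subseteq M$ of size $2k$, counts the number of ordered partitions $T = (T_3, \dots, T_n)$ of $M \setminus S$ into disjoint $k$-subsets with $T_i \in \calS_i$ for every $i$. Observe that $S \in \calM$ iff $Z_S > 0$ and, by symmetry, $p_0 := \Pr[Z_S = 0]$ does not depend on $S$. Hence $\E[|\calM^c|] = \binom{nk}{2k} p_0$, and if I can show $p_0 = o(1)$, then Markov's inequality gives $|\calM^c| \le \sqrt{p_0} \binom{nk}{2k}$ with probability $1 - \sqrt{p_0} = 1 - o(1)$, which is the desired conclusion.

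For the first moment, write $p := e^k/(n-2)^k$; by linearity and independence of the $\calS_i$, $\mu := \E[Z_S] = (((n-2)k)!/(k!)^{n-2}) \, p^{n-2}$. Stirling's approximation yields $\mu = (1+o(1)) \, e^{k(n-2)} \sqrt{2\pi(n-2)k}/(2\pi k)^{(n-2)/2}$, and under $k = \Omega(\log n)$ with a sufficiently large implicit constant, $\log \mu = \Omega(nk)$, which is in particular much larger than $\log \binom{nk}{2k}$.

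For the second moment, group ordered pairs $(T,T')$ by the number $r$ of coordinates on which they agree. For such a pair, $\Pr[A_T \cap A_{T'}] = p^{2(n-2)-r}$, and the number of pairs agreeing on at least $r$ specified coordinates (summed over all $\binom{n-2}{r}$ choices of which coordinates agree) is exactly $\binom{n-2}{r} N \cdot N_{n-2-r}$, where $N_s := (sk)!/(k!)^s$. This is a valid upper bound on the count of pairs agreeing on exactly $r$ coordinates, yielding
\[
    \E[Z_S^2] \;\le\; \mu \sum_{r=0}^{n-2} \binom{n-2}{r} \mu_{n-2-r}, \qquad \text{where } \mu_s := N_s\, p^s .
\]
The $r=0$ term equals $\mu^2$. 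For $r \ge 1$, Stirling together with the estimate $(1 - r/(n-2))^{k(n-2-r)} \le e^{-kr + o(1)}$ (in the relevant range of $r$) gives $\mu_{n-2-r}/\mu \le (1+o(1)) \, e^{-2kr}(2\pi k)^{r/2}$, and hence $\binom{n-2}{r} \mu_{n-2-r}/\mu \le (C n \sqrt{k}/e^{2k})^r$ for some constant $C$. Under $k = \Omega(\log n)$ with a sufficiently large constant, $n \sqrt{k}/e^{2k} = o(1)$, so the sum over $r \ge 1$ is $o(1)$. Therefore $\E[Z_S^2] \le (1+o(1))\mu^2$, and the Paley--Zygmund inequality gives $\Pr[Z_S > 0] \ge \mu^2/\E[Z_S^2] \ge 1 - o(1)$, as needed.

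The main obstacle is the Stirling bookkeeping that shows $\mu_{n-2-r}/\mu$ decays like $(\sqrt{k}/e^{2k})^r$. The critical factor of $e^{-kr}$ coming from $((n-2-r)/(n-2))^{k(n-2-r)}$, on top of the naive $e^{-kr}$ factor already present in $\mu_{n-2-r}$, is what allows $e^{-2kr}$ to beat the $\binom{n-2}{r} \le n^r$ blow-up once $k$ exceeds a large-enough constant multiple of $\log n$; without this extra factor the series would diverge and the second-moment method would fail. Everything else---the symmetry reduction to $p_0$, Paley--Zygmund, and the final Markov step---is routine.
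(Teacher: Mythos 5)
Your proof is the same second-moment argument as the paper's: the same random variable $Z_S$, the same decomposition of $\E[Z_S^2]$ by the overlap pattern of pairs of tuples (your index $r$ of agreeing coordinates is the paper's $i = n-2-r$ of differing coordinates), the same identification of $r=1$ (paper's $i=n-3$) as the dominant off-diagonal contribution, and the same Markov/reverse-Markov finish. The only structural difference is how the off-diagonal sum is controlled: the paper proves (Claim~\ref{claim:IrrelevantCombinatorialLemma}) that the summand is maximized at $i=n-3$ and multiplies by the number of terms, whereas you sum a geometric series.

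One intermediate inequality is overstated. The estimate $(1-r/(n-2))^{k(n-2-r)} \le e^{-kr+o(1)}$ holds only when $kr^2/(n-2) = o(1)$, i.e.\ $r = o(\sqrt{n/k})$; for larger $r$ you only get $e^{-kr(1-r/(n-2))}$, and consequently the claimed bound $\binom{n-2}{r}\mu_{n-2-r}/\mu \le (Cn\sqrt{k}/e^{2k})^r$ is false for $r$ comparable to $n-2$ when $k \gg \log n$ (e.g.\ at $r=n-2$ the true value is $1/\mu \approx (\sqrt{2\pi k}\,e^{-k})^{n-2}$, which exceeds $(Cn\sqrt{k}e^{-2k})^{n-2}$ once $e^k \gg n$). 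Your parenthetical ``in the relevant range of $r$'' flags this but does not dispose of the complementary range. The fix is routine: for all $r$ one still has $(1-r/(n-2))^{k(n-2-r)} \le 1$, giving the weaker ratio $(Cn\sqrt{k}\,e^{-k})^r$, which is still a geometric series with base $o(1)$ under $k \ge C'\log n$ for $C'$ large enough; the sum over $r\ge 1$ remains dominated by the $r=1$ term of order $n\sqrt{k}\,e^{-2k} = o(1)$. With that patch the argument is complete. (For what it is worth, the paper's own final numerical bound $(en)^{-k}$ appears to drop a factor of $(n-2)^k$ in its penultimate display; the honest bound there is likewise $O(\poly(n,k)\,e^{-2k})$, which matches your $r=1$ term and still suffices.)
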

\begin{proof}
Using the same notation as in \Cref{lemma:AlmostNothingInMenu}, we have
\[
    \E[Z_S] \quad = \quad \frac{((n-2)k)!}{k!^{n-2}} \frac{e^{(n-2)k}}{(n-2)^{(n-2)k}} \enspace ,
\]
and
\[
    \Var[Z_S] \quad \leq \quad \frac{((n-2)k)!}{k!^{n-2}}\frac{e^{(n-2)k}}{(n-2)^{(n-2)k}} \sum_{i=0}^{n-3} \binom{n}{i} \frac{(ik)!}{k!^i}\frac{e^{ik}}{(n-2)^{ik}} \enspace .
\]

\begin{claim} \label{claim:IrrelevantCombinatorialLemma}
    $f_{n,k} : \{0, \dots, n-3\} \to \mathbb{R}$ is maximized at $n-3$, where we define
    \[
        f_{n,k}(i) \quad \coloneqq \quad \binom{n}{i} \frac{(ik)!}{k!^i} \frac{e^{ik}}{(n-2)^{ik}} \enspace .
    \]
\end{claim}
\begin{proof}
For $i \in [n-3]$,
\[
    \frac{f_{n,k}(i)}{f_{n,k}(i-1)} \quad = \quad \frac{e^k}{(n-1)^k}\frac{(ik)!}{k!((i-1)k)!}\binom{n}{i}\bigg/\binom{n}{i-1} \quad = \quad \frac{e^k}{(n-1)^k} \binom{ik}{k} \frac{n-i+1}{i} \enspace .
\]
We see that $f_{n,k}$ is increasing when $i \geq n/2$ because
\[
    \frac{e^k}{(n-1)^k} \binom{ik}{k} \frac{n-i+1}{i} \quad \geq \quad \frac{e^k}{(n-1)^k} \bigg(\frac{n}{2}\bigg)^k \frac{1}{n} \quad \geq \quad 1
\]
For sufficiently large $n$ and $k = \Omega(\log(n))$. Further, for $i \leq n/2$ and sufficiently large $n$ and $k$,
\[
    f_{n,k}(i) \quad \leq \quad 2^n \frac{(ik)^{ik}}{k^{ik}} \frac{e^{ik}}{(n-1)^{ik}} \quad \leq 2^n e^{ik} \quad \leq \quad e^{3nk/4} \enspace ,
\]
while
\[
    f_{n,k}(n-3) \quad \geq \quad \frac{((n-3)k)^{(n-3)k}}{k^{(n-3)k} k^{n-3}} \frac{e^{(n-3)k}}{(n-2)^{(n-3)k}} \quad \geq \quad \bigg(\frac{n-3}{n-2}\bigg)^{(n-3)k} \frac{e^{(n-3)k}}{k^{n-3}} \quad \geq \quad e^{3nk/4} \enspace . \qedhere
\]
\end{proof}

By \Cref{claim:IrrelevantCombinatorialLemma}, we have
\begin{align*}
    \frac{\Var[Z_S]}{\E[Z_S]^2} \quad &\leq \quad \frac{k!^{n-2}}{((n-2)k)!}\frac{(n-2)^{(n-2)k}}{e^{(n-2)k}} \sum_{i=0}^{n-3} \binom{n}{i} \frac{(ik)!}{k!^i}\frac{e^{ik}}{(n-2)^{ik}} \\
    &\leq \quad n^4\frac{k!^{n-2}}{((n-2)k)!}\frac{(n-2)^{(n-2)k}}{e^{(n-2)k}} \frac{((n-3)k)!}{k!^{n-3}}\frac{e^{(n-3)k}}{(n-2)^{(n-3)k}} \\
    &\leq \quad n^5\frac{k!}{e^k} \frac{((n-3)k)!}{((n-2)k)!} \\
    &\leq \quad n^5 k \frac{k^k}{e^{2k}} \frac{1}{(nk)^k} \bigg(\frac{n}{n-3}\bigg)^k \\
    &\leq \quad (en)^{-k} \enspace .
\end{align*}
Chebyshev's Inequality yields $\E[\abs{\calM}] \geq (1-o(1))\binom{nk}{2k}$, and $\abs{\calM} \leq \binom{nk}{2k}$ yields the result w.h.p.
\end{proof}

In \Cref{lemma:AlmostNothingInMenu}, $\E[\abs{\calS_i}] \approx e^k$, whereas in \Cref{lemma:AlmostEverythingInMenu}, $\E[\abs{\calS_i}] \approx e^{2k}$. This demonstrates that there is a sharp transition between when nothing is in the menu, to when almost everything is in the menu, so even finding $\calM$ such that $\abs{\calM} = 2^{\Theta(k\log(n))}$ for sufficiently small constant is difficult, let alone analyzing any properties of $\calM$. Nevertheless, \Cref{lemma:SufficientForLowerBound} gives a concrete direction for an improved lower bound.

\end{document}